\definecolor{bordeau}{rgb}{0.3515625,0,0.234375}
\definecolor{darkspringgreen}{rgb}{0.09, 0.45, 0.27}
\definecolor{color_bordeau}{rgb}{0.3515625,0,0.234375}
\definecolor{color_old_paper}{rgb}{.95, .95, .85}
\definecolor{color_slategray}{RGB}{112,128,144}
\definecolor{color_darkslateblue}{RGB}{72,61,139}
\definecolor{CS_Red}{RGB}{150,2,60}
\definecolor{CS_LightRed}{RGB}{207,161,162}
\definecolor{CS_Grey}{RGB}{133,120,148}
\definecolor{color_old_paper}{rgb}{.95, .95, .85}
\definecolor{chromeyellow}{rgb}{1.0, 0.65, 0.0}
\definecolor{amber}{rgb}{1.0, 0.75, 0.0}
\definecolor{hibou_verdictColor_cov}{RGB}{0,0,205}
\definecolor{hibou_verdictColor_short}{RGB}{0,205,205}
\definecolor{hibou_verdictColor_multipref}{RGB}{105,89,205}
\definecolor{hibou_verdictColor_slice}{RGB}{154,50,205}
\definecolor{hibou_verdictColor_inconc}{RGB}{205,16,118}
\definecolor{hibou_verdictColor_lackobs}{RGB}{205,55,0}
\definecolor{hibou_verdictColor_dead}{RGB}{178,34,34}
\definecolor{hibou_verdictColor_out}{RGB}{205,0,0}
\definecolor{col_global_accept}{rgb}{0.1, 0.4, 0.2}
\definecolor{col_global_sem}{rgb}{0.5, 0.8, 0.1}
\definecolor{col_accept}{rgb}{0.0, 0.0, 1}
\definecolor{col_sem}{rgb}{0.0, 0.5, 0.75}
\definecolor{col_multipref}{rgb}{0.5, 0.0, 0.75}
\definecolor{darkspringgreen}{rgb}{0.09, 0.45, 0.27}
\definecolor{hibou_col_newfresh}{RGB}{94, 22, 130}
\definecolor{hibou_col_lf}{RGB}{22, 22, 130}
\newcommand{\hlf}[1]{\textcolor{hibou_col_lf}{#1}}
\definecolor{hibou_col_pr}{RGB}{22, 130, 22}
\definecolor{hibou_col_ms}{RGB}{15, 86, 15}
\newcommand{\hms}[1]{\textcolor{hibou_col_ms}{#1}}
\newcommand{\shortColRed}[1]{\textcolor{red}{#1}}
\newcommand{\shortColBlue}[1]{\textcolor{blue}{#1}}
\newcommand{\shortColOrange}[1]{\textcolor{orange}{#1}}
\definecolor{my_grey}{RGB}{191, 191, 191}
\newcommand{\shortColGrey}[1]{\textcolor{my_grey}{#1}}
\newtheorem{definition}{Definition}
\newtheorem{theorem}{Theorem}
\newtheorem{property}{Property}
\newtheorem{example}{Example}
\newtheorem*{example*}{Example}
\DeclareRobustCommand\doubleVerticalTimesDefault{%
  \leavevmode
  {\sbox0{\ddag}%
   \ooalign{\raisebox{\ht0-\height}{$\times$}\cr
            \raisebox{\depth-\dp0}{\scalebox{1}[-1]{$\times$}}\cr}%
  }%
}
\newcommand{\doubleVerticalTimes}{\scalerel*{\doubleVerticalTimesDefault}{b}}
\newcommand{\globalInterleaving}{||}
\newcommand{\globalStrictSeq}{;}
\newcommand{\globalWeakSeq}{\doubleVerticalTimes}
\newcommand{\globalConditionalSequencing}[1]{\doubleVerticalTimes|_{#1}}
\newcommand{\isPruneBase}{\mathrlap{\raisebox{-.125\height}{\doubleVerticalTimes}}\longrightarrow}
\newcommand{\isNotPruneBase}{\centernot{\isPruneBase}}
\newcommand{\isPruneOf}[1]{\mathrlap{\raisebox{-.125\height}{\doubleVerticalTimes}}\xrightarrow{#1}}
\newcommand{\isNotPruneOf}[1]{\centernot{\isPruneOf{#1}}}
\newcommand{\sliceOf}[1]{\widetilde{#1}}
\newcommand{\partitionsOf}[1]{\text{Part}(#1)}
\newcommand{\positionsOf}[1]{\text{pos}(#1)}
\newcommand{\measureDecrements}[1]{\overset{#1}{\rightharpoondown}}
\newcommand{\measureNotDecrements}[1]{\centernot{\measureDecrements{#1}}}
\newcommand{\verdictOk}{\textcolor{hibou_verdictColor_cov}{\text{Ok}}}
\newcommand{\verdictKo}{\textcolor{hibou_verdictColor_out}{\text{Ko}}}
\newcommand{\rulePass}{\shortColBlue{R_p}}
\newcommand{\ruleFail}{\shortColRed{R_f}}
\newcommand{\ruleExec}{\shortColOrange{R_e}}
\newcommand{\ruleSimu}{\shortColGrey{R_s}}
\newcommand{\loopDepthAtPosBase}{\beta}
\newcommand{\loopDepthAtPos}[2]{\loopDepthAtPosBase(#1,#2)}
\newcommand{\numActOutsideBase}{\eta}
\newcommand{\numActOutside}[1]{\numActOutsideBase(#1)}
\newcommand{\multiAppendLeft}{\resizebox{!}{9pt}{$\overset{\rightarrow}{\odot}$}}
\lstdefinelanguage{hibou_hsf}%
{
    escapeinside={<@}{@>},
    morekeywords = [1]{strict,seq,alt,par,coreg,loopS,loopH,loopW,loopP,synch},
    morekeywords = [2]{@lifeline,@message,@analyze_option,@explore_option}
}
\lstdefinestyle{coloured_hibou_hsf} {
    showstringspaces = false,
    basicstyle = {\ttfamily\footnotesize \color[rgb]{0, 0, 0}},
    columns=fixed,keepspaces=true,tabsize=2,
    backgroundcolor = {\color[rgb]{1, 1, 1}},
    keywordstyle = [1]{\bfseries},
    keywordstyle = [2]{\bfseries\color[rgb]{.5,0,.5}},frame=single
}
\lstdefinestyle{coloured_hibou_hsf_inline} {
    showstringspaces = false,
    basicstyle = {\scriptsize\ttfamily \color[rgb]{0, 0, 0}},
    columns=fixed,keepspaces=true,
    keywordstyle = [1]{\bfseries},
    keywordstyle = [2]{\bfseries\color[rgb]{.5,0,.5}},frame=single
}
\lstdefinelanguage{hibou_htf}%
{
    escapeinside={<@}{@>},
    morekeywords = [1]{[,]}
}
\lstdefinestyle{coloured_hibou_htf} {
    showstringspaces = false,
    basicstyle = {\ttfamily\footnotesize \color[rgb]{0, 0, 0}},
    columns=fixed,keepspaces=true,
    backgroundcolor = {\color[rgb]{1, 1, 1}},
    keywordstyle = [1]{\color[rgb]{.5,0,.5}\bfseries},frame=single
}
\lstdefinestyle{coloured_hibou_htf_inline} {
    showstringspaces = false,
    basicstyle = {\scriptsize\ttfamily \color[rgb]{0, 0, 0}},
    columns=fixed,keepspaces=true,
    keywordstyle = [1]{\color[rgb]{.5,0,.5}\bfseries},frame=single
}
\newenvironment{scprooftree}[1]%
  {\gdef\scalefactor{#1}\begin{center}\proofSkipAmount \leavevmode}%
  {\scalebox{\scalefactor}{\DisplayProof}\proofSkipAmount \end{center} }
\title{Tooling Offline Runtime Verification against Interaction Models : recognizing sliced behaviors using parameterized simulation}
\author{

\IEEEauthorblockN{
Erwan Mahe \orcidlink{0000-0002-5322-4337},\\
Boutheina Bannour \orcidlink{0000-0002-4943-7807},
Christophe Gaston \orcidlink{0000-0001-6865-5108},
Arnault Lapitre \orcidlink{0000-0002-2185-4051}
}
\IEEEauthorblockA{\textit{Université Paris Saclay,}
\textit{CEA, LIST}\\
F-91120, Palaiseau, France}

\and

\IEEEauthorblockN{Pascale Le Gall \orcidlink{0000-0002-8955-6835}}
\IEEEauthorblockA{\textit{Université Paris Saclay,} \\
\textit{CentraleSupélec}\\
F-91192, Gif-sur-Yvette, France}

}
\begin{document}

\maketitle

\begin{abstract}
Offline runtime verification involves the static analysis of executions of a system against a specification. For distributed systems, it is generally not possible to characterize executions in the form of global traces, given the absence of a global clock.
To account for this, we model executions as collections of local traces called {\em multi-traces}, with one local trace per group of co-localized actors that share a common clock.
Due to the difficulty of synchronizing the start and end of the recordings of local traces, events may be missing at their beginning or end.
Considering such {\em partially observed} multi-traces is challenging for runtime verification.
To that end, we propose an algorithm that verifies the conformity of such traces against formal specifications called  {\em Interactions} (akin to Message Sequence Charts). 
It relies on parameterized simulation to reconstitute unobserved behaviors.
\end{abstract}

\begin{IEEEkeywords}
interaction, simulation, co-localization, shared clock, partial observation, multi-trace slice
\end{IEEEkeywords}

\section{Introduction}

Runtime Verification (RV)~\cite{introduction_to_runtime_verification_BartocciFFR18,surveyRV_SanchezSABBCFFK19} refers to a group of techniques aiming at confronting observed system executions to formal references, specifying legal system executions, in order to identify non-conformance. Executions are observed via instrumentation and collected in {\em traces} consisting of sequences of atomic events. Such events often correspond to communication actions, consisting of emissions or receptions of messages observed at the system's interfaces under observation.
Most approaches for RV can be described as either {\em offline} or {\em online}. In online approaches, events are processed on the fly whenever they are observed, while in offline approaches (abbrv. ORV) - which are the focus of this paper -, traces are logged a priori to their analysis.
Capturing executions of a Distributed System (DS) as traces is possible if it is observed via a unique interface deployed on a single machine. 
In practice, however, as a DS may be distributed across distinct machines, so is the instrumentation that observes its execution.
Also, events observed on different and geographically distant computers cannot be easily temporally ordered as there is no common clock to label them with comparable dates.
For these reasons, instead of a single global trace, one rather observes a set of local traces occurring on specific sub-systems or groups of {\em co-localized} (i.e. sharing a common clock) sub-systems.
In such a situation, an execution is characterized as a structured collection of local traces, which we call a {\em multi-trace}.

While most RV techniques are based on formal references given in the form of automata \cite{constraint_based_oracles_for_timed_distributed_systems} or temporal logic formulas \cite{Falcone16}, we use an {\em interaction language}. 
Interactions are models whose most well-known instances are UML Sequence Diagrams (UML-SD) \cite{UML} or Message Sequence Charts (MSC) \cite{MSC}. Interactions specify the communication flow between entities constituting a system. They are particularly adapted to specify DS behaviors, as DS are, by nature, composed of sub-systems interacting via message passing.  

The graphical representation of interactions provides an intuitive vision of a DS's expected behaviors. Each sub-system is represented by a vertical line called a {\em lifeline} while message passing between sub-systems is represented by horizontal arrows drawn between the corresponding lifelines. As time flows from top to bottom, behaviors expected to occur on a given lifeline are sequences of emissions and receptions of messages that match the horizontal arrows entering or exiting the lifeline from top to bottom. More complex behaviors can be specified via operators drawn as annotated boxes.

In previous works~\cite{equivalence_of_denotational_and_operational_semantics_for_interaction_languages}, we defined the semantics of interactions without the need for translations to other formalisms. In particular, an operational semantics in the style introduced by Plotkin for process algebras \cite{a_structural_approach_to_operational_semantics} can be used to animate interaction models, explore their semantics, and define ORV algorithms \cite{revisiting_semantics_of_interactions_for_trace_validity_analysis,a_small_step_approach_to_multi_trace_checking_against_interactions}.
This semantics is based on the key notion of {\em follow-up interaction}. 
Given an initial interaction $i$ which specifies a set of expected behaviors, if a certain communication action $a$ (either an emission or a reception of a message) inside $i$ can immediately occur, then there exists a follow-up $i'$, which we denote by $i \xrightarrow{a} i'$, such that $i'$ specifies continuations of behaviors of $i$ that start with the occurrence of $a$.
Such atomic execution steps can be used to display graphically the semantics of an interaction in a tree-like structure.
Being grounded by this small-step semantics, our approach is, to the best of our knowledge, the first tooled approach to offer interaction animation without going through translation mechanisms to intermediate formalisms like automata~\cite{sd_to_tiosts_BannourGS11}, Petri-nets~\cite{sd_to_petrinet_FariaP16} or others as overviewed in~\cite{othertool_interaction_MouakherDA22}.

The main contributions of this paper are extensions of the work in \cite{a_small_step_approach_to_multi_trace_checking_against_interactions,equivalence_of_denotational_and_operational_semantics_for_interaction_languages} under two aspects:\\

\noindent\textbf{(1)} The observability constraints imposed by monitoring:
\begin{itemize}
    \item We define a finer notion of multi-trace, whose component local traces are defined on groups of co-localized lifelines (e.g. that share a common clock) rather than on single lifelines.
    Analyzing those richer multi-traces allows taking advantage of the additional information provided by the existence of common clocks;
    \item We define a new ORV algorithm tolerant to the absence of synchronization related to both the beginning and the end of observation across distant monitors.
    To do so we rely on parameterized simulation for guessing missing/unobserved behaviors.
    With this new algorithm, logging on each co-localization can start and end independently and at any time when the system is in operation; 
\end{itemize}

\noindent\textbf{(2)} Our specification formalism and tool implementation:
\begin{itemize}
    \item We introduce a $coreg_r$ operator for specifying behaviors that ought to be concurrent on specific sub-systems (the concurrent region $r$) while being weakly sequential on others. This new operator allows defining more expressive (w.r.t. the language from \cite{equivalence_of_denotational_and_operational_semantics_for_interaction_languages}) specifications. It also simplifies the definition of the language, the weakly sequential $seq$ and interleaving $par$ operators being covered by its definition (being resp. equivalent to $coreg_\emptyset$ and $coreg_L$ where $L$ is the set of all lifelines).
    \item We also propose a reformulation of the operational semantics from \cite{equivalence_of_denotational_and_operational_semantics_for_interaction_languages} which has the advantage of using fewer inductive predicates;
    \item We extend the tool implementation mentioned in \cite{a_small_step_approach_to_multi_trace_checking_against_interactions} to include our new contributions and present its interface in more details. This tool, called HIBOU, allows designing interactions, exploring their semantics, generating multi-traces and performing RV. We also propose an experimental evaluation of our simulation-based ORV approach using this tool.
\end{itemize}

The paper is organized as follows.
In Sec.\ref{sec:multitraces} we introduce the notions of {\em multi-trace} for characterizing behaviors of DS and of {\em multi-trace slice} for characterizing partial observations of such behaviors.
Then, in Sec.\ref{sec:interactions} we define our language of {\em interactions} and its semantics in terms of multi-traces.
Following those definitions, we introduce in Sec.\ref{sec:algo} a generic algorithm for verifying partially observed distributed behaviors (i.e. multi-trace slices) against formal specifications given in the form of interactions. 
This algorithm uses simulation steps in order to complete optimistically behaviors that might be missing from the slice (because it is not observed). It is generic in so far as the manner with which simulation is performed is parameterized by a certain criterion.
After that, we propose one such criterion in Sec.\ref{sec:criterion}, apply the resulting algorithm on an example, discuss its advantages and limitations and present results from experiments.
Related works and the position of our contribution are then discussed in Sec.\ref{sec:related}.
Finally, in Sec.\ref{sec:tool}, we present our tool implementation HIBOU.

\section{Characterizing observed DS executions\label{sec:multitraces}}

\subsection{Multi-traces to model executions}

\begin{figure*}[ht]
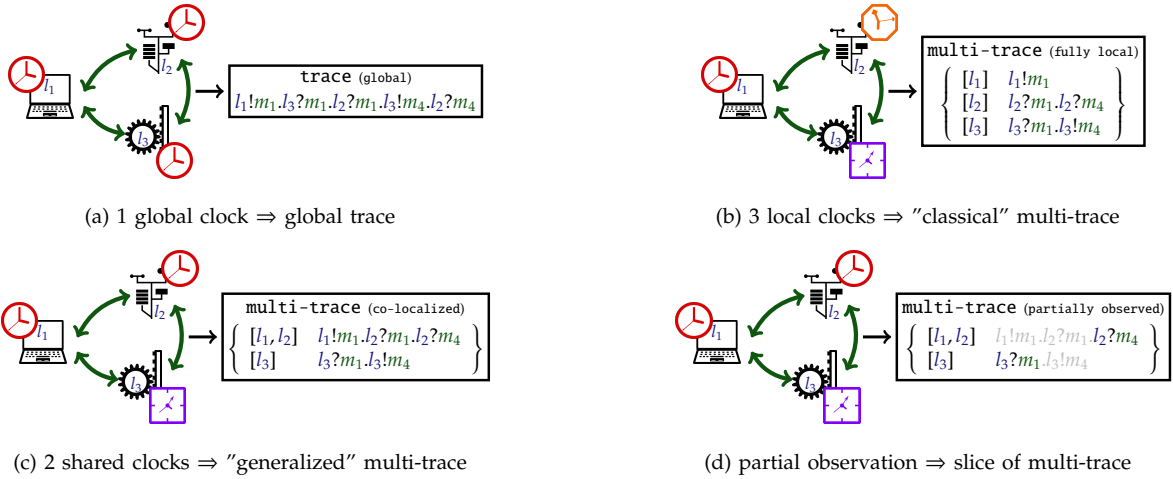

    \centering
    \begin{minipage}{.49\linewidth}
        \centering
        \begin{subfigure}[t]{\linewidth}
            \centering
            \scalebox{.75}{\begin{tikzpicture}
\tikzstyle{trace}=[draw,rectangle,line width=1.25pt,inner sep=.1cm,outer sep=.1cm]
\node (ds) at (0,0)
{
    \resizebox{4cm}{!}{
    \input{figures/2_multitrace/trio_1clock}
    }
};
\node[trace,right=-.1cm of ds] (tr) {
        \begin{tikzpicture}[every node/.style={minimum height=0pt,minimum width=0pt,inner sep=0,outer sep=0}]
        \node (lab) at (0,0) {\texttt{trace} \scriptsize(\texttt{global})};
        \node[below=.15cm of lab] (gltr) {$\hlf{l_1}!\hms{m_1}.\hlf{l_3}?\hms{m_1}.\hlf{l_2}?\hms{m_1}.\hlf{l_3}!\hms{m_4}.\hlf{l_2}?\hms{m_4}$};
        \end{tikzpicture}
};
\draw[->,line width=1.5pt] ($(ds.east) + (-.6,0)$) -- (tr.west); 
\end{tikzpicture}}
            \caption{$1$ global clock $\Rightarrow$ global trace} 
            \label{fig:mu_global1clock}
        \end{subfigure}
    \end{minipage}
    \begin{minipage}{.49\linewidth}
        \centering
        \begin{subfigure}[t]{\linewidth}
            \centering
            \scalebox{.75}{\begin{tikzpicture}
\tikzstyle{trace}=[draw,rectangle,line width=1.25pt,inner sep=.1cm,outer sep=.1cm]
\node (ds) at (0,0)
{
    \resizebox{4cm}{!}{
    \input{figures/2_multitrace/trio_3clock}
    }
};
\node[trace,right=-.1cm of ds] (tr) {
        \begin{tikzpicture}[every node/.style={minimum height=0pt,minimum width=0pt,inner sep=0,outer sep=0}]
        \node (lab) at (0,0) {\texttt{multi-trace} \scriptsize(\texttt{fully local})};
        \node[below=.15cm of lab] (gltr) {
$
\left\{
\begin{array}{ll}
\lbrack\hlf{l_1}\rbrack & \hlf{l_1}!\hms{m_1} \\
\lbrack\hlf{l_2}\rbrack & \hlf{l_2}?\hms{m_1}.\hlf{l_2}?\hms{m_4} \\
\lbrack\hlf{l_3}\rbrack & \hlf{l_3}?\hms{m_1}.\hlf{l_3}!\hms{m_4}
\end{array}
\right\}
$
        };
        \end{tikzpicture}
};
\draw[->,line width=1.5pt] ($(ds.east) + (-.6,0)$) -- (tr.west); 
\end{tikzpicture}}
            \caption{$3$ local clocks $\Rightarrow$ "classical" multi-trace} 
            \label{fig:mu_local3clocks}
        \end{subfigure}
    \end{minipage}
    \\[.1cm]
    \begin{minipage}{.49\linewidth}
        \centering
        \begin{subfigure}[t]{\linewidth}
            \centering
            \scalebox{.75}{\begin{tikzpicture}
\tikzstyle{trace}=[draw,rectangle,line width=1.25pt,inner sep=.1cm,outer sep=.1cm]
\node (ds) at (0,0)
{
    \resizebox{4cm}{!}{
    \input{figures/2_multitrace/trio_2clock}
    }
};
\node[trace,right=-.1cm of ds] (tr) {
        \begin{tikzpicture}[every node/.style={minimum height=0pt,minimum width=0pt,inner sep=0,outer sep=0}]
        \node (lab) at (0,0) {\texttt{multi-trace} \scriptsize(\texttt{co-localized})};
        \node[below=.15cm of lab] (gltr) {
$
\left\{
\begin{array}{ll}
\lbrack\hlf{l_1},\hlf{l_2}\rbrack & \hlf{l_1}!\hms{m_1}.\hlf{l_2}?\hms{m_1}.\hlf{l_2}?\hms{m_4}\\
\lbrack\hlf{l_3}\rbrack & \hlf{l_3}?\hms{m_1}.\hlf{l_3}!\hms{m_4}
\end{array}
\right\}
$
        };
        \end{tikzpicture}
};
\draw[->,line width=1.5pt] ($(ds.east) + (-.6,0)$) -- (tr.west); 
\end{tikzpicture}}
            \caption{$2$ shared clocks $\Rightarrow$ "generalized" multi-trace} 
            \label{fig:mu_coloc2clocks}
        \end{subfigure}
    \end{minipage}
    \begin{minipage}{.49\linewidth}
        \centering
        \begin{subfigure}[t]{\linewidth}
            \centering
            \scalebox{.75}{\begin{tikzpicture}
\tikzstyle{trace}=[draw,rectangle,line width=1.25pt,inner sep=.1cm,outer sep=.1cm]
\node (ds) at (0,0)
{
    \resizebox{4cm}{!}{
    \input{figures/2_multitrace/trio_2clock}
    }
};
\node[trace,right=-.1cm of ds] (tr) {
        \begin{tikzpicture}[every node/.style={minimum height=0pt,minimum width=0pt,inner sep=0,outer sep=0}]
        \node (lab) at (0,0) {\texttt{multi-trace} \scriptsize(\texttt{partially observed})};
        \node[below=.15cm of lab] (gltr) {
$
\left\{
\begin{array}{ll}
\lbrack\hlf{l_1},\hlf{l_2}\rbrack & \shortColGrey{l_1!m_1.l_2?m_1.}\hlf{l_2}?\hms{m_4}\\
\lbrack\hlf{l_3}\rbrack & \hlf{l_3}?\hms{m_1}\shortColGrey{.l_3!m_4}
\end{array}
\right\}
$
        };
        \end{tikzpicture}
};
\draw[->,line width=1.5pt] ($(ds.east) + (-.6,0)$) -- (tr.west); 
\end{tikzpicture}}
            \caption{partial observation $\Rightarrow$ slice of multi-trace} 
            \label{fig:mu_coloc_partial}
        \end{subfigure}
    \end{minipage}
    \caption{Trace collection}
    \label{fig:trace_collection}
\end{figure*}

The asynchronous exchanges of messages are at the heart of the behaviors of Distributed Systems (DS). Those exchanges can be modeled using discrete communication actions (abbrv. as {\em actions}) corresponding to atomic emissions and receptions of messages. Those actions occur at the communication interfaces of specific sub-systems (those that emit and/or receive the corresponding messages) within the DS.

To formalize this, we describe the sub-systems constituting a DS using a set $L$ of {\em lifelines}, and the messages that can transit through it using a set $M$ of {\em messages}.

Elements of the set $\mathbb{A}$ of communication actions are~:
\begin{itemize}
    \item either of the form $\hlf{l}!\hms{m}$, corresponding to the emission of the message $m$ in $M$ from the lifeline $l$ in $L$ 
    \item or of the form $\hlf{l}?\hms{m}$, corresponding to the reception of the message $m$ in $M$ by the lifeline $l$ in $L$.
\end{itemize} 
For any action $a \in \mathbb{A}$ of the form $\hlf{l}!\hms{m}$ or $\hlf{l}?\hms{m}$, we denote by $\theta(a)$ the lifeline $l$ on which $a$ occurs.

An execution of a DS can be characterized by the actions that occurred in its span and by the order between their occurrences. Depending on the architecture of the DS our ability to reorder those actions may vary.

Fig.\ref{fig:mu_global1clock} describes (on the left) a DS with three lifelines: $\hlf{l_1}$ (the computer icon), $\hlf{l_2}$ (the sensor icon) and $\hlf{l_3}$ (the gear icon) which all share the same global clock (as indicated by the drawn clocks).
Thanks to the global clock, actions can be ordered globally, whichever is the sub-system on which they occur. 
Hence, an execution of the DS can be characterized by a global sequence of actions, which we call a {\em trace}. Traces are sequences of actions where $\varepsilon$ represents the empty sequence and are concatenated using the "$.$" operator.
We denote by $\mathbb{T} = \mathbb{A}^*$ the set of all global traces\footnote{Given a set $X$, $X^*$ denotes the set of all sequences with elements in $X$ (this is the Kleene star notation).}.
The right side of Fig.\ref{fig:mu_global1clock} describes an execution of our example DS in the form of a global trace: \[\hlf{l_1}!\hms{m_1}.\hlf{l_3}?\hms{m_1}.\hlf{l_2}?\hms{m_1}.\hlf{l_3}!\hms{m_4}.\hlf{l_2}?\hms{m_4}\]
This execution can be understood as follows: $\hlf{l_1}$ broadcasts message $\hms{m_1}$ to both $\hlf{l_3}$ and $\hlf{l_2}$ and then $\hlf{l_3}$ sends $\hms{m_4}$ to $\hlf{l_2}$.

Characterizing executions as global traces requires that all sub-systems share a common clock which we may call the global clock.
However, in all generality, as the sub-systems of a DS can be distributed across distant machines, they may not share a common clock \cite{Lamport19b} and such a centralization and reordering of logging is not possible.
Fig.\ref{fig:mu_local3clocks} describes a similar system as that of Fig.\ref{fig:mu_global1clock}  except that all three lifelines have different local clocks (as indicated by the drawn clocks).
Let us suppose however, that the same execution occurred in both cases. 
Then, because it is not possible to reorder actions occurring on distinct lifelines, instead of a global trace, the execution is rather characterized by a set of three local traces (one for each sub-system), which we call a {\em multi-trace}:
\[
\begin{array}{ll}
\left[\hlf{l_1} \right] & \hlf{l_1}!\hms{m_1}  \\
\left[\hlf{l_2}\right] & \hlf{l_2}?\hms{m_1}.\hlf{l_2}?\hms{m_4} \\
\left[\hlf{l_3}\right] & \hlf{l_3}?\hms{m_1}.\hlf{l_3}!\hms{m_4}
\end{array}
\]
This notion of multi-trace can be found e.g. in \cite{constraint_based_oracles_for_timed_distributed_systems,a_small_step_approach_to_multi_trace_checking_against_interactions} as well as in \cite{trace_partitioning_and_local_monitoring_for_asynchronous_components_AttardF17} (called partitioned traces) and \cite{passive_conformance_testing_of_service_choreographies,Falcone16} (as sets of logs/local traces).

Still, it may be so that groups of sub-systems do share a common clock. 
We call those groups {\em co-localizations} \cite{modeling_concurrency_with_partial_orders}. 
Fig.\ref{fig:mu_coloc2clocks} describes a variant of our example where lifelines $\hlf{l_1}$ and $\hlf{l_2}$ share a common clock (as indicated by the drawn clocks).
In this case it is possible to order an action occurring on $\hlf{l_1}$ w.r.t. another occurring on $\hlf{l_2}$.
As a result, the execution (the same as in the previous two cases) can be characterized by a {\em generalized} multi-trace where lifelines $\hlf{l_1}$ and $\hlf{l_2}$ constitute together a co-localization and where lifeline $\hlf{l_3}$ alone represents another co-localization. The multi-trace is then composed of two collected local traces, each representing a local order of actions on one of the two co-localizations. 
\[
\begin{array}{ll}
\left[\hlf{l_1}, \hlf{l_2} \right] & \hlf{l_1}!\hms{m_1}.\hlf{l_2}?\hms{m_1}.\hlf{l_2}?\hms{m_4} \\
\left[\hlf{l_3}\right] & \hlf{l_3}?\hms{m_1}.\hlf{l_3}!\hms{m_4}
\end{array}
\]

More formally, a {\em co-localization} $c$ is defined by a subset of lifelines $c \subseteq L$. We introduce $\mathbb{A}_{|c} = \{a \in \mathbb{A} ~|~ \theta(a) \in c\}$ the set of actions occurring on a lifeline in $c$ and $\mathbb{T}_{|c} = \mathbb{A}_{|c}^*$ the set of local traces defined on $c$.

For a set $X$, $\partitionsOf{X}$ denotes the set of partitions of $X$, where a partition $C \in \partitionsOf{X}$ is defined as a collection $C \subset \mathcal{P}(X)$ s.t. $\bigcup_{c\in C} c = X$ and $\forall~(c,c') \in C$, $c \neq c' \Rightarrow c \cap c' = \emptyset$. 

\begin{definition}[Multi-traces]
\label{def:multi-trace}
Given a partition $C \in \partitionsOf{L}$ of lifelines, we denote by $\mathbb{M}_{C}$ the set of multi-traces up to $C$.
A multi-trace $\mu \in \mathbb{M}_{C}$ is defined as a tuple of traces, each defined over events occurring on a specific co-localization from $C$, hence 
$\mathbb{M}_{C} = \prod_{c\in C} \mathbb{T}_{|c}$.
\end{definition}

Given a multi-trace $\mu \in \mathbb{M}_{C}$ and given any $c \in C$, we denote by:
\begin{itemize}
    \item $\mu_{|c}$ the local component of $\mu$ on $c$, 
    \item for any $t \in \mathbb{T}_{|c}$, $\mu[t]_c$ the multi-trace $\mu$ in which $t$ substitutes the $c$ component, i.e. $\forall~ c' \in C \setminus\{c\}, (\mu[t]_{c})_{|c'} = \mu_{|c'}$ and $(\mu[t]_{c})_{|c} = t$. 
\end{itemize}
We also extend the notation $\theta$ s.t. for any $a \in \mathbb{A}$ and $C \in \partitionsOf{L}$, $\theta_C(a)$ designates the unique co-localization $c \in C$ on which $a$ occurs i.e. s.t. $\theta(a) \in c$.

$\varepsilon_C$ denotes the empty multi-trace s.t. $\forall~c\in C$, $\varepsilon_{C|c} = \varepsilon$.
We define a left concatenation operator for multi-traces as follows: for any action $a$ and multi-trace $\mu$, $a ~ \multiAppendLeft ~ \mu = \mu[a.\mu_{|\theta_C(a)}]_{\theta_C(a)}$ is the multi-trace obtained by prepending action $a$ on the corresponding component of $\mu$ (i.e. $\mu_{|\theta_C(a)}$). 

Finally, we denote by $C_t= \{L\}$ the trivial partition (in which all lifelines are co-localized) and by $C_d = \{\{l\} ~|~ l \in L\}$ the discrete partition (in which no two lifelines are co-localized). 

Multi-traces defined up to $C_t$ and $C_d$ respectively correspond to the notions of global traces \cite{revisiting_semantics_of_interactions_for_trace_validity_analysis} and multi-traces as defined in \cite{a_small_step_approach_to_multi_trace_checking_against_interactions}.
Co-localizations allow us to generalize and bridge the gap between those two notions. 
Hence, any RV approach that can handle those generalized multi-traces can also handle both global traces (trivial partition $C_t$) and classical multi-traces (discrete partition $C_d$) which are particular cases of generalized multi-traces.

As a side note, it is possible to define projections from coarser multi-traces to finer multi-traces. For any set $X$, a partition $C' \in \partitionsOf{X}$ is a refinement of a partition $C \in \partitionsOf{X}$, denoted by $C' \leq C$, if for any $c' \in C'$, there exists $c \in C$ s.t. $c' \subset c$. 
For instance, the multi-traces from Fig.\ref{fig:mu_coloc2clocks} and Fig.\ref{fig:mu_local3clocks} can be obtained by projecting that from Fig.\ref{fig:mu_global1clock} because $\{ \{ \hlf{l_1}, \hlf{l_2} \}, \{ \hlf{l_3} \} \} \leq \{ \{ \hlf{l_1}, \hlf{l_2},\hlf{l_3} \} \}$ and $\{ \{ \hlf{l_1} \}, \{\hlf{l_2} \}, \{ \hlf{l_3} \} \leq \{ \{ \hlf{l_1}, \hlf{l_2},\hlf{l_3} \} \}$.

\subsection{Slices to model partially observed executions\label{ssec:slices}}

As we have seen, the example execution from Fig.\ref{fig:mu_global1clock} is characterized by the global trace :
\[
t = \hlf{l_1}!\hms{m_1}.\hlf{l_3}?\hms{m_1}.\hlf{l_2}?\hms{m_1}.\hlf{l_3}!\hms{m_4}.\hlf{l_2}?\hms{m_4}
\]
With the aim of performing Offline Runtime Verification we would then analyze this trace w.r.t. a certain behavioral model. However, in practice, the instrumentation that is used to collect this trace (from an observation of the execution that occurred) is not always perfect.
If we suppose that this trace $t$ characterizes the entirety of the execution, then, depending on the quality of the instrumentation, the collected trace $t'$ may be a suffix of $t$ (if the observation started too late), a prefix of $t$ (it it ended too early) or a sub-word of $t$ (if both).

This remark can equally be applied in the absence of a global clock. In the case of Fig.\ref{fig:mu_coloc2clocks}, $\hlf{l_1}$ and $\hlf{l_2}$ share a common clock while $\hlf{l_3}$ has its own local clock. Thus, the instrumentation must have at least two distinct observers to log actions on those two co-localizations.
In ideal conditions, the observation of this execution would yield the following multi-trace (see also Fig.\ref{fig:mu_coloc2clocks}) defined over $C=\{\{\hlf{l_1},\hlf{l_2}\},\{\hlf{l_3}\}\}$:
\[
\begin{array}{ll}
\lbrack\hlf{l_1},\hlf{l_2}\rbrack & \hlf{l_1}!\hms{m_1}.\hlf{l_2}?\hms{m_1}.\hlf{l_2}?\hms{m_4}\\
\lbrack\hlf{l_3}\rbrack & \hlf{l_3}?\hms{m_1}.\hlf{l_3}!\hms{m_4}
\end{array}
\]
In practice however, it may be difficult to synchronize the periods of observation between distant observers. Hence it may be so that some actions are missing at the beginning and/or at the end of local components of the multi-trace (as compared with the multi-trace that would have been observed in ideal conditions of observation).

This is illustrated on Fig.\ref{fig:mu_coloc_partial}. Here, we suppose that observation on component $\{\hlf{l_1},\hlf{l_2}\}$ started too late, leading to $\shortColGrey{l_1!m_1}$ and $\shortColGrey{l_2?m_1}$ not being observed while that on component $\{\hlf{l_3}\}$ ceased too early, leading to $\shortColGrey{l_3!m_4}$ not being observed. With the convention that the missing actions are greyed-out, this leads to the following partial multi-trace:
\[
\begin{array}{ll}
\left[\hlf{l_1}, \hlf{l_2} \right] & \shortColGrey{l_1!m_1}.\shortColGrey{l_2?m_1}.\hlf{l_2}?\hms{m_4} \\
\left[\hlf{l_3}\right] & \hlf{l_3}?\hms{m_1}.\shortColGrey{l_3!m_4}
\end{array}
\]
This notion of partial observation (i.e. some elements may be missing at the beginning or at the end of traces composing the multi-trace) corresponds to considering \emph{slices of multi-traces}, as defined in Def.\ref{def:slices}.

\begin{definition}[Slices]
\label{def:slices}
For any traces $t,t' \in \mathbb{T}_{|c}$, we say that $t'$ is a slice of $t$ iff there exists $t_{+}$ and $t_{-} \in \mathbb{T}_{|c}$ s.t. $t = t_{-}.t'.t_{+}$ and we denote by $\sliceOf{t}$ the set of slices of $t$.\\
For any multi-traces $\mu,\mu' \in \mathbb{M}_{C}$, we say that $\mu'$ is a slice of $\mu$ iff for all $c \in C$, $\mu'_{|c} \in \sliceOf{\mu_{|c}}$ and $\sliceOf{\mu}$ denotes the set of slices of $\mu$.
\end{definition}

In our example, if we denote by $\mu_0$ the multi-trace from Fig.\ref{fig:mu_coloc2clocks} (observed in ideal conditions) then the multi-trace described on Fig.\ref{fig:mu_coloc_partial} (collected in conditions of partial observation) is a slice of $\mu_0$ i.e. $\mu_0' \in \sliceOf{\mu_0}$.

\section{Interactions and their semantics\label{sec:interactions}}

\subsection{Syntax}

Formal behavioral models enable users to
\begin{enumerate}
    \item specify systems which may exhibit infinitely many distinct behaviors as finite expressions and
    \item automate Verification and Validation processes such as RV.
\end{enumerate}
The challenge of modeling and adapting RV to complex DS requires rich and intuitive formalisms.
Interactions \cite{equivalence_of_denotational_and_operational_semantics_for_interaction_languages} are well-suited to specify distributed behaviors thanks to their intuitive graphical representation (in the fashion of UML-Sequence Diagrams \cite{the_many_meanings_of_uml2_sd_a_survey}) while at the same time being formal models on which RV techniques can be applied \cite{revisiting_semantics_of_interactions_for_trace_validity_analysis,a_small_step_approach_to_multi_trace_checking_against_interactions}.
In that spirit, an interaction is both described as a syntactic term which takes the form of a binary tree (see Fig.\ref{fig:interaction_term}) and visualized as a diagram (see Fig.\ref{fig:interaction_diagram}) which may be familiar to many software engineers thanks to the wide use of UML-SD and MSC.
Fig.\ref{fig:interaction_example} depicts an example of interaction defined on the lifelines $\hlf{l_1}$, $\hlf{l_2}$ and $\hlf{l_3}$ with messages $\hms{m_1}$, $\hms{m_2}$, $\hms{m_3}$ and $\hms{m_4}$. It is represented as a diagram on the left (Fig.\ref{fig:interaction_diagram}), and as a term on the right (Fig.\ref{fig:interaction_term}).

\begin{figure}[h]
    \centering
    \begin{minipage}{.38\linewidth}
        \centering
        \begin{subfigure}[t]{\linewidth}
            \centering
            \resizebox{\textwidth}{!}{
                \input{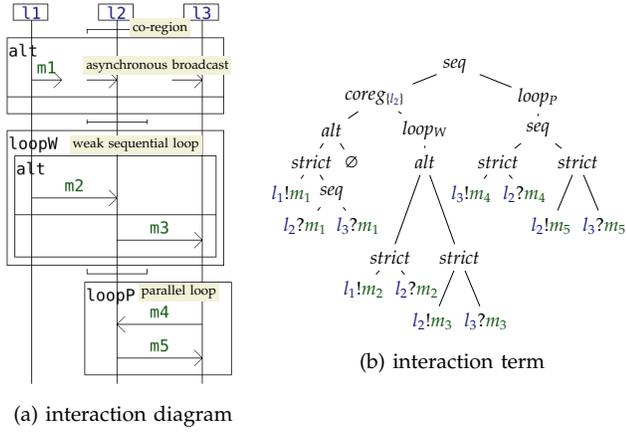}
            }
            \caption{interaction diagram\label{fig:interaction_diagram}}
        \end{subfigure}
    \end{minipage}
    \begin{minipage}{.58\linewidth}
        \centering
        \begin{subfigure}[t]{\linewidth}
            \centering
            \resizebox{\textwidth}{!}{
                \begin{tikzpicture}[every node/.style = {shape=rectangle, align=center}]
\node (o) { $seq$ } [sibling distance=3cm,level distance=0.6cm]
  child { node (o1) { $coreg_{\{\hlf{l_2}\}}$ } [sibling distance=1.75cm,level distance=0.6cm]
  child { node (o11) {$alt$} [sibling distance=.75cm]
    child { node (o111) {$strict$} [sibling distance=.75cm]
      child { node (o1111) {$\hlf{l_1}!\hms{m_1}$} }
      child { node (o1112) {$seq$} [sibling distance=1cm]
        child { node (o11121) {$\hlf{l_2}?\hms{m_1}$} }
        child { node (o11122) {$\hlf{l_3}?\hms{m_1}$} }
      }
    }
      child { node (o112) {$\varnothing$} }
  }
  child { node (o12) {$loop_W$}
    child { node (o121) {$alt$} [sibling distance=1.3cm,level distance=1.8cm]
      child { node (o1211) {$strict$} [sibling distance=1cm,level distance=.6cm]
        child { node (o12111) {$\hlf{l_1}!\hms{m_2}$} }
        child { node (o12112) {$\hlf{l_2}?\hms{m_2}$} }
      }
      child { node (o1212) {$strict$} [sibling distance=1cm,level distance=1.2cm]
        child { node (o12121) {$\hlf{l_2}!\hms{m_3}$} }
        child { node (o12122) {$\hlf{l_3}?\hms{m_3}$} }
      }
    }
  }
  }
  child { node (o2) {$loop_P$}
    child { node (o21) {$seq$} [sibling distance=1.5cm,level distance=.6cm]
      child { node (o211) {$strict$} [sibling distance=1cm,level distance=.6cm]
        child { node (o2111) {$\hlf{l_3}!\hms{m_4}$} }
        child { node (o2112) {$\hlf{l_2}?\hms{m_4}$} }
      }
      child { node (o212) {$strict$} [sibling distance=1cm,level distance=1.2cm]
        child { node (o2121) {$\hlf{l_2}!\hms{m_5}$} }
        child { node (o2122) {$\hlf{l_3}?\hms{m_5}$} }
      }
    }
  };
\end{tikzpicture}
            }
            \caption{interaction term\label{fig:interaction_term}}
        \end{subfigure}
    \end{minipage}
    \caption{An example of an interaction}
    \label{fig:interaction_example}
\end{figure}

Interaction models correspond to expressions built over:
\begin{itemize}
    \item the empty interaction, denoted by $\varnothing$, with the empty multi-trace $\varepsilon_C$ as the only accepted one 
    \item and actions $a$ (of the form $\hlf{l}!\hms{m}$ or $\hlf{l}?\hms{m}$) with a multi-trace reduced to a single action as the only accepted one (i.e. the multi-trace $a ~ \multiAppendLeft ~ \varepsilon_C$).
\end{itemize}

We then use operators to compose interactions into more complex expressions.
Let us consider two interactions $i_1$ and $i_2$:
\begin{itemize}
    \item $alt$ stands for alternative and a behavior of $alt(i_1,i_2)$ is either a behavior of $i_1$ or one of $i_2$ according to a non-deterministic and exclusive choice between the two alternatives.
    \item $strict$ stands for strict sequencing and a behavior of $strict(i_1,i_2)$ is such that a behavior from $i_1$ must be entirely expressed before any action from $i_2$ can occur;
    \item $coreg$ stands for concurrent region and corresponds to a family of operators $(coreg_r)_{r \in \mathcal{P}(L)}$. For a given subset $r \subseteq L$ of lifelines, $coreg_r(i_1,i_2)$ specifies behaviors composed from behaviors of $i_1$ and $i_2$.
    
    \indent
     (a) In the first case $r = L$, actions occurring in $i_1$ and $i_2$ can occur in any order in behaviors expressed by $coreg_r(i_1,i_2)$. 
        This definition coincides with a classical interleaving \cite{uml_interactions_meet_state_machines_an_institutional_approach,equivalence_of_denotational_and_operational_semantics_for_interaction_languages} or orthocurrence \cite{modeling_concurrency_with_partial_orders} operator. As such we denote by $par$ the corresponding derivable construct.
        
        \indent 
        (b) In the second case $r = \emptyset$, interleaving is only possible between actions that occur on different lifelines, i.e. the behaviors of $coreg_\emptyset(i_1,i_2)$ are defined as with the $strict$ operator for actions occurring on the same lifeline (whatever it may be) and as with the $par$ operator for actions occurring on different lifelines.
        This definition coincides with weak sequencing \cite{uml_interactions_meet_state_machines_an_institutional_approach,equivalence_of_denotational_and_operational_semantics_for_interaction_languages} which is a key operator for sequence diagrams. As such we denote by $seq$ the corresponding derivable construct.
        
         \indent (c) In the last case $r \not\in \{\emptyset,L\}$, $coreg_r$ behaves as $par$ on $r$ and as $seq$ on $L \setminus r$.
    
\end{itemize}

The $coreg$ operator is new w.r.t. the language from \cite{equivalence_of_denotational_and_operational_semantics_for_interaction_languages}. This operator is inspired by the co-regions of UML-SD \cite{UML}, also found in some papers on MSCs \cite{pomsets_for_message_sequence_charts}.
The $coreg$ construct allows certain patterns of communications that would be difficult to model ergonomically otherwise.
For instance, on Fig.\ref{fig:interaction_example} we have that (1) $\hlf{l_1}$ has to emit $\hms{m_1}$ (if it ever does) before it can emit $\hms{m_2}$ (if it ever does) and (2) $\hlf{l_2}$ can receive $\hms{m_1}$ or $\hms{m_2}$ in any order. To specify this, we use a co-region on lifeline $\hlf{l_2}$. This could not have been done using a $seq$ (as it would forbid $\hms{m_2}$ to be received before $\hms{m_1}$) or a $par$ (as it would allow $\hms{m_2}$ to be emitted before $\hms{m_1}$). 

$seq$ and $par$ are not primitive operators as they can be derived from $coreg$. However, because those two operators are familiar to users of sequence-diagram-like models and widely used, we keep them to denote the corresponding $coreg$ variants.

$strict$ and $coreg$ are binary scheduling operators i.e. they can be used to schedule behaviors w.r.t. one another. As a result, they can be used for defining repetition operators in the same manner as concatenation can be used to define the Kleene star for regular expressions.
In \cite{equivalence_of_denotational_and_operational_semantics_for_interaction_languages} we have defined $loop_S$, $loop_W$, $loop_P$ as repetition operators using resp. $strict$, $seq$ and $par$. In the same fashion we can define a family $(loop_{C_r})_{r \in \mathcal{P}(L)}$ of repetition operators.
In contrast to MSC and UML-SD, which only have a single loop construct, those loops enable us to specify a variety of behaviors.
 
$loop_S$ is a strict sequential loop meaning that any instance of the repeated behavior must be entirely executed (globally) before any other instance of the behavior might be started.

$loop_C$, used as $loop_{C_{r}}(i)$ with $r \subseteq L$ is a repetition using a $coreg_{r}$ operator. It is a middleground between:
    \begin{itemize}
        \item $loop_W = loop_{C_{\emptyset}}$ which corresponds to repetitions using the weak sequential operator. Several instances of the repeated behavior might exist at the same time because there is no synchronization between lifelines as for the beginning and end of the executions of those instances. Moreover, with $loop_W$, it might be so that the first action that is executed does not come from the first instance of the loop. For instance, in the example from Fig.\ref{fig:interaction_example}, after a first occurrence of $\hlf{l_1}!\hms{m_2}$, lifeline $\hlf{l_2}$ may emit $\hms{m_3}$ several times before receiving the $\hms{m_2}$ initially sent by $\hlf{l_1}$.
        \item and $loop_P = loop_{C_{L}}$, which is more akin to the bang operator of pi-calculus \cite{an_introduction_to_the_pi_calculus} and signifies the parallel composition of an arbitrary number of instances of the same behavior. It can be used to model services of which, at any given time, many instances may run in parallel.
    \end{itemize}

When modeling DS, communications between sub-systems can be defined up to a certain communication medium. In formalisms based on communicating automata, this often takes the form of buffers which assume certain policies (FIFO, bag, etc.) \cite{a_hierarchy_of_communication_models_for_message_sequence_charts}.
In our case, loops used in combination with asynchronous message passing, may be used to abstract away those communication media. For instance, while $loop_W(strict(\hlf{l_1}!\hms{m},\hlf{l_2}?\hms{m}))$ corresponds to having a FIFO buffer receiving messages from $\hlf{l_1}$ on $\hlf{l_2}$, in contrast, by using $loop_P(strict(\hlf{l_1}!\hms{m},\hlf{l_2}?\hms{m}))$ we rather have a bag buffer in so far as instances of message $\hms{m}$ can be received in any order.

\begin{definition}
\label{def:interactions}
The set $\mathbb{I}$ of interactions is the least term set s.t.:\\
\noindent $-$ $\varnothing$ and actions in $\mathbb{A}$ belong to $\mathbb{I}$\\
\noindent $-$ for any $i_1,i_2 \in \mathbb{I}^2$ and any $r \subseteq L$:\\
\noindent $\phantom{~~~~}-$ $\forall~ f \in \{strict,alt,coreg_{r}\}$, $f(i_1,i_2) \in \mathbb{I}$\\
\noindent $\phantom{~~~~}-$ $\forall~k \in \{S,C_{r}\}$, $loop_k(i_1) \in \mathbb{I}$
\end{definition}

In Def.\ref{def:interactions}, we formalize our interaction language. Let us keep in mind that the $seq$, $par$, $loop_W$ and $loop_P$ constructs are derivable from $coreg$ and $loop_C$.

Via their recursive definition, interactions have a tree-like structure, as illustrated on Fig.\ref{fig:interaction_term}. Those trees are binary-trees and we can pinpoint unambiguously each sub-tree via its position as a word $p \in \{1,2\}^*$ (with $\varepsilon$ the empty position which designates the root node). More precisely, $1$ (resp. $2$) allows access to the left direct sub-interaction or the unique direct sub-interaction (resp. the right direct sub-interaction). For any interaction $i$, $\positionsOf{i}$ designates the set of its positions, and, for any $p \in \positionsOf{i}$, $i_{|p}$ designates the sub-interaction at position $p$. For example, for the interaction $i =  seq(alt(\hlf{l_1}!\hms{m_1},\hlf{l_2}?\hms{m_2}),\hlf{l_1}!\hms{m_3})$, $i_{|1}$ is the interaction $alt(\hlf{l_1}!\hms{m_1},\hlf{l_2}?\hms{m_2})$, $i_{|12}$ is the interaction $\hlf{l_2}?\hms{m_2}$ and $\{\varepsilon,1,2,11,12\}$ is the set of positions of $i$.

\subsection{Semantics\label{ssec:semantics}}

Given a partition $C \in \partitionsOf{L}$ defining co-localizations, each interaction $i \in \mathbb{I}$ characterizes a (potentially infinite) set $\sigma_C(i)$ of multi-traces according to $C$. This semantics can be defined in an operational-style using either inductive rules (in the style of Plotkin \cite{a_structural_approach_to_operational_semantics}) as in \cite{equivalence_of_denotational_and_operational_semantics_for_interaction_languages} or through the definition of an execution function (in the style of functional programming languages) as in \cite{revisiting_semantics_of_interactions_for_trace_validity_analysis,a_small_step_approach_to_multi_trace_checking_against_interactions}.
In the following we propose a reworked (w.r.t. that of \cite{equivalence_of_denotational_and_operational_semantics_for_interaction_languages}) operational-style formulation which includes the coregion operator $coreg$ and involves fewer inductive predicates.

Interactions provide detailed control structures as for the occurrences and orders of actions, going beyond the simple linear order of events made available by multi-traces. An operational-style semantics defines accepted behaviors via concatenations of actions $a$ which occurrences are associated to term transformations of the form \[i \xrightarrow{a@p} i'\] 
Here interaction $i'$ specifies all the continuations of the behaviors specified by $i$ which start with the occurrence of action $a \in \mathbb{A}$, which, by construction, is a sub-term of $i$ at a certain position $p \in \positionsOf{i}$.

In order to ensure that the follow-up interaction $i'$ specifies the right order of actions following the first selected action ($a$ at position $p$) according to the interaction $i$, the execution relation $\rightarrow$ performs transformations on the initial term $i$ so as to obtain $i'$. Those transformations may include pruning operations (related to the notion of "permission" in \cite{high_level_message_sequence_charts}) which  clean the term with regard to the lifeline of the action which is executed.

\begin{figure}[h]
    \centering
    \begin{minipage}{.38\linewidth}
        \centering
        \begin{subfigure}[t]{\linewidth}
            \centering
            \resizebox{\textwidth}{!}{
                \input{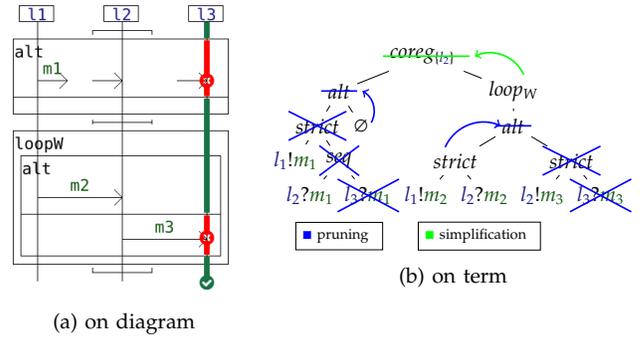}
            }
            \caption{on diagram\label{fig:pruning_diagram}}
        \end{subfigure}
    \end{minipage}
    \begin{minipage}{.58\linewidth}
        \centering
        \begin{subfigure}[t]{\linewidth}
            \centering
            \resizebox{\textwidth}{!}{
                \begin{tikzpicture}[every node/.style = {shape=rectangle, align=center}]
\node (o) { $coreg_{\{\hlf{l_2}\}}$ } [sibling distance=3cm,level distance=0.6cm]
  child { node (o1) {$alt$} [sibling distance=.75cm]
    child { node (o11) {$strict$} [sibling distance=.75cm]
      child { node (o111) {$\hlf{l_1}!\hms{m_1}$} }
      child { node (o112) {$seq$} [sibling distance=1cm]
        child { node (o1121) {$\hlf{l_2}?\hms{m_1}$} }
        child { node (o1122) {$\hlf{l_3}?\hms{m_1}$} }
      }
    }
      child { node (o12) {$\varnothing$} }
  }
  child { node (o2) {$loop_W$}
    child { node (o21) {$alt$} [sibling distance=2cm]
      child { node (o211) {$strict$} [sibling distance=1cm]
        child { node (o2111) {$\hlf{l_1}!\hms{m_2}$} }
        child { node (o2112) {$\hlf{l_2}?\hms{m_2}$} }
      }
      child { node (o212) {$strict$} [sibling distance=1cm]
        child { node (o2121) {$\hlf{l_2}!\hms{m_3}$} }
        child { node (o2122) {$\hlf{l_3}?\hms{m_3}$} }
      }
    }
  }
;
\draw[blue,thick] (o1122.south east) -- (o1122.north west);
\draw[blue,thick] (o1122.south west) -- (o1122.north east);
\draw[blue,thick] (o112.south east) -- (o112.north west);
\draw[blue,thick] (o112.south west) -- (o112.north east);
\draw[blue,thick] (o11.south east) -- (o11.north west);
\draw[blue,thick] (o11.south west) -- (o11.north east);
\draw[blue,thick] (o1.east) -- (o1.west);
\draw[->,blue,thick] ([xshift=5pt,yshift=-5pt] o12.north) to [bend right=45] ([xshift=2.5pt] o1.east);
\draw[blue,thick] (o2122.south east) -- (o2122.north west);
\draw[blue,thick] (o2122.south west) -- (o2122.north east);
\draw[blue,thick] (o212.south east) -- (o212.north west);
\draw[blue,thick] (o212.south west) -- (o212.north east);
\draw[blue,thick] (o21.east) -- (o21.west);
\draw[->,blue,thick] ([xshift=-5pt,yshift=-.5pt] o211.north) to [bend left=45] ([xshift=2.5pt] o21.west);
\draw[green,thick] (o.east) -- (o.west);
\draw[->,green,thick] ([xshift=5pt,yshift=-.5pt] o2.north) to [bend right=45] ([xshift=2.5pt] o.east);
\node[draw] (legP) at (-1.5,-3.1) { {\scriptsize \textcolor{blue}{$\blacksquare$} pruning} };
\node[draw,right=.6 of legP] (legS) { {\scriptsize \textcolor{green}{$\blacksquare$} simplification} };
\end{tikzpicture}
            }
            \caption{on term\label{fig:pruning_term}}
        \end{subfigure}
    \end{minipage}
    \caption{Pruning an interaction}
    \label{fig:pruning_example}
\end{figure}

In particular, the mechanism of pruning enters into play for handling weak sequencing. 
For example, let us consider executing $\hlf{l_3}!\hms{m_4}$ in the interaction from Fig.\ref{fig:interaction_example}. 
If $\hlf{l_3}!\hms{m_4}$ is the first action to occur then, so as to respect the top to bottom order of the diagram (i.e. weak sequencing), this means that neither $\hlf{l_3}?\hms{m_1}$ nor $\hlf{l_3}?\hms{m_3}$ can occur. 
Indeed, they appear above $\hlf{l_3}!\hms{m_4}$ along the lifeline $\hlf{l_3}$ in Fig.\ref{fig:interaction_example}, more precisely they are scheduled with weak sequencing w.r.t. to it and must precede it.
Hence, if those actions were to occur they would have done so before $\hlf{l_3}!\hms{m_4}$. 
As a result, both actions must be eliminated or, in other words, to better conform to our vocabulary, pruned from the follow-up interaction. This is possible because they are within alternatives and loops. 
The general idea is to transform the sub-terms preceding (i.e. with sequencing) the action that is executed (here $\hlf{l_3}!\hms{m_4}$) in such a way as to eliminate from these sub-terms actions that involve the lifeline on which this executed action occurs (here $\hlf{l_3}$).
In this process, which we call {\em pruning}, pertinent actions are eliminated and, from the bottom up, the {\em pruned} sub-terms are reconstructed so as to keep all the behaviors that do not involve a certain lifeline (here $\hlf{l_3}$).

This process of pruning is illustrated on Fig.\ref{fig:pruning_example}. 
The interaction term with $coreg_{\{\hlf{l_2}\}}$ as root in Fig.\ref{fig:pruning_term} is simplified with two purposes: eliminate all traces with an action on lifeline $\hlf{l_3}$ and preserve all other accepted traces. As the sub-interaction $strict(\hlf{l_1}!\hms{m_1},seq(\hlf{l_2}?\hms{m_1},\hlf{l_3}?\hms{m_1}))$ at position $1$ only accepts traces containing the action $\hlf{l_3}?\hms{m_1}$, the first alternative of the $alt$ operator (position $11$) is no longer allowed and the sub-interaction with $alt$ as top operator is reduced to its second alternative, here $\varnothing$ (position $12$), which by definition accepts only the empty trace and consequently avoids lifeline $\hlf{l_3}$. As can be seen from Fig.\ref{fig:pruning_diagram}, the process of pruning an interaction is a local transformation guided by the lifelines to be avoided.

In Def.\ref{def:pruning} below, we define pruning w.r.t. a subset $L' \subseteq L$ of lifelines. The two pruning relations $\isPruneOf{L'}$ and $\isNotPruneOf{L'}$ are defined inductively on the term structure of interactions. For any interactions $i$ and $i'$:
\begin{itemize}
    \item $i \isPruneOf{L'} i'$ signifies that $i'$ is an interaction which specifies exactly all the behaviors specified by $i$ that do not involve any action  occurring on a lifeline of $L'$.
    \item $i \isNotPruneOf{L'}$ signifies that it is impossible to find such an interaction $i'$ because all the behaviors specified by $i$ involve at least one action occurring on a lifeline of $L'$.
\end{itemize}

\begin{definition}[Pruning]
\label{def:pruning}
The pruning relations $\isPruneBase \; \subset \mathbb{I} \times \mathcal{P}(L) \times \mathbb{I}$ and $\isNotPruneBase \; \subset \mathbb{I} \times \mathcal{P}(L)$ are s.t. for any $L' \subseteq L$, any $f \in \{strict\} \cup \bigcup_{r \subseteq L} \{coreg_{r}\}$ and $k \in \{S\} \cup \bigcup_{r \subseteq L} \{C_{r}\}$:

{
\centering
\begin{minipage}{2.5cm}
\centering
\begin{scprooftree}{.9}
\AxiomC{\vphantom{$\theta(a) \not\in L'$} \vphantom{$\isPruneOf{L'}$}}
\UnaryInfC{$\varnothing \isPruneOf{L'} \varnothing$}
\end{scprooftree}
\end{minipage}
\begin{minipage}{2.75cm}
\centering
\begin{scprooftree}{.9}
\AxiomC{\vphantom{$\theta(a) \not\in L'$} \vphantom{$\isPruneOf{L'}$}}
\RightLabel{\scriptsize $\theta(a) \not\in L'$}
\UnaryInfC{$a \isPruneOf{L'} a$}
\end{scprooftree}
\end{minipage}
\begin{minipage}{2.75cm}
\centering
\begin{scprooftree}{.9}
\AxiomC{\vphantom{$\theta(a) \not\in L'$} \vphantom{$\isPruneOf{L'}$}}
\RightLabel{\scriptsize $\theta(a) \in L'$}
\UnaryInfC{$a \isNotPruneOf{L'}$}
\end{scprooftree}
\end{minipage}

\begin{minipage}{3.9cm}
\centering
\begin{scprooftree}{.9}
\AxiomC{$i_1 \isPruneOf{L'} i_1'$}
\AxiomC{$i_2 \isPruneOf{L'} i_2'$}
\BinaryInfC{$alt(i_1,i_2) \isPruneOf{L'} alt(i_1',i_2')$}
\end{scprooftree}
\end{minipage}
\begin{minipage}{3.9cm}
\centering
\begin{scprooftree}{.9}
\AxiomC{$i_1 \isNotPruneOf{L'}$}
\AxiomC{$i_2 \isNotPruneOf{L'}$}
\BinaryInfC{$alt(i_1,i_2) \isNotPruneOf{L'}$}
\end{scprooftree}
\end{minipage}

\begin{minipage}{5cm}
\centering
\begin{scprooftree}{.9}
\AxiomC{$i_j \isPruneOf{L'} i_j'$}
\AxiomC{$i_{j'} \isNotPruneOf{L'}$}
\RightLabel{\scriptsize $\{j,j'\} = \{1,2\}$}
\BinaryInfC{$alt(i_1,i_2) \isPruneOf{L'} i_j'$}
\end{scprooftree}
\end{minipage}

\begin{minipage}{3.9cm}
\centering
\begin{scprooftree}{.9}
\AxiomC{$i_1 \isPruneOf{L'} i_1'$}
\AxiomC{$i_2 \isPruneOf{L'} i_2'$}
\BinaryInfC{$f(i_1,i_2) \isPruneOf{L'} f(i_1',i_2')$}
\end{scprooftree}
\end{minipage}
\begin{minipage}{3.9cm}
\centering
\begin{scprooftree}{.9}
\AxiomC{$i_j \isNotPruneOf{L'}$}
\RightLabel{\scriptsize $j \in \{1,2\}$}
\UnaryInfC{$f(i_1,i_2) \isNotPruneOf{L'}$}
\end{scprooftree}
\end{minipage}

\begin{minipage}{3.9cm}
\centering
\begin{scprooftree}{.9}
\AxiomC{$i_1 \isPruneOf{L'} i_1'$}
\UnaryInfC{$loop_k(i_1) \isPruneOf{L'} loop_k(i_1')$}
\end{scprooftree}
\end{minipage}
\begin{minipage}{3.9cm}
\centering
\begin{scprooftree}{.9}
\AxiomC{$i_1 \isNotPruneOf{L'}$}
\UnaryInfC{$loop_k(i_1) \isPruneOf{L'} \varnothing$}
\end{scprooftree}
\end{minipage}

}
\end{definition}

The pruning relations are defined inductively in the style of Plotkin \cite{a_structural_approach_to_operational_semantics}:
\begin{itemize}
    \item the empty interaction $\varnothing$ can always be pruned into $\varnothing$ w.r.t. any subset $L' \subseteq L$ of lifelines (i.e. $\varnothing \isPruneOf{L'} \varnothing$) because it expresses no action occurring on $L'$
    \item for any action $a$, we have $a \isPruneOf{L'} a$ if $\theta(a) \not\in L'$ and $a \isNotPruneOf{L'}$ otherwise because $a$ must be expressed
    \item having $alt(i_1,i_2) \isNotPruneOf{L'}$ (resp. $strict(i_1,i_2) \isNotPruneOf{L'}$) requires that both (resp. one of the two) $i_1 \isNotPruneOf{L'}$ and $i_2 \isNotPruneOf{L'}$ hold
    \item all other cases are handled similarly.
\end{itemize}

We define the execution relation $i \xrightarrow{a@p} i'$ - which makes interactions executable - in the same way as the pruning predicates. Executing an atomic action $a \in \mathbb{A}$ simply consists in replacing it with the empty interaction $\varnothing$ because once action $a$ is expressed, nothing remains to be expressed. If an action can be expressed within a branch of an alternative, then it can also be expressed from the alternative itself but its expression forces the choice of the branch on which it occurs to be made. An action which can be expressed on the left branch of a scheduling operator ($strict$ or any $coreg_{r}$ which implies also $seq$ and $par$) can always be expressed from the scheduling operator itself and what remains to be expressed is the scheduling of what remains of the left branch w.r.t. the initial right branch. 

Defining $i \xrightarrow{a@p} i'$ for executing actions from the right branch of $i$ can be more challenging.
Indeed, it is not always possible to express an action on the right branch of a scheduling operator, and, if it is possible, then it often requires pruning the left branch so as to remove inconsistencies in the follow-up interaction. 

When an action is expressed inside a loop, we have in the follow-up a certain scheduling of what remains to be executed of the instance of the sub-behavior w.r.t. the initial loop (which, serving as a specification of the repeatable instance, remains the same). Due to the peculiarities of weak sequencing (as evoked in \cite{equivalence_of_denotational_and_operational_semantics_for_interaction_languages}), in particular to the fact that the first action that is executed does not necessarily come from the first instance of the loop (as ordered by the operator which schedule different instances of the loop), the rule for $loop_{C_{r}}$ is somewhat more complex. It involves scheduling a pruned version of the initial loop before the remainder of the executed instance.

\begin{definition}
[Execution]
\label{def:execution}
The execution relation $\rightarrow \subset \mathbb{I} \times (\mathbb{A} \times \{1,2\}^*) \times \mathbb{I}$ is defined as follows:

{
\centering
\begin{minipage}{2cm}
\begin{scprooftree}{.9}
\AxiomC{\phantom{$\xrightarrow{a@p}$}}
\UnaryInfC{$a \xrightarrow{a@\varepsilon} \varnothing$}
\end{scprooftree}
\end{minipage}
\begin{minipage}{3.3cm}
\begin{scprooftree}{.9}
\AxiomC{$i_1 \xrightarrow{a@p} i'_1$}
\UnaryInfC{$alt(i_1,i_2) \xrightarrow{a@1.p} i'_1$}
\end{scprooftree}
\end{minipage}
\begin{minipage}{3.3cm}
\begin{scprooftree}{.9}
\AxiomC{$i_2 \xrightarrow{a@p} i'_2$}
\UnaryInfC{$alt(i_1,i_2) \xrightarrow{a@2.p} i'_2$}
\end{scprooftree}
\end{minipage}

\begin{minipage}{4.375cm}
\begin{scprooftree}{.9}
\AxiomC{$i_1 \xrightarrow{a@p} i'_1$}
\UnaryInfC{$strict(i_1,i_2) \xrightarrow{a@1.p} strict(i'_1,i_2)$}
\end{scprooftree}
\end{minipage}
\begin{minipage}{4.3cm}
\begin{scprooftree}{.9}
\AxiomC{$i_1 \xrightarrow{a@p} i'_1$\vphantom{$\isPruneOf{\{\theta(a)\}\setminus r}$}}
\UnaryInfC{$coreg_{r}(i_1,i_2) \xrightarrow{a@1.p} coreg_{r}(i'_1,i_2)$}
\end{scprooftree}
\end{minipage}

\begin{minipage}{4.275cm}
\begin{scprooftree}{.9}
\AxiomC{$i_1 \isPruneOf{L} \varnothing$}
\AxiomC{$i_2 \xrightarrow{a@p} i'_2$}
\BinaryInfC{$strict(i_1,i_2) \xrightarrow{a@2.p} i'_2$}
\end{scprooftree}
\end{minipage}
\begin{minipage}{4.4cm}
\begin{scprooftree}{.9}
\AxiomC{$i_1 \isPruneOf{\{\theta(a)\}\setminus r} i_1'$}
\AxiomC{$i_2 \xrightarrow{a@p} i'_2$}
\BinaryInfC{$coreg_{r}(i_1,i_2) \xrightarrow{a@2.p} coreg_{r}(i'_1,i'_2)$}
\end{scprooftree}
\end{minipage}

\begin{minipage}{6cm}
\begin{scprooftree}{.9}
\AxiomC{$i_1 \xrightarrow{a@p} i_1'$}
\UnaryInfC{$loop_S(i_1) \xrightarrow{a@1.p} strict(i_1',loop_S(i_1))$}
\end{scprooftree}
\end{minipage}
 
\begin{minipage}{7cm}
\begin{scprooftree}{.9}
\AxiomC{$i_1 \xrightarrow{a@p} i_1'$}
\AxiomC{$loop_{C_{r}}(i_1) \isPruneOf{\{\theta(a)\} \setminus r} i'$}
\BinaryInfC{$loop_{C_{r}}(i_1) \xrightarrow{a@1.p} coreg_{r}(i',coreg_{r}(i_1',loop_{C_{r}}(i_1)))$}
\end{scprooftree}
\end{minipage}

}
\end{definition}

In \cite{equivalence_of_denotational_and_operational_semantics_for_interaction_languages}, the reader can find detailed explanations on the rules concerning most operators ($strict$, $par$, $alt$, $seq$ and $loop$). In addition, Def.\ref{def:execution} includes the use of positions which is not the case in \cite{equivalence_of_denotational_and_operational_semantics_for_interaction_languages}. Using those decorations $@p$ comes at no cost given the inductive nature of $\rightarrow$'s definition. Moreover, it removes all ambiguities related to having several executable occurrences of the same action (i.e. $i \xrightarrow{a@p_1} i_1$ and $i \xrightarrow{a@p_2} i_2$ with $p_1 \neq p_2$ and $i_1 \neq i_2$).

\begin{figure}[h]
    \centering
\resizebox{.475\textwidth}{!}{\input{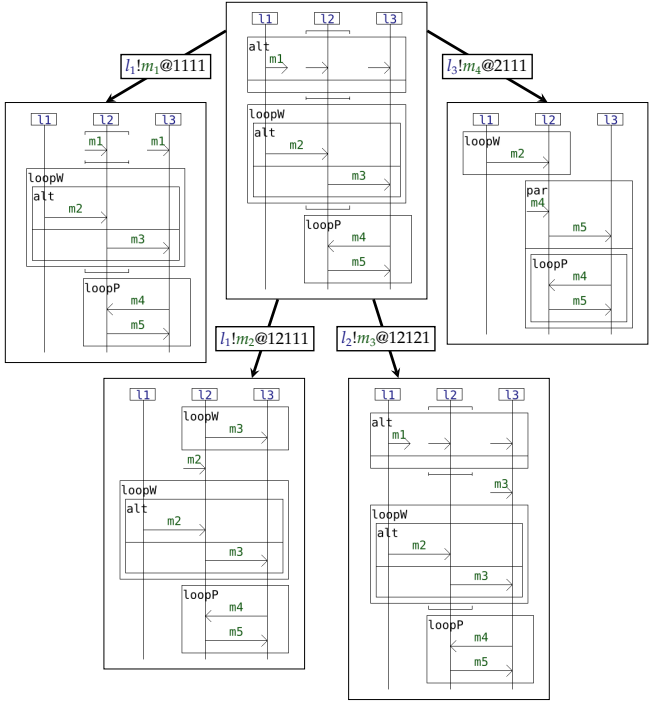}}
    \caption{Follow-ups of the interaction from Fig.\ref{fig:interaction_example}}
    \label{fig:followup}
\end{figure}

Fig.\ref{fig:followup} illustrates the use of the execution relation on the example from Fig.\ref{fig:interaction_example}. Four distinct actions can be immediately executed, leading to four follow-up interactions.

This process of computing follow-up interactions can be repeated recursively so that we obtain a tree which root is the initial interaction $i_0$. This tree, called an {\em execution tree} represents the semantics is $i_0$ i.e. its set of accepted behaviors. behaviors expressed by $i_0$ can indeed be observed via the succession of the actions that are executed on any path of the tree starting from $i_0$ and ending with an interaction that can express the empty behavior.

Finally, the set of multi-traces $\sigma_C(i_0)$ accepted by an interaction $i_0$ can be built using the execution relation $\rightarrow$. A multi-trace $\mu$ belongs to $\sigma_C(i_0)$ iff it can be written as:
\[
\mu = a_1 ~ \multiAppendLeft ~ a_2 ~ \multiAppendLeft ~ \ldots ~ \multiAppendLeft ~ a_n ~ \multiAppendLeft ~ \varepsilon_C
\] 
and if there exist $n$ interactions $i_1,\ldots,i_n$ s.t.:
\begin{itemize}
    \item $\forall~j \in \{0,1,\ldots,n-1\}$,  $i_j \xrightarrow{a_{j+1}@p_{j+1}} i_{j+1}$ \item and $\varepsilon_C \in \sigma_C(i_n)$. 
\end{itemize}
This last point (i.e. whether or not $\varepsilon_C \in \sigma_C(i_n)$) can be determined statically using the pruning relation. Indeed, if there exists $i_n'$ such that $i_n \isPruneOf{L} i'_n$ then this means that $i_n$ has at least a behavior which does not involve any action that occurs on lifelines of $L$. $L$ being the set of all lifelines, this behavior can only correspond to the empty multi-trace $\varepsilon_C$ and hence $\varepsilon_C \in \sigma_C(i_n)$.

\begin{definition}
\label{def:semantics}
Let $C \in \partitionsOf{L}$ and let $i \in \mathbb{I}$. 

The semantics $\sigma_C(i)$ of $i$ is the least subset of $\mathbb{M}_C$ s.t.:

\begin{minipage}{3.5cm}
\begin{scprooftree}{.9}
\AxiomC{$i \isPruneOf{L} i'$}
\UnaryInfC{$\varepsilon_C \in \sigma_C(i)$}
\end{scprooftree}
\end{minipage}
\begin{minipage}{4.5cm}
\begin{scprooftree}{.9}
\AxiomC{$\mu \in \sigma_C(i')$}
\AxiomC{$i \xrightarrow{a@p} i'$}
\BinaryInfC{$a~ \multiAppendLeft ~\mu \in \sigma_C(i)$}
\end{scprooftree}
\end{minipage}

with $\mu \in \mathbb{M}_C$, $a \in \mathbb{A}$, $p \in \{1,2\}^*$ and $i' \in \mathbb{I}$.
\end{definition}

\subsection{Soundness of the operational semantics}

In \cite{equivalence_of_denotational_and_operational_semantics_for_interaction_languages} we have given a denotational semantics for the interaction language without the $coreg$ and $loop_C$ operators. The semantics is based on the use of composition and algebraic operators as in \cite{uml_interactions_meet_state_machines_an_institutional_approach}.
In \cite{equivalence_of_denotational_and_operational_semantics_for_interaction_languages}, we used the $\globalInterleaving$ (interleaving) and $\globalWeakSeq$ (weak sequencing) operators on sets of traces.
In order to include $coreg$ we have to define a new operator on sets of traces as follows.
The first step is to define a conditional conflict predicate $t \; \globalConditionalSequencing{r} \; l$ meaning that the trace $t$ contains an action on a lifeline $l \not\in r$:
\[
\begin{array}{lll}
\varepsilon ~\globalConditionalSequencing{r}~ l &=& \bot \\
(a.t) ~\globalConditionalSequencing{r}~ l &=&  ((\theta(a) = l)\wedge (l \not\in r)) \vee (t ~\globalConditionalSequencing{r}~ l)
\end{array}
\]
If $t ~\globalConditionalSequencing{r}~ l = \top$, we say that the trace $t$ has conflicts w.r.t. the lifeline $l$ in the region covered by $L \setminus r$ where $r \subseteq L$ is the concurrent region. By overloading the symbol $\globalConditionalSequencing{r}$, the set $t_1 ~\globalConditionalSequencing{r}~ t_2$ of conditional sequencing of $t_1$ and $t_2$ is defined by:
\[
\begin{array}{lll}
\varepsilon ~\globalConditionalSequencing{r}~ t_2 & =  & \{ t_2 \}\\
t_1 ~\globalConditionalSequencing{r}~ \varepsilon & =  & \{ t_1 \} \\
(a_1.t_1) ~\globalConditionalSequencing{r}~ (a_2.t_2) & = &
\{ a_1.t ~|~ t \in t_1 ~\globalConditionalSequencing{r}~ (a_2.t_2) \} \\
& & 
\cup \left\{ 
a_2.t
\middle|
\begin{array}{l}
t \in (a_1.t_1) ~\globalConditionalSequencing{r}~ t_2,\\
\neg (a_1.t_1 ~\globalConditionalSequencing{r}~ \theta(a_2))
\end{array}
\right\}
\end{array}
\]

This conditional sequencing operator entirely covers previous notions of weak sequencing and interleaving as we have $\globalInterleaving = \globalConditionalSequencing{L}$ and $\globalWeakSeq = \globalConditionalSequencing{\emptyset}$. Formal proofs of those statements are given in Appendix \ref{anx:proof_conditional_sequencing} which follows the structure of the Coq proof available in \cite{coq_hibou_label_eqsem_with_coregions}.

We denote by $\globalStrictSeq$ the concatenation (strict sequencing) operator on multi-traces, and by $\cup$ the set-theoretic union.

Then, in order to define the semantics of loops, we extend the Kleene star notation for our new scheduling operator.
For any $\diamond \in \{ \globalStrictSeq ,~\globalConditionalSequencing{r} \}$ and any set of traces $T$, the Kleene closure $T^{\diamond *}$ of $T$ is defined by $T^{\diamond *} = \bigcup_{\substack{j \in \mathbb{N}}} T^{\diamond j}$ 
with $T^{\diamond 0}= \{ \varepsilon \}$ and $T^{\diamond j} = T \diamond T^{\diamond (j-1)}$ for $j > 0$.

Finally, we define a denotational semantics $\rho : \mathbb{I} \rightarrow \mathcal{P}(\mathbb{T}_{|L})$ of interactions as a set of global traces - i.e. traces defined on the partition $C_t = \{L\}$ - associated to a specific interaction term:
\begin{itemize}
    \item $\rho(\varnothing) = \{\varepsilon\}$ and $\rho(a) = \{a\}$ for any action $a \in \mathbb{A}$
    \item and for any $i_1$ and $i_2$ in $\mathbb{I}$:
    \begin{itemize}
        \item $\rho(alt(i_1,i_2) = \rho(i_1) \cup \rho(i_2)$
        \item $\rho(strict(i_1,i_2)) = \rho(i_1) \globalStrictSeq \rho(i_2)$
        \item $\rho(coreg_r(i_1,i_2)) = \rho(i_1) \globalConditionalSequencing{r} \rho(i_2)$
        \item $\rho(loop_S(i_1)) = \rho(i_1)^{\globalStrictSeq *}$
        \item $\rho(loop_{C_r}(i_1)) = \rho(i_1)^{\globalConditionalSequencing{r} *}$
    \end{itemize}
\end{itemize}

The pruning and execution relations are then characterized w.r.t. this denotational formulation in Th.\ref{th:prune_denotational} and Th.\ref{th:exec_denotational}.

\begin{theorem}
\label{th:prune_denotational}
For any $L' \subseteq L$ and any $i$ and $i'$ from $\mathbb{I}$ we have:
\[
\begin{array}{ccc}
(i \isPruneOf{L'} i')
&
\Rightarrow 
&
(\rho(i') = \{ t \in \rho(i) ~|~ \forall~l\in L',~ \neg (t ~\globalConditionalSequencing{\emptyset}~ l) \})
\\
(i \isNotPruneOf{L'})&
\Rightarrow 
&
(\forall~t \in \rho(i),~\exists~l \in L',~ t ~\globalConditionalSequencing{\emptyset}~ l)
\end{array}
\]
\end{theorem}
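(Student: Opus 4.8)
The plan is to proceed by structural induction on the interaction term $i$, proving both implications simultaneously (they interact through the inductive cases for the scheduling operators and loops). The statement relates a syntactic, inductively-defined relation ($\isPruneBase$ / $\isNotPruneBase$) to a semantic characterization in terms of $\rho$, so the natural move is to follow the very same case analysis that Def.\ref{def:pruning} uses, and in each case invoke the matching clause in the definition of $\rho$. Let me abbreviate, for a set of traces $T$ and a subset $L' \subseteq L$, the ``restriction'' $T{\downarrow}_{L'} = \{ t \in T ~|~ \forall~l\in L',~ \neg (t ~\globalConditionalSequencing{\emptyset}~ l) \}$, so the first implication reads $(i \isPruneOf{L'} i') \Rightarrow \rho(i') = \rho(i){\downarrow}_{L'}$, and the second reads $(i \isNotPruneOf{L'}) \Rightarrow \rho(i){\downarrow}_{L'} = \emptyset$.

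\textbf{Base cases.} For $\varnothing$: $\rho(\varnothing) = \{\varepsilon\}$ and $\varepsilon$ has no conflict with any lifeline, so $\rho(\varnothing){\downarrow}_{L'} = \{\varepsilon\} = \rho(\varnothing)$, matching $\varnothing \isPruneOf{L'} \varnothing$; there is no $\isNotPruneBase$ rule for $\varnothing$. For an action $a$: if $\theta(a) \notin L'$ then $a \isPruneOf{L'} a$ and indeed the single trace $a$ has no conflict w.r.t. any $l \in L'$ (since the only lifeline appearing is $\theta(a) \notin L'$ and $\globalConditionalSequencing{\emptyset}$ counts a conflict on $l$ iff $l$ appears), so $\rho(a){\downarrow}_{L'} = \{a\} = \rho(a)$; if $\theta(a) \in L'$ then $a \isNotPruneOf{L'}$ and $a ~\globalConditionalSequencing{\emptyset}~ \theta(a) = \top$, so $\rho(a){\downarrow}_{L'} = \emptyset$.

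\textbf{Inductive cases.} For $alt(i_1,i_2)$, distribute using $\rho(alt(i_1,i_2)) = \rho(i_1)\cup\rho(i_2)$ and the obvious fact that ${\downarrow}_{L'}$ commutes with $\cup$; each of the three $alt$-rules (both prunable, neither prunable, exactly one prunable) then follows by applying the induction hypotheses to $i_1$ and $i_2$. For $strict(i_1,i_2)$ one needs the key lemma that a concatenation $t_1 \globalStrictSeq t_2$ has a conflict with $l$ (in the sense $\globalConditionalSequencing{\emptyset}$) iff $t_1$ or $t_2$ does; granting this, $(\rho(i_1)\globalStrictSeq\rho(i_2)){\downarrow}_{L'} = (\rho(i_1){\downarrow}_{L'}) \globalStrictSeq (\rho(i_2){\downarrow}_{L'})$, which handles the prunable rule, and the $\isNotPruneBase$ rule follows since if either factor's restriction is empty the product is empty. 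The case $coreg_r(i_1,i_2)$ is the crux: here $\rho(coreg_r(i_1,i_2)) = \rho(i_1) \globalConditionalSequencing{r} \rho(i_2)$, and since the pruning relation treats $coreg_r$ exactly like $strict$ (same two rules, via $f \in \{strict\}\cup\bigcup_r\{coreg_r\}$), I must show the analogous interleaving lemma: a trace $t \in t_1 \globalConditionalSequencing{r} t_2$ has a conflict with $l \in L'$ iff $t_1$ or $t_2$ does. This requires a sub-induction on the structure of the definition of $\globalConditionalSequencing{r}$, using that interleaving (even the conditional variant) preserves the multiset of actions of each operand, hence preserves which lifelines appear; the conflict predicate $\globalConditionalSequencing{\emptyset}$ only cares about which lifelines appear, so it is invariant under the reshuffling that $\globalConditionalSequencing{r}$ performs. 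The loop cases $loop_S$ and $loop_{C_r}$ reduce to the scheduling cases: $\rho(loop_k(i_1)) = \bigcup_j \rho(i_1)^{\diamond j}$, and by iterating the concatenation/interleaving lemma, $(\rho(i_1)^{\diamond j}){\downarrow}_{L'} = (\rho(i_1){\downarrow}_{L'})^{\diamond j}$; if $i_1 \isPruneOf{L'} i_1'$ then $loop_k(i_1) \isPruneOf{L'} loop_k(i_1')$ and the two Kleene closures agree, while if $i_1 \isNotPruneOf{L'}$ then $\rho(i_1){\downarrow}_{L'} = \emptyset$ so every power with $j \geq 1$ restricts to $\emptyset$ and only $j = 0$ survives, giving restriction $\{\varepsilon\} = \rho(\varnothing)$, matching $loop_k(i_1)\isPruneOf{L'}\varnothing$.

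\textbf{Main obstacle.} The real work is the invariance lemma for $\globalConditionalSequencing{r}$ (and its specialization for $\globalStrictSeq$): that shuffling two traces under conditional sequencing neither creates nor destroys occurrences of actions on any lifeline, hence preserves the $\globalConditionalSequencing{\emptyset}$-conflict status with respect to each lifeline in $L'$. Everything else is bookkeeping, but this lemma must be proven carefully by induction following the recursive clauses of $\globalConditionalSequencing{r}$, paying attention to the side condition $\neg(a_1.t_1 ~\globalConditionalSequencing{r}~ \theta(a_2))$ that gates when an action from the right operand may be advanced — one must check it does not cause any action to be dropped. (A self-contained treatment, or an appeal to the companion results in Appendix~\ref{anx:proof_conditional_sequencing}, would discharge it.)
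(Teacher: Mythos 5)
Your proposal is correct, and it follows the same basic strategy as the paper — a structural induction that mirrors the rules of Def.\ref{def:pruning} and, at each constructor, invokes the matching clause of $\rho$ — but it packages the work differently. The paper (Appendix~\ref{anx:proof_pruning}, mirroring the Coq development) never proves the set equality in one shot: it first proves the second implication separately (every trace of a non-prunable interaction has a conflict), then splits the first implication into three membership-level lemmas, namely that pruning introduces no new behaviors ($\rho(i')\subseteq\rho(i)$), that pruning removes conflicts (every $t\in\rho(i')$ is conflict-free on $L'$), and that pruning conserves conflict-free behaviors (every conflict-free $t\in\rho(i)$ is in $\rho(i')$), each by its own induction, after noting that $t~\globalConditionalSequencing{\emptyset}~l$ simply means that some action of $t$ occurs on $l$ (the conflict operator $\doubleVerticalTimes$). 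You instead prove the equality directly by one simultaneous induction, pushing the restriction $T{\downarrow}_{L'}$ through $\cup$, $\globalStrictSeq$, $\globalConditionalSequencing{r}$ and the Kleene closures; your single key lemma — restriction commutes with (conditional) sequencing because every $t\in t_1~\globalConditionalSequencing{r}~t_2$ is a shuffle carrying exactly the actions of $t_1$ and $t_2$, so its conflict status on each $l$ is determined by those of $t_1$ and $t_2$ — is precisely the fact the paper re-establishes trace-by-trace inside its $coreg$ and loop cases, and it does hold, since the side condition in the definition of $\globalConditionalSequencing{r}$ only filters which shuffles are admitted and never alters them. Your treatment of the delicate spots (the mixed $alt$ rule combining both induction hypotheses, and the loop rule where $i_1\isNotPruneOf{L'}$ leaves only the $j=0$ power, i.e. $\{\varepsilon\}=\rho(\varnothing)$) is right. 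What each approach buys: yours is more algebraic and economical (one commutation lemma, one induction), while the paper's finer decomposition yields lemmas that are individually reusable — indeed the same membership lemmas are invoked again in the proofs of Th.\ref{th:exec_denotational} and Th.\ref{th:equivalence_semantics} — and are better suited to mechanization. One small caveat: the companion results of Appendix~\ref{anx:proof_conditional_sequencing} ($\globalInterleaving=\globalConditionalSequencing{L}$, $\globalWeakSeq=\globalConditionalSequencing{\emptyset}$, and conflict $=$ conditional conflict) do not by themselves discharge your shuffle-invariance lemma, so the self-contained induction you sketch is the part that must actually be carried out.
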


\begin{proof}
A detailed proof is given in Appendix \ref{anx:proof_pruning} and corresponds to the Coq proof available in \cite{coq_hibou_label_eqsem_with_coregions}.
\end{proof}

Th.\ref{th:prune_denotational} states that transformations $i \isPruneOf{L'} i'$ characterize interactions $i'$ which specify behaviors that are exactly those specified by $i$ with no action occurring on a lifeline of $L'$.
By contrast, $i \isNotPruneOf{L'}$ stands for "all the behaviors of $i$ involve at least one action occurring on a lifeline of $L'$".
Let us observe that if we choose $L' = L$ in Def.\ref{def:pruning}, the expression $i \isPruneOf{L} \varnothing$ (resp. $i \isNotPruneOf{L}$) means that the interaction $i$ accepts (resp. does not accept) the empty multi-trace $\varepsilon_C$. We take advantage of this observation in Def.\ref{def:semantics} to provide a more compact presentation of the operational semantics than the one in \cite{equivalence_of_denotational_and_operational_semantics_for_interaction_languages}.

\begin{theorem}
\label{th:exec_denotational}
For any $i \in \mathbb{I}$, $a \in \mathbb{A}$ and $t \in \mathbb{T}_{|L}$:
\[
(a.t \in \rho(i))
\Leftrightarrow
\left(
\exists~i' \in \mathbb{I},~
\begin{array}{c}
(i \xrightarrow{a} i')
\wedge 
(t \in \rho(i'))
\end{array}
\right)
\]
\end{theorem}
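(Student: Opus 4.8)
The plan is to prove both directions of the equivalence by structural induction on the interaction term $i$, since both $\rho$ and the execution relation $\rightarrow$ are defined inductively on the term structure. The statement is about global traces (partition $C_t = \{L\}$), so throughout I can identify multi-traces with traces and $\multiAppendLeft$ with prepending an action; in particular $\rho(i) = \sigma_{C_t}(i)$ modulo this identification, which follows from Def.\ref{def:semantics} once Th.\ref{th:exec_denotational} is in hand — so care must be taken not to argue circularly, and I will work directly with $\rho$ as defined by the algebraic clauses.

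\textbf{($\Leftarrow$) Soundness of execution.} First I would handle the easier direction: assuming $i \xrightarrow{a@p} i'$ and $t \in \rho(i')$, show $a.t \in \rho(i)$. This goes by induction on the derivation of $i \xrightarrow{a@p} i'$, with one case per rule of Def.\ref{def:execution}. The base case $a \xrightarrow{a@\varepsilon} \varnothing$ is immediate since $\rho(\varnothing) = \{\varepsilon\}$ and $\rho(a) = \{a\}$. The $alt$ cases use $\rho(alt(i_1,i_2)) = \rho(i_1) \cup \rho(i_2)$ directly. The left-branch $strict$ and $coreg_r$ cases use the induction hypothesis on $i_1$ together with the definitions of $\globalStrictSeq$ and $\globalConditionalSequencing{r}$ (for $coreg_r$ I need that if $a.t_1 \in \rho(i_1) \globalConditionalSequencing{r} \rho(i_2)$ can be formed by taking the first action from the left, which the recursive clause for $\globalConditionalSequencing{r}$ permits unconditionally). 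The right-branch $strict$ case uses Th.\ref{th:prune_denotational}: from $i_1 \isPruneOf{L} \varnothing$ we get $\varepsilon \in \rho(i_1)$, hence $\rho(i_1) \globalStrictSeq \rho(i'_2) \ni \varepsilon.a.t = a.t$. The right-branch $coreg_r$ case is the delicate one: from $i_1 \isPruneOf{\{\theta(a)\}\setminus r} i'_1$, Th.\ref{th:prune_denotational} tells us $\rho(i'_1)$ consists exactly of the traces of $\rho(i_1)$ with no conflict on $\theta(a)$ relative to $L\setminus r$; I need to show that for any such $t_1 \in \rho(i'_1)$ and any $a.t \in \rho(i'_2)$, the trace $a.(t_1 \globalConditionalSequencing{r}\text{-shuffle}\ t)$ lies in $\rho(i_1)\globalConditionalSequencing{r}\rho(i_2)$ — i.e., $a$ can jump to the front of the merge precisely because $t_1$ has no $\globalConditionalSequencing{r}$-conflict with $\theta(a)$, which is exactly the side condition $\neg(a_1.t_1 \globalConditionalSequencing{r} \theta(a_2))$ in the definition of $\globalConditionalSequencing{r}$. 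The $loop_S$ case uses $\rho(i_1)^{\globalStrictSeq *} = \{\varepsilon\} \cup \rho(i_1)\globalStrictSeq\rho(i_1)^{\globalStrictSeq *}$ and the induction hypothesis. The $loop_{C_r}$ case combines the $coreg_r$ right-branch reasoning with the Kleene-closure unfolding, and is the technical heart of this direction.

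\textbf{($\Rightarrow$) Completeness of execution.} For the forward direction, assume $a.t \in \rho(i)$ and produce $i'$ with $i \xrightarrow{a@p} i'$ and $t \in \rho(i')$. This proceeds by structural induction on $i$, inspecting which algebraic clause witnesses $a.t \in \rho(i)$. For $i = \varnothing$ the hypothesis is vacuous; for $i = a$ we have $t = \varepsilon$ and take $i' = \varnothing$. For $alt$, the word $a.t$ comes from one branch and we apply the induction hypothesis to it and use the matching $alt$ rule. For $strict(i_1,i_2)$: $a.t = t_1 \globalStrictSeq t_2$ with $t_1 \in \rho(i_1)$, $t_2 \in \rho(i_2)$; either $t_1 = \varepsilon$ (then $a.t \in \rho(i_2)$, apply IH on $i_2$, and use Th.\ref{th:prune_denotational} in the direction needing $\varepsilon \in \rho(i_1) \Rightarrow i_1 \isPruneOf{L}\varnothing$ — which requires the converse of Th.\ref{th:prune_denotational}, so I should either prove that converse as a lemma or, better, note that the pruning relation is defined so that $i \isPruneOf{L}\varnothing$ holds iff $\varepsilon \in \rho(i)$, provable by a short structural induction) or $t_1 = a.t_1'$ (apply IH on $i_1$, use the left $strict$ rule). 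For $coreg_r(i_1,i_2)$: from the recursive definition of $\globalConditionalSequencing{r}$, the leading action $a$ of $a.t$ comes either from $\rho(i_1)$ (symmetric to the $strict$ left case via the left $coreg_r$ rule) or from $\rho(i_2)$ under the side condition that the $i_1$-trace $t_1$ has no $\globalConditionalSequencing{r}$-conflict with $\theta(a)$; in the latter case $t_1$ is prunable w.r.t. $\{\theta(a)\}\setminus r$, giving $i_1 \isPruneOf{\{\theta(a)\}\setminus r} i_1'$ with $t_1 \in \rho(i_1')$ (again using the converse of Th.\ref{th:prune_denotational}), apply IH on $i_2$, and assemble via the right $coreg_r$ rule. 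The $loop$ cases unfold the Kleene closure: $a.t$ must come from $T^{\diamond j}$ with $j \geq 1$, i.e. from $\rho(i_1) \diamond \rho(i_1)^{\diamond(j-1)}$, and we recurse. For $loop_{C_r}$ this needs the observation that the leading action may be drawn from $\rho(i_1)$ while the remainder of $\rho(i_1)$ (after $a$ is removed) must be reconciled via $\globalConditionalSequencing{r}$ against both the pruned copy of the loop and the loop itself — matching precisely the shape $coreg_r(i', coreg_r(i_1', loop_{C_r}(i_1)))$ in the rule.

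\textbf{Main obstacle.} The hard part will be the $coreg_r$ and $loop_{C_r}$ cases, in both directions. Specifically: (i) matching the operational side condition $i_1 \isPruneOf{\{\theta(a)\}\setminus r} i_1'$ against the denotational side condition $\neg(t_1 \globalConditionalSequencing{r} \theta(a))$ requires a clean lemma stating that a single trace $t_1$ satisfies $\neg(t_1\globalConditionalSequencing{r}l)$ iff $t_1$ appears in $\rho(i_1')$ for the $i_1'$ obtained by pruning w.r.t. $\{l\}\setminus r$ — and this rests on the precise content of Th.\ref{th:prune_denotational} together with its converse; and (ii) the $loop_{C_r}$ rule introduces a three-layer $coreg_r$ nesting whose denotation must be shown equal (at the level of the relevant traces) to one unfolding $\rho(i_1)^{\globalConditionalSequencing{r}*} = \rho(i_1) \globalConditionalSequencing{r} \rho(i_1)^{\globalConditionalSequencing{r}*}$ combined with the pruning — this is essentially a weak-sequencing associativity/commutation argument, which is notoriously fiddly. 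I would isolate the needed facts about $\globalConditionalSequencing{r}$ (associativity, the "first action can come from either side subject to the conflict predicate" decomposition, and compatibility with pruning) as standalone lemmas before attacking the two loop cases, and I would lean on the Coq development of \cite{coq_hibou_label_eqsem_with_coregions} to keep the bookkeeping honest.
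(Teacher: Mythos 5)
Your proposal matches the paper's own proof: the appendix establishes exactly your two directions, proving the $(\Leftarrow)$ implication by induction on the derivation of $i \xrightarrow{a} i'$ and the $(\Rightarrow)$ implication by structural induction on $i$, in both cases leaning on the pruning characterization of Th.\ref{th:prune_denotational} together with its converse-style lemmas (that $\varepsilon \in \rho(i)$ yields a pruning of $i$ w.r.t.\ $L$, and that pruning conserves conflict-free traces), which are precisely the auxiliary facts you identify. Your flagged difficulties (the $coreg_r$ and $loop_{C_r}$ cases and the lemmas relating $\globalConditionalSequencing{r}$-conflict-freeness to pruning) are exactly where the paper's Coq development does the corresponding work, so the approach is essentially the same.
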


\begin{proof}
A detailed proof is given in Appendix \ref{anx:proof_execution} and corresponds to the Coq proof available in \cite{coq_hibou_label_eqsem_with_coregions}.
\end{proof}

Th.\ref{th:exec_denotational} characterizes the execution relation $\rightarrow$ w.r.t. the denotational-style semantics $\rho$. It states that the follow ups $i'$ s.t. $i \xrightarrow{a@p} i'$ indeed specify all the continuations of the behaviors specified by $i$.

Finally, in Th.\ref{th:equivalence_semantics} we justify the correctness of our operational-style semantics  from Def.\ref{def:semantics} for the particular case of the partition $C_t = \{L\}$ w.r.t. the denotational semantics $\rho$ inspired by \cite{uml_interactions_meet_state_machines_an_institutional_approach}.

\begin{theorem}
\label{th:equivalence_semantics}
For any $i \in \mathbb{{I}}$, we have $\sigma_{C_t}(i) = \rho(i)$
\end{theorem}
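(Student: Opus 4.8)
The plan is to establish the two inclusions $\sigma_{C_t}(i) \subseteq \rho(i)$ and $\rho(i) \subseteq \sigma_{C_t}(i)$ separately, in each direction using Th.\ref{th:prune_denotational} and Th.\ref{th:exec_denotational} as black boxes. First I would unfold what Def.\ref{def:semantics} says for the trivial partition: since $C_t = \{L\}$ we have $\mathbb{M}_{C_t} = \mathbb{T}_{|L}$, the empty multi-trace $\varepsilon_{C_t}$ is the empty trace $\varepsilon$, and $a ~\multiAppendLeft~ \mu$ is just the prepending $a.\mu$; moreover, following the convention of the text, $i \xrightarrow{a} i'$ abbreviates ``$\exists~p,~ i \xrightarrow{a@p} i'$''. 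So $\sigma_{C_t}(i)$ is the least set of traces with $\varepsilon \in \sigma_{C_t}(i)$ whenever $i \isPruneOf{L} i'$ for some $i'$, and $a.t \in \sigma_{C_t}(i)$ whenever $i \xrightarrow{a} i'$ and $t \in \sigma_{C_t}(i')$.

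For $\sigma_{C_t}(i) \subseteq \rho(i)$ I would use rule induction on the derivation of $\mu \in \sigma_{C_t}(i)$. If $\mu = \varepsilon$ comes from the first rule, then $i \isPruneOf{L} i'$ for some $i'$; by Th.\ref{th:prune_denotational}, $\rho(i') = \{t \in \rho(i) ~|~ \forall~l \in L,~ \neg(t ~\globalConditionalSequencing{\emptyset}~ l)\}$, and as every action occurs on some lifeline of $L$ this set equals $\rho(i) \cap \{\varepsilon\}$; since $\rho(i') \subseteq \{\varepsilon\}$ is nonempty (either by a one-line structural induction showing $\rho$ of any interaction is nonempty, or by a quick induction on the pruning derivation showing $\varepsilon \in \rho(i')$), we get $\varepsilon \in \rho(i)$. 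If $\mu = a.\mu'$ comes from the second rule, then $i \xrightarrow{a} i'$ and $\mu' \in \sigma_{C_t}(i')$; the induction hypothesis gives $\mu' \in \rho(i')$, and the ``$\Leftarrow$'' part of Th.\ref{th:exec_denotational} then gives $a.\mu' \in \rho(i)$.

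For $\rho(i) \subseteq \sigma_{C_t}(i)$ I would induct on the length of $t \in \rho(i)$. If $t = \varepsilon$: a structural induction on $i$ shows that pruning is total, i.e. for every $L'$ either $i \isPruneOf{L'} i''$ for some $i''$ or $i \isNotPruneOf{L'}$ (the $\varnothing$, action, $alt$, $strict$/$coreg$ and $loop$ clauses of Def.\ref{def:pruning} between them cover every combination of the sub-cases). Were $i \isNotPruneOf{L}$ to hold, Th.\ref{th:prune_denotational} would give some $l \in L$ with $\varepsilon ~\globalConditionalSequencing{\emptyset}~ l$, contradicting $\varepsilon ~\globalConditionalSequencing{\emptyset}~ l = \bot$; hence $i \isPruneOf{L} i''$ for some $i''$, and the first rule of Def.\ref{def:semantics} yields $\varepsilon \in \sigma_{C_t}(i)$. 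If $t = a.t'$: the ``$\Rightarrow$'' part of Th.\ref{th:exec_denotational} gives $i'$ with $i \xrightarrow{a} i'$ and $t' \in \rho(i')$; since $|t'| < |t|$, the induction hypothesis gives $t' \in \sigma_{C_t}(i')$, and the second rule of Def.\ref{def:semantics} yields $a.t' \in \sigma_{C_t}(i)$.

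Essentially all the difficulty of the statement is already absorbed into Th.\ref{th:prune_denotational} and Th.\ref{th:exec_denotational}; the only fresh ingredients are the two routine structural inductions (nonemptiness of $\rho$ and totality of the pruning relation), and neither should present real trouble. The one place that calls for a bit of attention is the handling of the empty multi-trace: one needs that ``$i \isPruneOf{L} i'$ for some $i'$'' is equivalent to $\varepsilon \in \rho(i)$, which is exactly where those two auxiliary facts are used, and the position annotations $@p$ play no role here because $i \xrightarrow{a} i'$ is defined as the $p$-erasure of $i \xrightarrow{a@p} i'$.
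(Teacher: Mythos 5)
Your proposal is correct and follows essentially the same route as the paper: both inclusions are proved by induction on the trace (or on the derivation), with Th.\ref{th:exec_denotational} discharging the non-empty case exactly as you do. The only divergence is the empty-trace case, where the paper proves two dedicated lemmas ($i \isPruneOf{L} i' \Rightarrow \varepsilon \in \rho(i)$ and its converse) directly by structural induction on $i$, whereas you recover the same facts from Th.\ref{th:prune_denotational} with $L'=L$ plus nonemptiness of $\rho$ and totality of the pruning relations — both routine structural inductions, so your variant is sound and in fact mirrors the main text's remark that the theorem is ``implied by'' Th.\ref{th:prune_denotational} and Th.\ref{th:exec_denotational} more literally.
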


\begin{proof}
Implied by Th.\ref{th:prune_denotational} (for $L'=L$) and by Th.\ref{th:exec_denotational}. See Appendix \ref{anx:proof_eqsem} and the Coq proof in \cite{coq_hibou_label_eqsem_with_coregions}.
\end{proof}

Th.~\ref{th:equivalence_semantics} states that both definitions of $\rho$ and $\sigma_{C_t}$ coincide on the trivial partition $C_t = \{L\}$ which preserves the most information on partial orders between events. For other partitions $C \in \partitionsOf{L}$, the analogy of denotational and semantic semantics is obtained by observing that $\sigma_{C}(i)$ corresponds to a projection of $\sigma_{C_t}(i)$ from finer to coarser multi-traces, and thus corresponds also to the same projection applied on $\rho(i)$.

\subsection{Application to multi-trace analysis}

Accepted multi-traces of a certain interaction $i$ are defined via their semantics $\sigma_C(i)$ in Def.\ref{def:semantics}. If, in a practical setting, a multi-trace $\mu$ is observed during a system execution, the conformance of $\mu$ to $i$ can be brought back to a problem of membership as in \cite{a_small_step_approach_to_multi_trace_checking_against_interactions} i.e. verifying whether or not $\mu \in \sigma_C(i)$.

\begin{figure*}[ht]
    \centering

\scalebox{.5}{\input{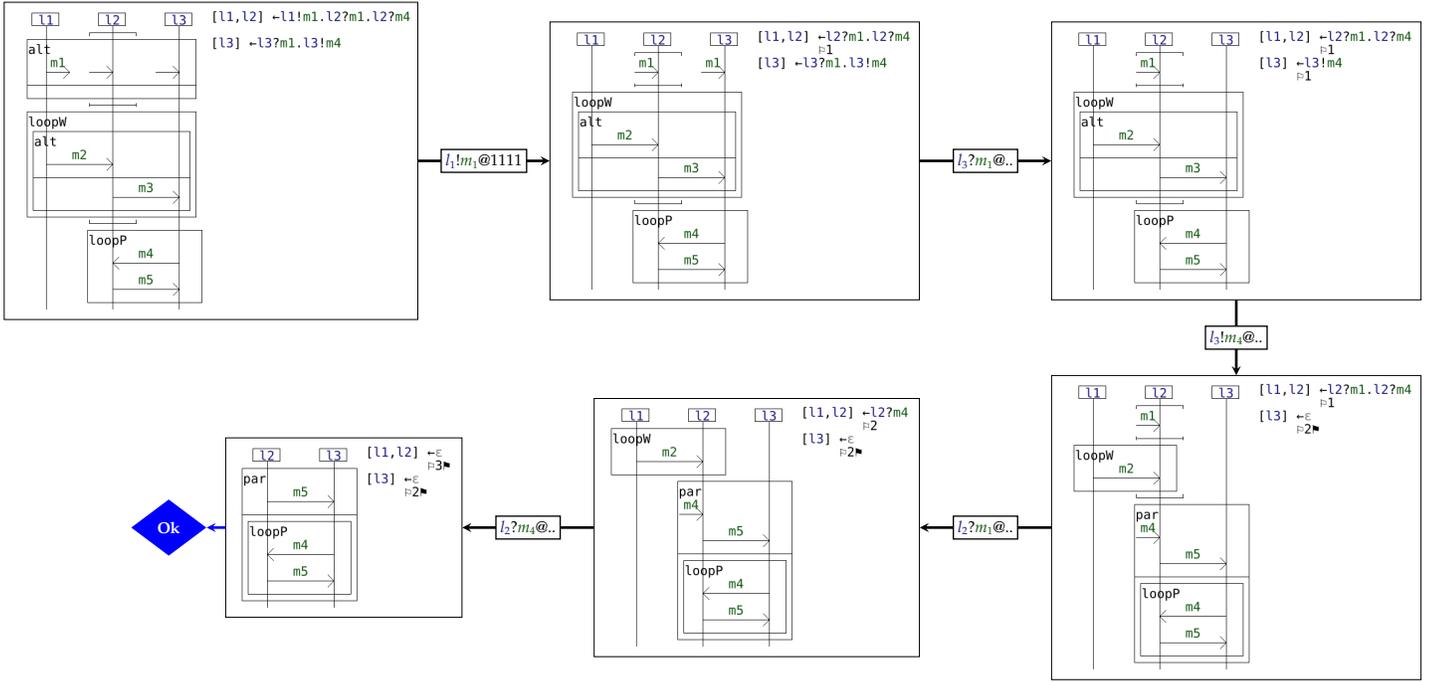}}

    \caption{Illustrating multi-trace analysis {\scriptsize algorithm adapted (to co-localizations) from \cite{a_small_step_approach_to_multi_trace_checking_against_interactions}}.}
    \label{fig:previous_algo}
\end{figure*}

In \cite{revisiting_semantics_of_interactions_for_trace_validity_analysis}, we have proposed an algorithm for analyzing traces w.r.t. interactions, which corresponds to a single co-localization in our present framework, i.e. with the trivial partition $C_t = \{L\}$. These analyses determine whether or not a behavior given as a trace is accepted. The principle of the algorithm is to consider the first element $a_1$ of the trace $t$ to be analyzed (therefore of the form $a_1.t'$), to execute it in the reference interaction $i_0$ and to remove it from the trace. This allows us to start again with all the interactions $i_1$ verifying $i_0 \xrightarrow{a_1@p} i_1$ for (possibly several) positions $p$ and the remaining trace $t'$.
If the original trace $t$ of length $n$ can be emptied (via $n$ such steps), then it means that it is accepted by the original interaction $i_0$ iff $\varepsilon$ is accepted by the last interaction $i_n$. If this is not possible then this means that the behavior $t$ deviates from $i_0$.

This principle can be directly adapted to multi-traces as demonstrated with the algorithm defined for the discrete partition $C_d = \{\{l\} ~|~ l \in L\}$ in \cite{a_small_step_approach_to_multi_trace_checking_against_interactions}. 
The latter algorithm can be easily extended to consider any partition $C \in \partitionsOf{L}$ and to consider global prefixes. Let us illustrate this with our running example i.e. let us analyze the multi-trace from Fig.\ref{fig:mu_coloc2clocks} w.r.t. the interaction from Fig.\ref{fig:interaction_example} which semantics we have illustrated in Fig.\ref{fig:followup}.

The application of the algorithm is represented on Fig.\ref{fig:previous_algo}.
In order to analyze the multi-trace, we try to reconstruct a global behavior (global trace) from the execution tree (e.g. Fig.\ref{fig:followup}) of the interaction which can be projected into the multi-trace.
To that end, we use the execution relation $\rightarrow$ from Def.\ref{def:execution} (the operational semantics).
If an action is at the beginning of one of the multi-trace's local traces, and if it is immediately executable in the interaction model, the algorithm performs a step in which it both consumes it from the multi-trace and executes it in the interaction.
This enables one to replay a behavior characterized by the multi-trace in the model. If it is possible then it means that the multi-trace satisfies the specifying interaction. Otherwise this means that the multi-trace violates it.
The analysis itself is represented graphically under the form of a graph on Fig.\ref{fig:previous_algo}.
Every node on this graph contains both an interaction (on the left) and a multi-trace (on the right), the initial node containing the initial interaction $i_0$ (Fig.\ref{fig:interaction_example}) and multi-trace $\mu_0$ (Fig.\ref{fig:mu_coloc2clocks}). As a visual aid, a \faFlagO~ (white flag) symbol appears under components $c \in C$ of the multi-trace whenever, at this moment in the reproduction of the behavior in the model, observation has started on $c$. The number following the flag then represents the number of actions which have been observed. A \faFlag~ (black flag) symbol marks the end of observation.

\begin{figure}[h]
    \centering
    
    \begin{subfigure}[t]{\linewidth}
        \centering
        $\shortColGrey{l_1!m_1}.\hlf{l_3}?\hms{m_1}.\shortColGrey{l_2?m_1.l_3!m_4}.\hlf{l_2}?\hms{m_4}$
        \caption{Missing actions in the multi-trace from Fig.\ref{fig:mu_coloc_partial} transposed in the global scenario from Fig.\ref{fig:mu_global1clock}\label{fig:missing_actions}}
    \end{subfigure}

    \vspace*{.5cm}
    
    \begin{subfigure}[t]{\linewidth}
        \centering
        \scalebox{.5}{\input{figures/3_4_limit_ana/graph}}
        \caption{The analysis yields a $Fail$ because of unobserved actions\label{fig:limits_ana}}
    \end{subfigure}
    
    \caption{Limitation of the approach from \cite{a_small_step_approach_to_multi_trace_checking_against_interactions} under partial observation.}
    \label{fig:limits}
\end{figure}

However, this general principle is insufficient for analyzing multi-traces in the case of partial observability.
In this context, {\em partial observation} signifies that the multi-trace logged by the instrumentation does not characterize the entire execution of the DS. More concretely, some events may be missing from the multi-trace w.r.t. an ideal multi-trace which would have been observed with ideal conditions of observation. The notion of {\em multi-trace slice} from Sec.\ref{ssec:slices} proposes a specific definition of partial observation, where events may be missing at the beginning and/or the end of each local trace component of the multi-trace (independently).

As a means to understand how partial observation is challenging for RV we consider the example from Fig.\ref{fig:mu_coloc_partial}, which is a partial observation of the multi-trace from Fig.\ref{fig:mu_coloc2clocks}. 
If we were to reorder actions globally, this observation could be described as in Fig.\ref{fig:missing_actions}. 
Missing actions, i.e. actions that are not observed by the instrumentation, are not necessarily at the beginning or the end globally but there may be several sub-words missing from the global trace (here $\shortColGrey{l_1!m_1}$ and $\shortColGrey{l_2?m_1.l_3!m_4}$ inserted in light gray in the global trace given in Fig.\ref{fig:missing_actions}).
The fact that such missing actions may be located anywhere in a globally sequential behavior (here the broadcast of $\hms{m_1}$ from $\hlf{l_1}$ to $\hlf{l_2}$ and $\hlf{l_3}$ followed by the passing of $\hms{m_4}$ from $\hlf{l_3}$ to $\hlf{l_2}$) is particularly challenging for ORV.
Using the algorithm from \cite{a_small_step_approach_to_multi_trace_checking_against_interactions} this would yield a $Fail$ verdict - as illustrated on Fig.\ref{fig:limits_ana} - because this algorithm cannot differentiate between the system going out of specification and it being partially observed.

This motivates the definition of an ORV algorithm that is tolerant to partial observation, which is the object of the next section.

\section{ORV algorithm with bounded simulation\label{sec:algo}}

As we have seen with the example from Fig.\ref{fig:mu_coloc_partial} and Fig.\ref{fig:limits}, DS executions can be partially observed due to issues of synchronization between local observers. As a result, a correct behavior may be observed as a slice $\mu' \in \sliceOf{\mu}$ of an accepted multi-trace $\mu \in \sigma_C(i)$ with missing elements at the beginning and/or the end of the traces corresponding to co-localizations of $C$. Then, because we might have $\mu' \not\in \sigma_C(i)$, membership is not enough to verify conformance. The property which we have to verify is rather whether or not $\mu' \in \sliceOf{\sigma_C(i)}$.

Simulation is a straightforward answer to partial observation in so far as actions missing from the multi-trace may simply be simulated in the model.
In order to identify a slice $\mu'$ of an accepted multi-trace, we may simulate actions $a$ that occur either before the first action of the corresponding component $\mu'_{|\theta_C(a)}$ or after its last action i.e. outside of the period of (continuous) observation of the component.
Simulating such actions hopefully enables the consumption of further actions in the multi-trace (in the same component or any other).
The approach that we propose is optimistic in so far as that it suffices that there exists some missing actions that can be be simulated which explain the observed behavior. In other words, this means that if, during an execution, what we observe of it can be explained by a behavior without violations of the specification, then, it is accepted even though a violation might have happened in some unoberseved part of the behavior.

Simulation explores possible missing actions that could have been executed in order to explain the behavior observed via the multi-trace w.r.t. the specifying interaction.
However, the presence of loops in interaction models makes it possible to simulate arbitrarily many actions, making a naive simulation-based algorithm non-terminating.
As a practical solution, we should bind simulation up to a certain criterion so that we can hope to find missing actions within a finite search space.

Defining pertinent stopping criteria on simulation being no trivial matter, we firstly formalize a simulation-based algorithm using an abstract criterion.

\subsection{Algorithm initialization\label{ssec:orv_prelim}}
\label{sec:algo-init}

Our algorithm relies on two mechanisms: one for executing actions and consuming them from the multi-trace and one for simulating actions without a corresponding consumption. It then consists in exploring a graph to find possible explanations of an observed multi-trace.
From a finite number of starting nodes, the mechanism of execution can only yield a finite number of steps given that the multi-trace is finite (in terms of number of actions) and, for each such action, there can only be finitely many manners to interpret it in the interaction term (see \cite{a_small_step_approach_to_multi_trace_checking_against_interactions}). 
However, this is not the case for the mechanism of simulation. 
In order for the algorithm to be terminating, we  have to limit the simulation steps using a  criterion.

We define a generic algorithm which relies on three notions:
\begin{itemize}
    \item flags defined by a function $\gamma$ which aim is to keep track of whether or not observation has started on the co-localizations $c \in C$;
    \item a space of measures $\mathbb{J}$ fitted with a strict order relation $<_\mathbb{J}$ which can be parameterized and serves as a means to limit the number of simulation steps;
    \item an initialization function $\kappa$ which sets the initial value of the measure at the start of the algorithm process.
\end{itemize}

\subsubsection*{Flags $\gamma$}

In the algorithm, execution steps and simulation steps may be interleaved.
Execution steps can be taken at any moment, provided that there is a match between an action at the beginning of the multi-trace and an immediately executable action of the interaction. However, an additional condition is required for applying a simulation step.
Indeed, the goal of simulation is to reconstruct parts of behaviors that were not observed at the beginning and the end of the period of observation on a given co-localization. As a result, we need an additional condition to ascertain that we are outside this period of observation.

To that end, we define $\gamma : C \rightarrow \mathbb{B}$ (where $\mathbb{B} = \{\bot,\top\}$ is the usual set of Boolean values) such that for any $c \in C$, $\gamma(c) = \bot$ if observation has not started and $\gamma(c) = \top$ if it has. We denote by $\gamma_\bot$ the case where observation has not started on any co-localization i.e. $\forall c \in C$, $\gamma_\bot(c) = \bot$.
For any $\gamma \in \mathbb{B}^C$ and $c \in C$, we denote by $\gamma + c$ the function s.t. $\forall~ c' \in C\setminus \{c\}$, $(\gamma + c)(c') = \gamma(c')$ and $(\gamma + c)(c) = \top$. We use this notation to update the observation status on $c$ (i.e. that it has started).

\subsubsection*{Parameterizable measure $\mathbb{J}$}

In the sequel, we consider: 
\begin{itemize}
    \item  a set $\mathbb{J}$ of measures fitted with a strict order relation $<_\mathbb{J}$ which admits no infinite descending chains, i.e. which admits no infinite sequences $(j_i)_{\in \mathbb{N}}$ of elements in $\mathbb{J}$ verifying $\forall i \in \mathbb{N}, j_{i+1} < j_i$;
    \item and a relation $\measureDecrements{} \subseteq \mathbb{J} \times (\mathbb{I} \times \{1,2\}^*) \times \mathbb{J}$ verifying that for any $j,j' \in \mathbb{J}$, $i \in \mathbb{I}$ and $p \in \{1,2\}^*$, if $j \measureDecrements{i,p} j'$ then $j'<_\mathbb{J} j$. We note $j \measureNotDecrements{i,p}$ if there does not exist $j'$ s.t. $j \measureDecrements{i,p} j'$.
\end{itemize}

From a pair $(i,j)$ where $i$ is an interaction and $j$ a measure, given an action at position $p$ in $i$, if $j \measureNotDecrements{i,p}$ then we cannot simulate this action starting form $i$. 
If however there exists $j'$ s.t. $j \measureDecrements{i,p} j'$ then we may simulate it and, given $i \xrightarrow{a@p} i'$ with $a$ action at position $p$ in $i$, reach the pair $(i',j')$. 
In this manner, we can bind simulation to a strictly decreasing measure. In the context of our ORV algorithm, and because $(\mathbb{J},<_\mathbb{J})$ has no infinite descending chain, this will imply that, in any run of our algorithm, there can only be finitely many consecutive steps of simulation.

\begin{example}
\label{ex:measure}
For instance, we can consider $\mathbb{J} = \mathbb{N}$ (positive integers), $<$ the classical inequality on positive integers, and the relation $\measureDecrements{}$ s.t. for any $j \in \mathbb{N}$, any $i \in \mathbb{I}$ and $p$ s.t. $\exists i' \in \mathbb{I}$ s.t. $i \xrightarrow{i_{|p}} i'$ we have $j \measureDecrements{i,p} j-1$. With this example, we decrement by $1$ each time we simulate any action.
\end{example}

\subsubsection*{Parameterizable measure (re-)initialization $\kappa$}

The measure being now defined, it still needs to be initialized. For that, we consider any arbitrary function $\kappa : \mathbb{I} \rightarrow \mathbb{J}$. Because consecutive sequences of unobserved actions can occur in between observed actions, whenever an action is executed and consumed in the multi-trace instead of being simulated, we may reset the measure to $\kappa(i')$ given the execution $i \xrightarrow{a@p} i'$.

\begin{example}
\label{ex:kappa}
For instance, following Ex.\ref{ex:measure}, we can consider $\kappa$ s.t. for any $i \in \mathbb{I}$ we have 
$\kappa(i) = 5$. 
With this example criterion, we may simulate successively a number of actions which is at most $5$.
\end{example}

In practice, however, so as to be able to define a wide variety of criteria, the definition of $\kappa$ may also depend on other variables (besides the interaction $i$) such as the (size of the) multi-trace $\mu$ which is analyzed, the current state of the flags $\gamma$ etc.
This is reflected in the tool implementation. However, for the sake of simplicity, we consider $\mathbb{I} \rightarrow \mathbb{J}$ as the  signature of $\kappa$.

\subsection{Analysis graph\label{ssec:orv_algo}}

Let us define a directed search graph $\mathbb{G}$ which set of vertices is $\mathbb{V} = (\mathbb{I} \times \mathbb{M}_{C} \times \mathbb{B}^C \times \mathbb{J}) \cup \{\verdictOk,\verdictKo\}$ i.e. which vertices are:
\begin{itemize}
    \item either of the form $(i,\mu,\gamma,j)$ with $i$ an interaction, $\mu$ a multi-trace, $\gamma$ a flag and $j$ a measure (the latter two defined as in Section~\ref{sec:algo-init});
    \item or can be one of the two verdicts $\verdictOk$ or $\verdictKo$.
\end{itemize}     
The arcs of $\mathbb{G}$ are defined using 4 rules: $\rulePass$ (for "pass verdict"), $\ruleFail$ (for "fail verdict"), $\ruleExec$ (for "execute") and $\ruleSimu$ (for "simulate") which are defined in Def.\ref{def:orv_rules}.

\begin{definition}\label{def:orv_rules}
The graph $\mathbb{G} = (\mathbb{V},\leadsto)$ is defined by:\\
\noindent $-$ the set $\mathbb{V} = (\mathbb{I} \times \mathbb{M}_{C} \times \mathbb{B}^C \times \mathbb{J}) \cup \{\verdictOk,\verdictKo\}$\\
\noindent $-$ the relation $\leadsto \subseteq \mathbb{V} \times \mathbb{V}$ s.t. $\forall~v,v' \in \mathbb{V}$, $v \leadsto v'$ holds iff it can be derived by applying\footnote{the notation $(R) \frac{H}{ v \leadsto v'}$ signifies that $v \leadsto v'$ can be inferred by applying rule $R$ if we suppose that hypothesis $H$ holds.} the four following rules:


{
\small

\begin{scprooftree}{.9}
\AxiomC{}
\LeftLabel{($\rulePass$)}
\RightLabel{}
\UnaryInfC{$(i,\varepsilon_C,\gamma,j) \leadsto \verdictOk$}
\end{scprooftree}

\begin{scprooftree}{.9}
\AxiomC{$\exists~ p ~s.t.~ i \xrightarrow{a@p} i'$}
\LeftLabel{($\ruleExec$)}
\UnaryInfC{$
(i,a  ~\multiAppendLeft~ \mu,\gamma,j)
\leadsto  
(i',\mu,\gamma + \theta_C(a),\kappa(i')) 
$}
\end{scprooftree}

\begin{scprooftree}{.9}
\AxiomC{
$
(\mu \neq \varepsilon_C)
\wedge
\left(
\exists~ a,p ~s.t.~
    \left|
    \begin{array}{l}
        (i \xrightarrow{a@p} i')\\
        \wedge (j \measureDecrements{i,p} j')\\
        \wedge 
            \left(
                \begin{array}{l}
                (\gamma(\theta_C(a)) = \bot)\\
                \vee (\mu_{|\theta_C(a)} = \varepsilon)
                \end{array}
            \right)
    \end{array}
    \right.
\right)
$
}
\LeftLabel{($\ruleSimu$)}
\UnaryInfC{$
(i,\mu,\gamma,j) 
\leadsto 
(i', \mu , \gamma , j')$}
\end{scprooftree}

\begin{scprooftree}{.9}
\AxiomC{
$
(\mu \neq \varepsilon_C)
\wedge
\left(
\left(
\begin{array}{c}
\forall~ a,p,i' ~s.t.\\
(i \xrightarrow{a@p} i')
\end{array}
\right)
    \left|
    \begin{array}{l}
        (\not\exists~\mu'~s.t.~ \mu = a ~\multiAppendLeft~\mu')\\
        \wedge
        \left(
            \begin{array}{l}
            (j \measureNotDecrements{i,p})\\
            \vee 
                \left(
                \begin{array}{l}
                    (\gamma(\theta_C(a)) = \top)\\
                    \wedge (\mu_{|\theta_C(a)} \neq \varepsilon)
                \end{array}
                \right)
            \end{array}
        \right)
    \end{array}
    \right.
\right)
$
}
\LeftLabel{($\ruleFail$)}
\UnaryInfC{$
(i,\mu,\gamma,j)
\leadsto
\verdictKo
$}
\end{scprooftree}
}
where $i$ and $i'$ are interactions, $\mu$ and $\mu'$ are multi-traces, $\gamma$ is a flag, $j$ and $j'$ are measures, $a$ is an action and $p$ is a position. 
\end{definition}

Rules $\rulePass$ and $\ruleFail$ define edges from nodes of the form $(i,\mu,\gamma,j)$ to verdicts (i.e. resp. $\verdictOk$ and $\verdictKo$).
Their conditions of application are exclusive to that of all the other rules which imply that if $v \leadsto \verdictOk$ (or $v \leadsto \verdictKo$) then there is no other edge originating from $v$.
\begin{itemize}
    \item Given that the objective of the algorithm is to recognize slices of accepted multi-traces and because the empty multi-trace $\varepsilon_C$ is a slice of any other multi-trace, rule $\rulePass$ (for "pass") yields a $\verdictOk$ verdict.
    \item $\ruleFail$ (for "fail") yields a $\verdictKo$ verdict if the multi-trace is not empty and if it is not possible to apply any of the other two rules $\ruleExec$ or $\ruleSimu$.
\end{itemize}

Based on the machinery of execution $i \xrightarrow{a@p} i'$ (Def.\ref{def:execution}), rules $\ruleExec$ and $\ruleSimu$ specify edges of $\mathbb{G}$ of the form $(i,\mu,\gamma,j) \leadsto (i',\mu',\gamma',j')$ in which an action occurs: 
\begin{itemize}
    \item the application of $\ruleExec$ corresponds to the identification of an action $a$ which can simultaneously be consumed at the head of a component of $\mu$ and be executed from $i$. When rule $\ruleExec$ applies, $\gamma$ is updated into $\gamma + \theta_C(a)$ to reflect that observation has started on the co-localization $\theta_C(a)$ on which action $a$ has been observed.
    \item with $\ruleSimu$ the action is simulated in the interaction without a corresponding consumption in the multi-trace. Action $a$ can be simulated if and only if at this moment in the global scenario we are outside of the period of observation on the corresponding co-localization $\theta_C(a)$ i.e. it has either not started ($\gamma(\theta_C(a))=\bot$) or has already finished ($\mu_{|\theta_C(a)} = \varepsilon$).
\end{itemize}
Note that the conditions of application of $\ruleExec$ and $\ruleSimu$ are not mutually exclusive. The same action $a$ may be either executed or simulated in case observation has not yet started on $\theta_C(a)$. This allows considering any missing prefix of the multi-trace.

\begin{theorem}[Finite reachable sub-graph]
\label{th:finite_search_space}
From any vertex $v=(i,\mu,\gamma,j)$, the sub-graph of $\mathbb{G}$ reachable from $v$ is finite.
\end{theorem}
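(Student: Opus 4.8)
The plan is to bound the three quantities that evolve along any path out of $v = (i,\mu,\gamma,j)$ and show that each can only move in a way that terminates. Concretely, I would argue by König's lemma: the reachable sub-graph is finitely branching (each node $(i,\mu,\gamma,j)$ has only finitely many successors, since there are finitely many pairs $(a,p)$ with $i \xrightarrow{a@p} i'$, and the verdicts have no successors), so it suffices to show there is no infinite path starting at $v$. I would then exhibit a well-founded measure on vertices that strictly decreases along every edge of $\mathbb{G}$.

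\textbf{Key steps.} First I would observe that the multi-trace component is non-increasing: rule $\ruleExec$ strictly decreases $|\mu|$ (the total number of actions in $\mu$), while $\ruleSimu$ leaves $\mu$ unchanged, and $\rulePass$/$\ruleFail$ lead to sinks. So along any infinite path, after finitely many $\ruleExec$ steps the multi-trace size stabilizes and only $\ruleSimu$ steps remain. Second, I would note that consecutive $\ruleSimu$ steps strictly decrease the measure $j$ with respect to $<_\mathbb{J}$: the premise of $\ruleSimu$ requires $j \measureDecrements{i,p} j'$, which by the defining property of $\measureDecrements{}$ forces $j' <_\mathbb{J} j$. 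Since $(\mathbb{J},<_\mathbb{J})$ admits no infinite descending chain, there can be no infinite suffix consisting solely of $\ruleSimu$ steps. The only subtlety is that $\ruleExec$ steps reset the measure to $\kappa(i')$, so the measure is not globally decreasing; but this is harmless because between two $\ruleExec$ steps there can only be finitely many $\ruleSimu$ steps (by the descending-chain condition applied to the stretch after the reset), and there are only finitely many $\ruleExec$ steps overall. Putting this together, every path out of $v$ is finite, and by König's lemma the reachable sub-graph is finite.

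\textbf{Formalizing the well-founded order.} To make the König's-lemma argument clean, I would equip vertices with the lexicographic measure $(|\mu|, j) \in \mathbb{N} \times \mathbb{J}$ (with the verdict sinks assigned, say, the bottom element), ordered lexicographically — a well-founded order since $\mathbb{N}$ and $\mathbb{J}$ are both well-founded in the relevant direction. Then I would check edge-by-edge that this strictly decreases: $\ruleExec$ drops $|\mu|$ (first coordinate), regardless of how $\kappa$ resets $j$; $\ruleSimu$ keeps $|\mu|$ fixed and drops $j$ (second coordinate); $\rulePass$ and $\ruleFail$ go to a sink. Hence $\leadsto$ restricted to the reachable sub-graph is contained in the inverse of a well-founded order, so there is no infinite path; combined with finite branching, finiteness of the reachable sub-graph follows.

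\textbf{Expected main obstacle.} The only delicate point is the measure reset by $\kappa$ in the $\ruleExec$ rule, which breaks any attempt to find a single monotone measure on $j$ alone; the lexicographic combination with $|\mu|$ is what resolves it, and verifying that $\ruleExec$ genuinely consumes an action from $\mu$ (so that $|\mu|$ strictly decreases) relies on the left-concatenation decomposition $\mu = a \multiAppendLeft \mu'$ in the rule's source node. The finite-branching claim itself needs the fact — used already in the analysis of the previous algorithm and cited from~\cite{a_small_step_approach_to_multi_trace_checking_against_interactions} — that for a fixed interaction $i$ there are only finitely many positions $p$ (hence finitely many $i'$) with $i \xrightarrow{a@p} i'$ for a given $a$, and only finitely many actions $a$ appearing as subterms of $i$; this follows by structural induction on $i$.
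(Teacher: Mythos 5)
Your proposal is correct and follows essentially the same route as the paper's proof: a lexicographic measure $(|\mu|,j)$ that strictly decreases along every edge (with $\ruleExec$ dropping the first coordinate despite the $\kappa$-reset, and $\ruleSimu$ dropping the second by the no-infinite-descending-chain property of $<_\mathbb{J}$), combined with finite branching bounded by the finitely many executable pairs $(a,p)$ in $i$. Your explicit appeal to K\"onig's lemma merely makes precise the step the paper phrases as ``a finite number of distinct paths,'' so the two arguments coincide in substance.
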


\begin{proof}
We prove this in two steps: \textbf{(1)} all paths in $\mathbb{G}$ from $v$ are of finite length and \textbf{(2)} there is a finite number of distinct paths in $\mathbb{G}$ from $v$.

To prove \textbf{(1)} this let us consider a measure on vertices of $\mathbb{V}$ given as a tuple by $|\verdictOk| = |\verdictKo| = (-1,j_0)$ (with $j_0$ any element of $\mathbb{J}$) and, for any vertex $v$ of the form $(i,\mu,\gamma,j)$, $|v| = (|\mu|,j)$ where $|\mu|$ is the length of the multi-trace (in total number of actions). Considering the lexicographic order on the tuple (with the relations $<$ on integers for the first element and $<_\mathbb{J}$ for the second), each one of the $4$ rules decreases strictly this measure:
\begin{itemize}
    \item for $\rulePass$ and $\ruleFail$ because $(-1,j_0) < (|\mu|,j)$ given that $|\mu| \geq 0$ for any multi-trace
    \item for $\ruleExec$ because $(|\mu|,\kappa(i')) < (|a ~\multiAppendLeft~\mu|,j)$ given that $|a ~\multiAppendLeft~\mu| = |\mu| + 1$
    \item for $\ruleSimu$ because $(|\mu|,j') < (|\mu|,j)$ given that $j \measureDecrements{i,p} j'$ by the definition of the rule $\ruleSimu$.
\end{itemize}
In addition $(-1,j_0)$ is a global minimum for this measure and because there are no infinite descending chains in $\mathbb{J}$, there are also none for this measure on nodes of $\mathbb{G}$.
Hence, by construction any path in $\mathbb{G}$ is finite.

To prove \textbf{(2)} we remark that, from any given node $v=(i,\mu,\gamma,j)$ there are at most $2*|i| + 1$ (where $|i|$ designates the total number of actions in $i$) edges of the form $v \leadsto v'$ with $v' \not\in \{\verdictOk,\verdictKo\}$ (this is in the worst case, when every action can be both executed and simulated).
\end{proof}

Th.\ref{th:finite_search_space} states that only a finite sub-graph of $\mathbb{G}$ can be reached from any given vertex $v \in \mathbb{G}$.
Let us also remark that any sink (i.e. any vertex without any outgoing transition) of $\mathbb{G}$ must either be $\verdictOk$ or $\verdictKo$. They are indeed sinks because there are no rules specifying edges of the form $\verdictOk \leadsto v$ or $\verdictKo \leadsto v$ and they are the only ones because for any vertex of the form $v = (i,\mu,\gamma,j)$, if $\mu = \varepsilon_C$ then $\rulePass$ applies and $v$ is not a sink and if $\mu \neq \varepsilon_C$ then:
\textbf{(1)} if there is a match between an action that can be executed from $i$ and the head of a component of the multi-trace then rule $\ruleExec$ applies. \textbf{(2)} if there is some action of $i$ that can be simulated, then rule $\ruleSimu$ applies. \textbf{(3)} if neither condition 1 nor condition 2 hold then rule $\ruleFail$ applies. Indeed, by construction, the conditions of application of $\ruleFail$  are defined as complementary to the conditions of application of the other 3 rules $\rulePass$, $\ruleExec$ and $\ruleSimu$.

\subsection{Verdicts and properties of the generic algorithm}

The ORV algorithm consists in exploring reachable vertices of a graph $\mathbb{G}$ using $\leadsto$ (cf. Def.\ref{def:orv_rules}) from an initial vertex $v = (i,\mu,\gamma_\bot,\kappa(i))$ where $\mu$ is the multi-trace which we want to analyze, $i$ is the interaction which serves as the reference specification and $\gamma_\bot$ is the flag set to $\bot$ on each co-localization $c \in C$. In this starting node, we choose the flag $\gamma_\bot$ since observation has not started on any component and we initialize the measure for simulation using any arbitrary function $\kappa : \mathbb{I} \rightarrow \mathbb{J}$. 

This algorithm is generic given that the measures $\mathbb{J}$, the relation $\measureDecrements{}$ and the functions $\kappa$ are kept generic i.e. are only defined through their profiles and properties.

Because of Th.\ref{th:finite_search_space} and because $\verdictOk$ and $\verdictKo$ are the only two possible sinks, we can conclude that at least one of them is reachable from $v$. In any case, because the reachable part of $\mathbb{G}$ is finite, it is always possible to determine in finite time if $\verdictOk$ is reachable from $v$ (which we may denote by $v \overset{*}{\leadsto} \verdictOk$) or not.

In the context of our ORV algorithm we return a $Pass$ if we can ascertain that $v \overset{*}{\leadsto} \verdictOk$ and an $Inconc$ (meaning an inconclusive verdict) otherwise. The algorithm is defined as a function $\omega$ in Def.\ref{def:anaslice_def} and is well-founded given the previous remark linked to Th.\ref{th:finite_search_space}.

\begin{definition}\label{def:anaslice_def}
For any $C \in \partitionsOf{L}$, we define $\omega_C : \mathbb{I} \times \mathbb{M}_C \rightarrow \{Pass,Inconc\}$ s.t. for any $i \in \mathbb{I}$ and $\mu \in \mathbb{M}_C$:
\begin{itemize}
    \item $\omega_C(i,\mu) = Pass$ iff there exists a path in $\mathbb{G}$ from $(i,\mu,\gamma_\bot,\kappa(i))$ to $\verdictOk$
    \item $\omega_C(i,\mu) = Inconc$ otherwise
\end{itemize}
\end{definition}

Th.\ref{th:correctness} states that a $Pass$ verdict ensure the identification of a multi-trace as a slice of an accepted multi-trace of the initial interaction. However this property is not as strong as e.g. the correctness of the algorithm from \cite{a_small_step_approach_to_multi_trace_checking_against_interactions} i.e. we are not guaranteed to have a $Pass$ for any and all slices of an accepted multi-trace.

\begin{theorem}
[Soundness]
\label{th:correctness}
For any interaction $i \in \mathbb{I}$, partition $C \in \partitionsOf{L}$ and multi-trace $\mu \in \mathbb{M}_C$, we have:
\[
\begin{array}{c}
(\mu \in \sigma_C(i)) \Rightarrow (\omega_C(i,\mu) = Pass)\\
(\omega_C(i,\mu) = Pass) \Rightarrow (\mu \in \sliceOf{\sigma_C(i)})
\end{array}
\]
\end{theorem}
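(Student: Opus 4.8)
The plan is to prove the two implications separately.

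For the first implication, $(\mu \in \sigma_C(i)) \Rightarrow (\omega_C(i,\mu) = Pass)$, I would argue that whenever $\mu \in \sigma_C(i)$ there is a path in $\mathbb{G}$ from $(i,\mu,\gamma_\bot,\kappa(i))$ to $\verdictOk$ that uses \emph{only} $\ruleExec$ steps followed by a final $\rulePass$ step. Indeed, by Def.\ref{def:semantics}, $\mu \in \sigma_C(i)$ means there is a decomposition $\mu = a_1 \multiAppendLeft \cdots \multiAppendLeft a_n \multiAppendLeft \varepsilon_C$ together with interactions $i = i_0 \xrightarrow{a_1@p_1} i_1 \xrightarrow{a_2@p_2} \cdots \xrightarrow{a_n@p_n} i_n$ and $\varepsilon_C \in \sigma_C(i_n)$. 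The key observation is that for each $j$, since $a_{j+1} \multiAppendLeft (a_{j+2} \multiAppendLeft \cdots \multiAppendLeft \varepsilon_C)$ has $a_{j+1}$ at the head of its $\theta_C(a_{j+1})$-component, the action $a_{j+1}$ can be consumed, so rule $\ruleExec$ applies at node $(i_j, a_{j+1}\multiAppendLeft\cdots\multiAppendLeft\varepsilon_C, \gamma_j, j_j)$ regardless of the current flag $\gamma_j$ or measure $j_j$ (the premise of $\ruleExec$ only requires $\exists p.\ i_j \xrightarrow{a_{j+1}@p} i_{j+1}$). Chaining these $n$ applications of $\ruleExec$ leads to a node $(i_n, \varepsilon_C, \gamma_n, \kappa(i_n))$, at which rule $\rulePass$ applies and yields $\verdictOk$. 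Hence $\omega_C(i,\mu) = Pass$. (Here $\varepsilon_C \in \sigma_C(i_n)$ is actually not even needed for this direction, since reaching $\varepsilon_C$ already triggers $\rulePass$; but it is consistent with the semantics.)

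For the second implication, $(\omega_C(i,\mu) = Pass) \Rightarrow (\mu \in \sliceOf{\sigma_C(i)})$, I would take a witnessing path $v_0 = (i,\mu,\gamma_\bot,\kappa(i)) \leadsto v_1 \leadsto \cdots \leadsto v_m = \verdictOk$ and analyze it. Since $\verdictOk$ is only reached via $\rulePass$, the last step is $\rulePass$, so $v_{m-1} = (i_{\mathrm{fin}}, \varepsilon_C, \gamma_{\mathrm{fin}}, j_{\mathrm{fin}})$ for some interaction $i_{\mathrm{fin}}$ with $\varepsilon_C \in \sigma_C(i_{\mathrm{fin}})$ (because $\varepsilon_C = \varepsilon_C \multiAppendLeft \cdots$ with zero actions, and $\rulePass$ fires at $\varepsilon_C$ which means, reading Def.\ref{def:semantics}, $i_{\mathrm{fin}} \isPruneOf{L} i'$ for some $i'$ — this last point needs care: $\rulePass$ as written fires on \emph{any} node with empty multi-trace, so I would instead directly argue $\varepsilon_C \in \sigma_C(i_{\mathrm{fin}})$ follows from having reached $i_{\mathrm{fin}}$ by the execution relation from $i$ along a path whose full action sequence lies in $\sigma_C(i)$). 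The intermediate steps are each $\ruleExec$ or $\ruleSimu$. Reading off the sequence of actions $a_1, a_2, \ldots, a_k$ labelling these steps (in order), the underlying interactions form a chain $i = j_0 \xrightarrow{a_1@p_1} j_1 \xrightarrow{a_2@p_2} \cdots \xrightarrow{a_k@p_k} j_k = i_{\mathrm{fin}}$ — both $\ruleExec$ and $\ruleSimu$ have an execution step $i \xrightarrow{a@p} i'$ in their premise, so the full action word $w = a_1 \cdots a_k$ together with $\varepsilon_C \in \sigma_C(j_k)$ gives, by Def.\ref{def:semantics}, that the multi-trace $\nu := a_1 \multiAppendLeft \cdots \multiAppendLeft a_k \multiAppendLeft \varepsilon_C$ belongs to $\sigma_C(i)$. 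It then remains to show $\mu \in \sliceOf{\nu}$, i.e. for each co-localization $c \in C$, that $\mu_{|c}$ is a slice of $\nu_{|c}$.

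The main obstacle — and the crux of the whole argument — is precisely this last claim: that each observed local trace $\mu_{|c}$ sits as a contiguous infix of the reconstructed local trace $\nu_{|c}$. This is where the flag bookkeeping $\gamma$ and the side-conditions of $\ruleSimu$ do the real work. I would prove, by induction along the witnessing path, the invariant that at every node $(i', \mu', \gamma', j')$ on the path, and for every $c \in C$: (i) the actions of $w$ consumed so far on $c$, when we are \emph{before} observation has started on $c$ (i.e. while $\gamma'(c) = \bot$) or after the $c$-component has been emptied, were all simulated (via $\ruleSimu$), never executed; (ii) the executed-and-consumed actions on $c$ form exactly the prefix of $\mu_{|c}$ of length $(|\mu_{|c}| - |\mu'_{|c}|)$, i.e. consumption of $\mu_{|c}$ proceeds contiguously from its head; and (iii) once $\gamma'(c) = \top$, no more $\ruleSimu$ step on $c$ can occur until $\mu'_{|c} = \varepsilon$ — this is exactly the disjunction $(\gamma(\theta_C(a)) = \bot) \vee (\mu_{|\theta_C(a)} = \varepsilon)$ in the premise of $\ruleSimu$. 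Consequently, for each $c$, the sub-word of $w$ restricted to $c$ decomposes as $t_{-}^c . \mu_{|c} . t_{+}^c$ where $t_{-}^c$ is the block of simulated actions on $c$ occurring before observation started, $\mu_{|c}$ is the block of executed actions (consumed contiguously from $\mu$), and $t_{+}^c$ is the block of simulated actions on $c$ occurring after $\mu_{|c}$ was exhausted. Since $\nu_{|c}$ is exactly this sub-word of $w$ on $c$ (by the definition of $\multiAppendLeft$ distributing each action to its co-localization component), we get $\nu_{|c} = t_{-}^c . \mu_{|c} . t_{+}^c$, which by Def.\ref{def:slices} means $\mu_{|c} \in \sliceOf{\nu_{|c}}$. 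Doing this for all $c$ yields $\mu \in \sliceOf{\nu} \subseteq \sliceOf{\sigma_C(i)}$, completing the proof. The delicate bit is verifying that once observation starts on $c$, the algorithm is forced to consume $\mu_{|c}$ strictly from the head with no interleaved simulation on $c$ — but this is guaranteed because $\ruleSimu$'s premise forbids simulating on $c$ while $\gamma(c) = \top$ and $\mu_{|c} \neq \varepsilon$, and $\ruleExec$ on $c$ can only fire on the head of $\mu_{|c}$ by the definition of $\multiAppendLeft$.
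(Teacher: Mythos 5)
Your proof follows essentially the same route as the paper's: the first implication by consuming $\mu$ using only $\ruleExec$ steps (the paper, like you, notes that $\ruleExec$ is not constrained by the measure) and a final $\rulePass$, and the second implication by reading the action word off a witnessing path, observing that both $\ruleExec$ and $\ruleSimu$ carry an execution premise $i \xrightarrow{a@p} i'$, and then using the side-condition $(\gamma(\theta_C(a))=\bot)\vee(\mu_{|\theta_C(a)}=\varepsilon)$ of $\ruleSimu$ to place every simulated action on a component either before observation starts or after that component is exhausted, so that each $\mu_{|c}$ sits as a contiguous infix of the reconstructed local trace. Your invariant (i)--(iii) is in fact more explicit than the paper's two-line justification of this last point.

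The one place where your argument as written does not go through is the claim that the terminal interaction $i_{\mathrm{fin}}$ satisfies $\varepsilon_C \in \sigma_C(i_{\mathrm{fin}})$, which you need to conclude $\nu = a_1\multiAppendLeft\cdots\multiAppendLeft a_k\multiAppendLeft\varepsilon_C \in \sigma_C(i)$. You flag the issue yourself, but the repair you sketch (``it follows from having reached $i_{\mathrm{fin}}$ along a path whose action sequence lies in $\sigma_C(i)$'') is circular, and the claim is false in general: $\rulePass$ fires as soon as the multi-trace is empty, with no check of the form $i_{\mathrm{fin}}\isPruneOf{L} i'$. For instance, with $i = strict(l_1!m,\, l_2?m)$ and $\mu$ having $l_1!m$ on the co-localization of $l_1$ and the empty trace on that of $l_2$, a single $\ruleExec$ step empties $\mu$ and reaches $\verdictOk$, yet the reached interaction still requires $l_2?m$ and does not accept $\varepsilon_C$. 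The fix is easy and leaves your slice bookkeeping intact: $\sigma_C(i_{\mathrm{fin}})$ is nonempty (immediate induction on interaction terms), so pick any $\mu_1\in\sigma_C(i_{\mathrm{fin}})$ and set $\mu_0 = a_1\multiAppendLeft\cdots\multiAppendLeft a_k\multiAppendLeft\mu_1 \in \sigma_C(i)$; the extra actions of $\mu_1$ only extend each component at its tail, so $\mu_{|c}$ remains an infix of $\mu_{0|c}$ and $\mu\in\sliceOf{\mu_0}\subseteq\sliceOf{\sigma_C(i)}$. (The paper's own proof sketch glosses over this same point by asserting that the path corresponds directly to a full accepted multi-trace.)
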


\begin{proof}
For the first point: if the multi-trace $\mu$ that is analyzed is in $\sigma_C(i)$ then there is a corresponding path in the execution tree of $i$ (via Def~\ref{def:semantics}, see also Fig.\ref{fig:followup}).
Using only rule $\ruleExec$ (and never $\ruleSimu$) it is then possible to consume $\mu$ in its entirety.
 As the application of $\ruleExec$ is not constrained by the value of the measure $j$, the analysis is  close to that of the algorithm from \cite{a_small_step_approach_to_multi_trace_checking_against_interactions}.

For the second point we can reason as follows.
Given the nature of $\ruleExec$ and $\ruleSimu$, which both execute actions in the current interaction, any path $v = (i,\mu,\gamma_\bot,\kappa(i)) \overset{*}{\leadsto} \verdictOk$ exactly corresponds to a path in the execution tree (see Fig.\ref{fig:followup}) of $i$ and hence to a full multi-trace $\mu_0$ accepted by $i$ i.e. s.t. $\mu_0 \in \sigma_C(i)$. 
Then, $\mu$ is a partial observation (i.e. a slice) of $\mu_0$ given that: \textbf{(1)} it contains all the actions of $\mu$ corresponding to $\ruleExec$ steps while those corresponding to $\ruleSimu$ are missing and \textbf{(2)} $\ruleSimu$ being only applicable outside the period of observation of the components of $\mu$, those missing actions are either before the start or after the end of those corresponding components. Hence $\mu \in \sliceOf{\mu_0}$ and therefore $\mu \in \sliceOf{\sigma_C(i)}$.
\end{proof}

While the $Pass$ verdict ensures that the multi-trace analyzed is indeed a slice of the considered interaction, the other global verdict $Inconc$ only means an inconclusive verdict and not a $Fail$ one because it does not necessarily mean that the multi-trace which is analyzed is not a slice of an accepted multi-trace. Depending on the structure of the input interaction and depending on the criterion that is selected to bind the simulation, we may not have simulated enough the interaction to bring it to states which would allow the entire consumption of the multi-trace.

\section{Implementation and assessment \label{sec:criterion}}

\subsection{Instantiating the parameters}
\label{ssec:inst-param}

Our approach for simulation-based analysis, as presented in Sec.\ref{sec:algo}, requires a parameterization to be concretized. It consists in defining $\mathbb{J}$, $\measureDecrements{}$ and $\kappa$ (in Sec.\ref{sec:algo} those were only characterized through their properties).

A variety of criteria could be used. For instance, with $\mathbb{J} = \mathbb{N}$ (positive integers), we could use an arbitrary maximum number of actions that can be successively simulated. And whenever an action $i_{|p}$, at position $p$ in interaction $i$ is executed, we would have $j \measureDecrements{i,p} j-1$.

Many such trivial criteria may be defined. However, in the following, we propose a slightly more subtle criteria which tries to strike a balance between a good coverage rate (the ability to identify most if not all slices) and efficiency (via taking care of not simulating too many actions, and hence decreasing the complexity/size of graph $\mathbb{G}$).

Let us consider a concrete criterion for binding simulation in the form of a tuple of integers (i.e. $\mathbb{J} = \mathbb{N}^2$) which we denote by $(\lambda,\alpha)$ such that: 
\begin{itemize}
    \item $\lambda$ represents a maximum number of loops which can be instantiated in a consecutive sequence of simulation steps 
    \item and $\alpha$ relates to a number of actions.
\end{itemize} 
This set $\mathbb{J}$ is fitted with the lexicographic order.

In order to initialize and update this criterion, let us consider two functions: $\numActOutsideBase: \mathbb{I} \rightarrow \mathbb{N}$ which gives the total number of actions occurring outside loops and $\loopDepthAtPosBase : \mathbb{I} \rightarrow \mathbb{N}$ which gives the maximum depth of nested loops. More precisely, $\beta(i) = max_{p\in pos(i)} \; \beta(i,p)$ where $\beta(i,p)$ is the number of nested loops above position $p$ in interaction $i$ (see Appendix \ref{anx:details_criterion} for a complete definition). We then define $\kappa(i)$ as the couple $( \loopDepthAtPosBase(i),~ \numActOutside{i} )$.

We define the relation $\measureDecrements{} \subseteq \mathbb{J} \times (\mathbb{I} \times \{1,2\}^*) \times \mathbb{J}$ by: for any $(\lambda,\alpha) \in \mathbb{J}$, any $i,i' \in \mathbb{I}$, $a \in \mathbb{A}$ and $p \in pos(i)$ s.t. $i \xrightarrow{a@p} i'$,
\begin{itemize}
    \item if $\loopDepthAtPos{i}{p} = 0$, we are in the case where we simulate an action outside a loop. Here $\lambda$ stays the same while $\alpha$ decreases. Indeed at least one action (the one which is executed) is removed from $i$ (more may be removed due to pruning or choosing alternatives etc) and none are added (because no loops are instantiated). Here we have $(\lambda,~\alpha) \measureDecrements{i,p} (\lambda,~\numActOutside{i'})$ with $0 \leq \numActOutside{i'} \leq \alpha - 1$ by construction.
    \item if $\loopDepthAtPos{i}{p} > 0$, we instantiate $\loopDepthAtPos{i}{p}$ loops which requires that $\lambda \geq \loopDepthAtPos{i}{p}$. Then $(\lambda,~\alpha) \measureDecrements{i,p} (\lambda - \loopDepthAtPos{i}{p},~\numActOutside{i'})$. Here we reset the value of $\alpha$ because loop instantiation may change the total number of actions outside loops.
\end{itemize}

$\lambda$ guarantees that we can instantiate at least once every loop in the interaction (although not necessarily in the same path). 
This limit on the number of loops is sufficient to guarantee termination because there can only be a finite number of actions existing outside loops and each of those can be simulated at least once (which corresponds to $\alpha$).
Our definition of $\mathbb{J}$ and $\measureDecrements{}$ ensures the required properties, i.e. that it is strictly decreasing within a space which has no infinite descending chains. 
This guarantees the termination of the parameterized algorithm.

This proposal for $\kappa$ is independent of the size of the analyzed multi-trace. It only depends on the reference interaction. This is advantageous in so far as, in practice, the size of the interaction is small compared to the size of the multi-trace. Thus, this measure allows one to calibrate the number of simulation steps according to the complexity of the interaction while drastically limiting their number.

Once $\kappa$ is properly defined, the multi-trace analysis algorithm essentially boils down to a traversal of a finite graph. Different traversal heuristics (depth/breadth-first, with priorities on the application of the rules etc.) can be implemented. This is mentioned in Sec.\ref{ssec:tool_explo} in the context of our tool HIBOU. 

In the following, we illustrate on a small example how $\kappa$ comes into play for the construction of the graph $\mathbb{G}$.

\begin{figure*}[h]
    \centering
    \resizebox{\textwidth}{!}{\input{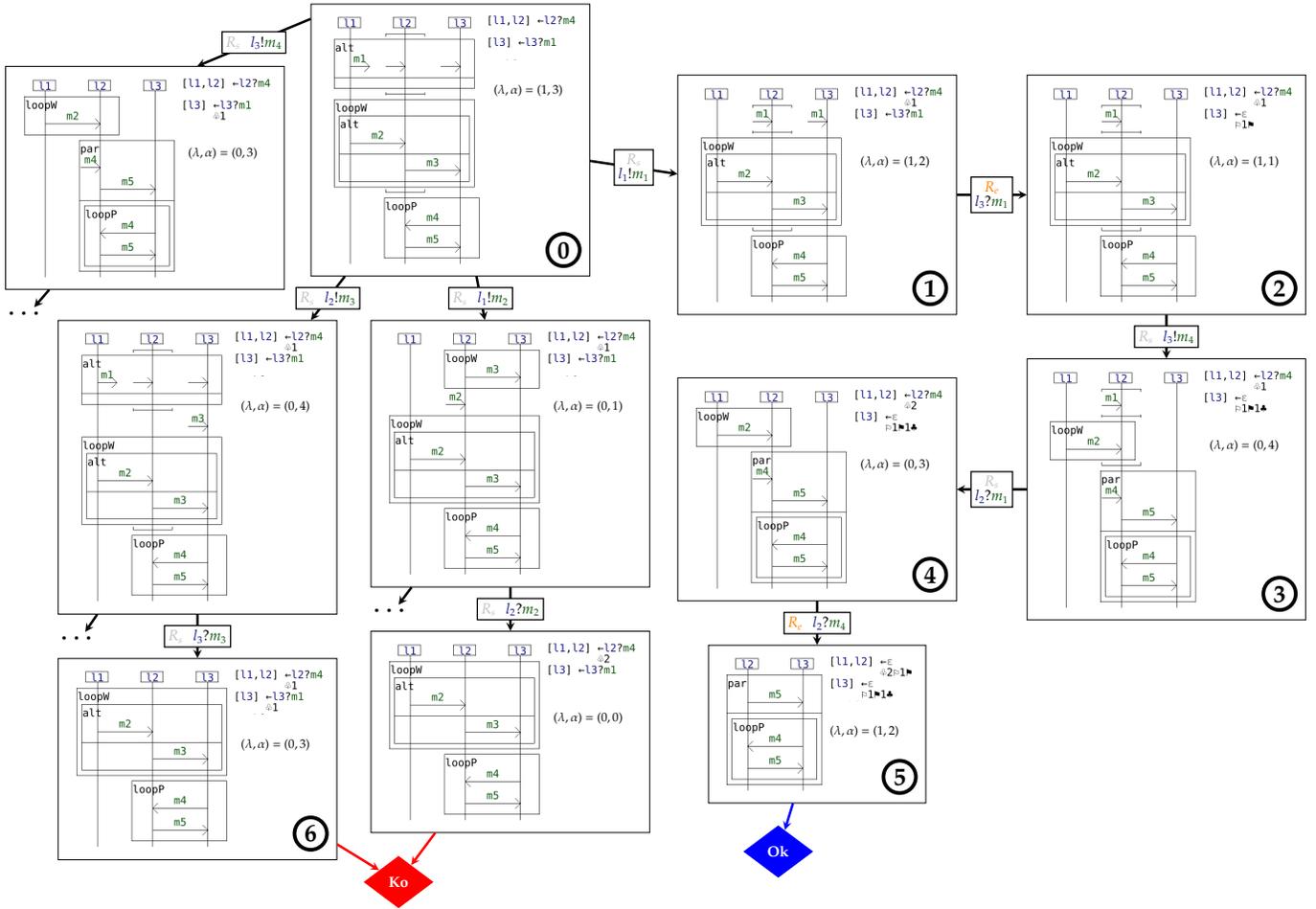}}
    \caption{Multi-trace slice analysis\label{fig:sliceana}}
\end{figure*}

\subsection{Illustration on the running example}

Applying our algorithm with this criterion on our initial example i.e. analyzing the multi-trace $\mu$ from Fig.\ref{fig:mu_coloc_partial} against the interaction $i$ from Fig.\ref{fig:interaction_example} yields the graph (partially drawn) on Fig.\ref{fig:sliceana}. In this particular case, we effectively conclude that $\mu$ is a slice of a behavior accepted by $i$. Here, the global scenario which is reconstructed during the analysis and that match $\mu$ corresponds to the path leading to $\verdictOk$ displayed on Fig.\ref{fig:sliceana}.
This path corresponds to the trace $\shortColGrey{l_1!m_1}.\hlf{l_3}?\hms{m_1}.\shortColGrey{l_3!m_4.l_2?m_1}.\hlf{l_2}?\hms{m_4}$. Notice that is uses a different interleaving of the simulated actions  $\shortColGrey{l_3!m_4}$ and $\shortColGrey{l_2?m_1}$ w.r.t. the global scenario described on Fig.\ref{fig:mu_global1clock} and Fig.\ref{fig:missing_actions}.

Let us comment this path in the graph. Each vertex in this path is annotated with a circled number (from \textcircled{0} to \textcircled{5}). Vertex \textcircled{0} corresponds to the initial vertex $(i_0,\mu_0,\gamma_\bot,\kappa(i_0) = (\lambda_0,\alpha_0))$ from which the analysis starts, where $\mu_0$ is the input multi-trace from Fig.\ref{fig:mu_coloc_partial} which we want to analyze and $i_0$ is the input interaction from Fig.\ref{fig:interaction_example} which serves as a specification.
We initialize the measure to $\kappa(i_0)=(\beta(i_0),\eta(i_0))$, hence $\lambda_0 = \beta(i_0)=1$ because the maximum depth of nested loops is $1$ and $\alpha_0 = \eta(i_0) = 3$ because there are $3$ actions outside loops.

From the initial vertex \textcircled{0}, rules $\rulePass$ and $\ruleExec$ cannot be applied. Indeed, $\rulePass$ is not applicable because the multi-trace is not empty. As for $\ruleExec$, it is because we cannot execute in the interaction any of the actions that are at the heads of the components of the multi-trace (i.e. neither $\hlf{l_2}?\hms{m_4}$ nor $\hlf{l_3}?\hms{m_1}$ can be immediately executed, cf. Fig.\ref{fig:followup} which enumerates all immediately executable actions).
However, $\ruleSimu$ can be applied on any of the four immediately executable actions (cf. Fig.\ref{fig:followup}). Actions $\hlf{l_3}!\hms{m_4}$, $\hlf{l_2}!\hms{m_3}$ and $\hlf{l_1}!\hms{m_2}$ are at depth $1$ w.r.t. loops. Hence, because $\lambda_0 = 1$ we can simulate them. Action $\hlf{l_1}!\hms{m_1}$ is outside all loops and can be simulated because $\alpha_0 = 3$. 
Because $\ruleSimu$ can be applied, $\ruleFail$ cannot be applied from \textcircled{0}.

The simulation of action $\hlf{l_1}!\hms{m_1}$ (at position $1111$) leads to vertex \textcircled{1}. In this vertex $(i_1,\mu_1,\gamma_1,(\lambda_1,\alpha_1))$ we have $i_1$ s.t. $i_0 \xrightarrow{\hlf{l_1}!\hms{m_1}@1111} i_1$ because the corresponding action is executed in the model, $\mu_1=\mu_0$ because no action has been consumed in the multi-trace, $\gamma_1 = \gamma_0$ because observation has not started on any additional co-localization, $\lambda_1 = \lambda_0$ because no action inside a loop has been simulated and $\alpha_1 = \alpha_0 - 1 = 2$ because one action outside loops has been simulated.
Here, we have one step of simulation before the start of observation on component $\{\hlf{l_1},\hlf{l_2}\}$. As a visual aid, on Fig.\ref{fig:sliceana} a $\varclubsuit$~(white clover) symbol indicates the beginning of simulation before the start of observation. This $\varclubsuit$~is then followed by the number of simulated actions.

From vertex \textcircled{1}, rule $\ruleExec$ can be applied on action $\hlf{l_3}?\hms{m_1}$ i.e. we can execute this action and consume it from the multi-trace. The previous step of simulation has put the initial model $i_0$ to a state $i_1$ from which this action can be executed. In the next vertex \textcircled{2}, $\gamma_2(\{\hlf{l_3}\}) = \top$ because observation has then started on the co-localization $\{\hlf{l_3}\}$. Moreover, because the entire trace component on $\{\hlf{l_3}\}$ has been consumed, we have on Fig.\ref{fig:sliceana} both the \faFlagO~ and \faFlag~ visual aids.
In \textcircled{2} the measure is also reset to $\lambda_2 = 1$ and $\alpha_2 = 1$ because of the loop depth and number of actions outside loops in $i_2$.

From vertex \textcircled{2}, neither $\rulePass$ nor $\ruleExec$ can be applied. However, it is possible to simulate actions $\hlf{l_1}!\hms{m_2}$, $\hlf{l_2}?\hms{m_1}$, $\hlf{l_2}!\hms{m_3}$ and $\hlf{l_3}!\hms{m_4}$ because of the current values of $\lambda_2$ and $\alpha_2$ and for the following reasons: for the first three actions, simulation is possible because observation has not yet started on co-localization $\{\hlf{l_1},\hlf{l_2}\}$; for action $\hlf{l_3}!\hms{m_4}$, it is possible because observation has already ended on co-localization $\{\hlf{l_3}\}$ (because the corresponding trace component is already empty).

The simulation of $\hlf{l_3}!\hms{m_4}$ leads to vertex \textcircled{3}. On Fig.\ref{fig:sliceana}, the visual aid $\clubsuit$~(black clover) denotes the number of simulation steps after the end of observation. The measure is updated so that $\lambda_3 = \lambda_2 - 1 = 0$ because $\hlf{l_3}!\hms{m_4}$ was at depth $1$ (hence it is not possible to instantiate new loops in simulation) and $\alpha_3 = 4$ because there are $4$ actions outside loops in $i_3$.

From vertex \textcircled{3} an additional simulation step leads to \textcircled{4}. Because the simulated action $\hlf{l_2}?\hms{m_1}$ is outside all loops, it is $\alpha$ which is decremented. Finally, from vertex \textcircled{4}, rule $\ruleExec$ can be applied so that the entire multi-trace is emptied in vertex \textcircled{5}. Then, $\verdictOk$ can be reached by the application of $\rulePass$ from \textcircled{5}.

Due to having several choices in the applications of rules $\ruleExec$ and $\ruleSimu$, several distinct paths may be opened during the analysis (as illustrated by the paths towards the left of Fig.\ref{fig:sliceana} which leads to $\verdictKo$ and by the $\cdots$ representing other paths which are not drawn). However, via the use of some heuristics and by terminating the analysis as soon as a $\verdictOk$ is reach, one can limit the size of the part of $\mathbb{G}$ which is explored.

At the bottom of Fig.\ref{fig:sliceana} we have also annotated one of the vertices as \textcircled{6} so as to illustrate the application of rule $\ruleFail$. Here, the two previous simulation steps have lead to a vertex $(i_6,\mu_6,\gamma_6,(\lambda_6,\alpha_6))$ in which: \textbf{(1)} the multi-trace is not empty i.e. $\mu_6 \neq \varepsilon_C$, \textbf{(2)} there are no immediately executable actions that match the heads of $\mu_6$ and \textbf{(3)} we have $\lambda_6 = 0$ and there are no actions outside loops remaining in the interaction. Therefore, neither $\rulePass$, $\ruleExec$ nor $\ruleSimu$ can be applied. As a result, we apply rule $\ruleFail$ which leads to $\verdictKo$.


\subsection{Further remarks on the approach\label{ssec:discuss_simu}}

\begin{figure}[h]
    \centering
    \begin{minipage}{.325\linewidth}
        \centering
        \begin{subfigure}[t]{\linewidth}
            \centering
            \hspace*{-.4cm}\includegraphics[scale=.3]{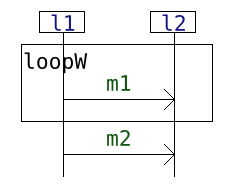}
            \caption{Specifying diagram\label{fig:sections_diagram}}
        \end{subfigure}
    \end{minipage}
    \begin{minipage}{.65\linewidth}
        \centering
        \begin{subfigure}[t]{\linewidth}
            \centering
            $\shortColGrey{l_1!m_1.}\hlf{l_1}!\hms{m_1}.\shortColGrey{l_2?m_1.l_2?m_1.l_1!m_2.}\hlf{l_2}?\hms{m_2}$
            \caption{Global execution which occurred\label{fig:sections_occurred}}
        \end{subfigure}
        \vspace*{.1cm}
        \begin{subfigure}[t]{\linewidth}
            \centering
$
\left\{
\begin{array}{ll}
\lbrack\hlf{l_1}\rbrack & \hlf{l_1}!\hms{m_1}\\
\lbrack\hlf{l_2}\rbrack & \hlf{l_2}?\hms{m_2}
\end{array}
\right.
$
            \caption{Observed multi-trace\label{fig:sections_observed}}
        \end{subfigure}
        \vspace*{.1cm}
        \begin{subfigure}[t]{\linewidth}
            \centering
            $\hlf{l_1}!\hms{m_1}.\shortColGrey{l_1!m_2.l_2?m_1}.\hlf{l_2}?\hms{m_2}$
            \caption{Minimal reconstructible global execution\label{fig:sections_reconstructed}}
        \end{subfigure}
    \end{minipage}

    \begin{subfigure}[t]{\linewidth}
        \centering
        \scalebox{.45}{\input{figures/5_2_sections/graph}}
        \caption{Path found in the analysis graph\label{fig:sections_ana}}
    \end{subfigure}
    
    \caption{Filling-in missing sections optimistically \& a minima}
    \label{fig:sections}
\end{figure}

Simulation steps are used to find possible replacements for missing sections in what is observed of the global scenario. As we have seen, depending on the architecture of the system, those missing sections can be temporally interspersed in-between sections that are observed in the global scenario.

Another example is described on Fig.\ref{fig:sections}. Here a global execution characterized by the trace from Fig.\ref{fig:sections_occurred} was observed and recorded into the multi-trace from Fig.\ref{fig:sections_observed}. The observation on $\hlf{l_1}$ both started too late and ended too early while that on $\hlf{l_2}$ started too late. The actions which have not been observed are grayed-out in Fig.\ref{fig:sections_occurred}.
The global execution is correct w.r.t. the input interaction from Fig.\ref{fig:sections_diagram}
and hence this multi-trace is an accepted slice. Analyzing it w.r.t. the interaction yields the graph given on Fig.\ref{fig:sections_ana}.
The algorithm tries to find a global scenario which fits the observed multi-trace. However, this global scenario may not necessarily correspond to the one which effectively occurred. It is sufficient that it both conforms to the specification and explains the observed multi-trace. 
Ideally it should then look for such a minimal reconstructible scenario such as the one from Fig.\ref{fig:sections_reconstructed} which indeed corresponds to the one unveiled by the analysis graph on Fig.\ref{fig:sections_ana}.

We may remark that our simulation-based approach is optimistic. Indeed, it suffices to find a reconstructible global scenario that both fits the interaction model and the input multi-trace. 
It may be possible that unwanted behaviors occurred but their detection is not possible from what was observed of the execution. From another perspective, the simulation is optimistic because we only simulate actions which do not deviate from the specification.


The conditions for the application of rules $\ruleExec$ and $\ruleSimu$ are not mutually exclusive. Hence the same action at the same position within an interaction might be both simulated or executed from the same vertex.
We consider an example global scenario in which a first message $\hms{m_1}$ is transmitted from $\hlf{l_1}$ to $\hlf{l_2}$ and then a second one transits from $\hlf{l_1}$ to $\hlf{l_2}$ before a final message $\hms{m_2}$ is send from $\hlf{l_2}$ to $\hlf{l_1}$. We then suppose that this behavior is observed through the multi-trace which is analyzed on Fig.\ref{fig:duplicate_act_in_sim}. In this multi-trace, only the second instance of $\hlf{l_1}!\hms{m_1}$ is observed.
If we start the analysis by executing $\hlf{l_1}!\hms{m_1}$, we consume the second instance of $\hlf{l_1}!\hms{m_1}$ from the multi-trace instead of the first. This later leads to a $\verdictKo$ as illustrated by the path on the left of Fig.\ref{fig:duplicate_act_in_sim}.
The correct first step in Fig.\ref{fig:duplicate_act_in_sim} is to simulate $\hlf{l_1}!\hms{m_1}$ instead of executing it. It is therefore important to allow the same action to be both simulated and executed, which explains having non mutually-exclusive conditions for the application of rules $\ruleExec$ and $\ruleSimu$.

\begin{figure}[h]
    \centering
    \scalebox{.45}{\input{figures/5_2_duplicate/graph}}
    \caption{Pertinence of non mutually exclusive $\ruleExec$ and $\ruleSimu$}
    \label{fig:duplicate_act_in_sim}
\end{figure}

We may also remark that the problem of speculating which actions to simulate incurs a high complexity. The solution which we have presented here tackles this problem using brute force because we explore possible simulated actions exhaustively up to a certain bound which makes the search space finite. Various optimizations can be envisioned e.g. via a static analysis of the multi-trace when choosing between alternative branches, loops, etc. 


In the algorithm, transitions allowed by rule $\ruleExec$ are of the form 
$(i,a  ~\multiAppendLeft~ \mu,\gamma,j) \leadsto (i',\mu,\gamma + \theta_C(a),\kappa(i'))$ 
i.e. we reset the measure $j$ to a new value which depends on the follow-up interaction $i'$.
With the examples on Fig.\ref{fig:reset}, we illustrate the motivation behind this reset of the measure.

In Fig.\ref{fig:reset_based}, we simulate two instances of $\hlf{l_1}!\hms{m_1}$ in order to be able to execute the two instances of $\hlf{l_2}?\hms{m_1}$ in the multi-trace. Instead of using an arbitrary criterion of size $2$ for the number of loops, we can use our proposal criterion on the maximum loop depth (which is $1$ here) but reset it every time an action is executed. Here, because lifelines $\hlf{l_1}$ and $\hlf{l_2}$ are on different co-localizations, this allows alternating between steps of simulation on co-localization $\{\hlf{l_1}\}$ and steps of execution on co-localization $\{\hlf{l_2}\}$ so as to consume the multi-trace in its entirety. 
Here, the reset of the measure enables us to recognize any repetition of $\hlf{l_2}?\hms{m_1}$. With this reset, the initialization of the measure does not need to depend on the size of the multi-trace $\mu$ but only on the structure of the interaction $i$.

\begin{figure}[h]
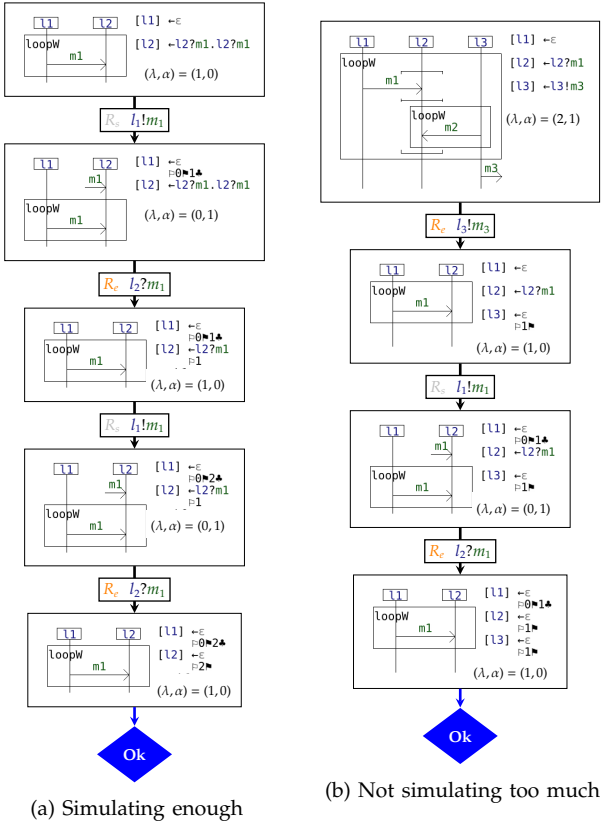

    \centering
    
    \begin{minipage}{.45\linewidth}
        \centering
        \begin{subfigure}[t]{\linewidth}
            \centering
            \scalebox{.5}{\input{figures/5_2_reset/enough/graph}}
            \caption{Simulating enough}
            \label{fig:reset_based}
        \end{subfigure}
    \end{minipage}
    \begin{minipage}{.5\linewidth}
        \centering
        \begin{subfigure}[t]{\linewidth}
            \centering
            \scalebox{.5}{\input{figures/5_2_reset/nottoomuch/graph}}
            \caption{Not simulating too much}
            \label{fig:reset_limit}
        \end{subfigure}
    \end{minipage}
    
    \caption{Motivation behind the reset of the measure}
    \label{fig:reset}
\end{figure}

In the analysis from Fig.\ref{fig:reset_limit}, the initial interaction has a maximum loop depth of $2$. However, due to pruning operations, after the first step, which is an execution (rule $\ruleExec$), the follow-up interaction only has a maximum loop depth of $1$. Hence it is pertinent to reset the measure to reflect this change and so as not to allow more simulation than necessary.

\subsection{Limitations related to inconclusiveness\label{ssec:limitations_simulation}}

Our algorithm returns either a $Pass$ verdict or an $Inconc$ verdict.
When a $Pass$ verdict is returned, the analyzed multi-trace is indeed a slice of an accepted multi-trace.
However, the $Inconc$ verdict do not necessarily reflect a failure. This is because, by bounding the number of simulation steps, it may well be that a correct slice can not be recognized, as its recognition would have required more simulation steps.
In particular, the use of certain specific constructs of the interaction language, in combination with certain architectures of observation, may yield to situations where some correct slices are misidentified (i.e. an $Inconc$ is returned instead of a $Pass$). We provide three such examples in Fig.\ref{fig:failsim}.

\begin{figure*}[ht]
    \centering
    \begin{minipage}{.425\linewidth}
        \centering
        \begin{subfigure}[t]{\linewidth}
            \centering
            \scalebox{.5}{\input{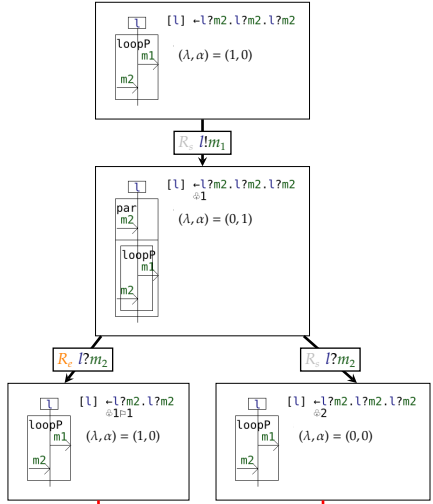}}
            \caption{Example with a $loop_P$ construct.}
            \label{fig:failsimP}
        \end{subfigure}

        \vspace*{.5cm}

        \begin{subfigure}[t]{\linewidth}
            \centering
            \scalebox{.5}{\input{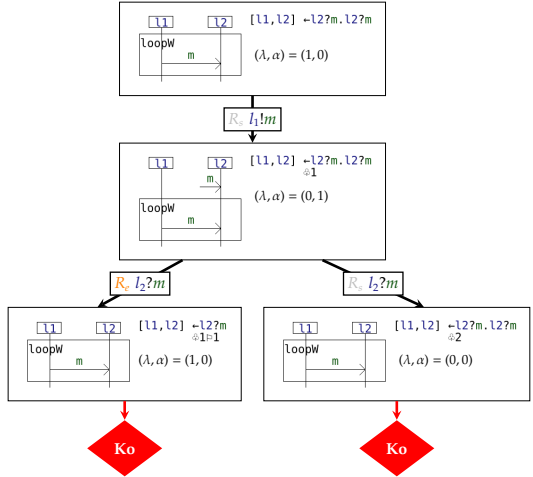}}
            \caption{Example with a $loop_W$ construct under a specific architecture ($\hlf{l_1}$ and $\hlf{l_2}$ co-localized).}
            \label{fig:failsimW}
        \end{subfigure}
    \end{minipage}
    \begin{minipage}{.525\linewidth}
        \centering
        \begin{subfigure}[t]{\linewidth}
            \centering
            \scalebox{.6}{\input{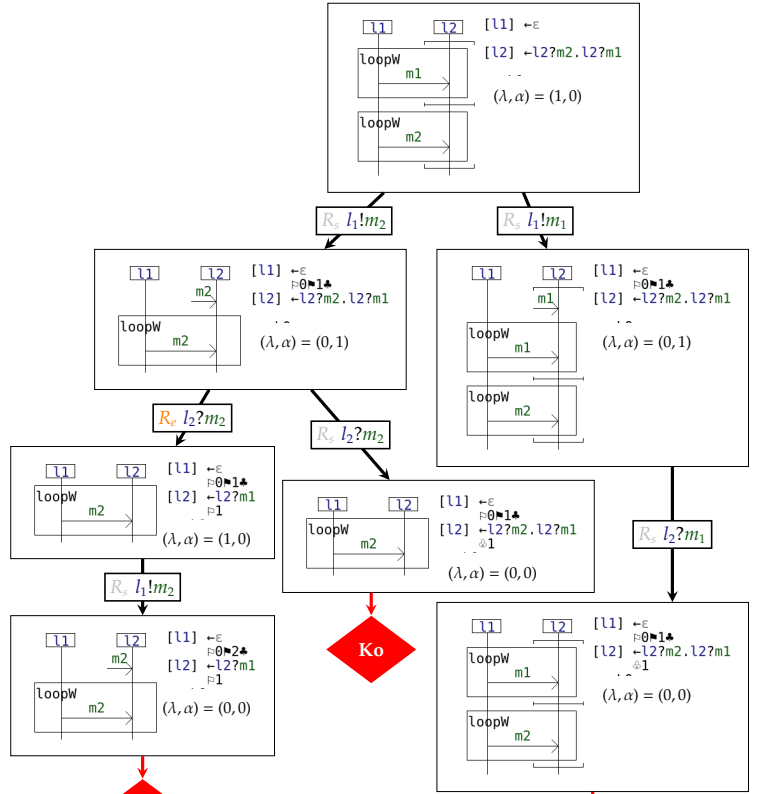}}
            \caption{Example with a $coreg$ construct and two loops.}
            \label{fig:failsim_coreg}
        \end{subfigure}
    \end{minipage}
    \caption{Examples where the proposed criterion does not allow enough simulation to recognise correct slices.}
    \label{fig:failsim}
\end{figure*}

Let us consider the example from Fig.\ref{fig:failsimP}. Because of the use of the parallel loop $loop_P$, several instances of $seq(\hlf{l}!\hms{m_1},\hlf{l}?\hms{m_2})$ can be executed in parallel. In particular the trace $\hlf{l}!\hms{m_1}.\hlf{l}!\hms{m_1}.\hlf{l}!\hms{m_1}.\hlf{l}?\hms{m_2}.\hlf{l}?\hms{m_2}.\hlf{l}?\hms{m_2}$ is thus specified by the interaction. Hence, $\hlf{l}?\hms{m_2}.\hlf{l}?\hms{m_2}.\hlf{l}?\hms{m_2}$ is a slice of an accepted behavior. However, in order to recognize this slice, the action $\hlf{l}!\hms{m_1}$ should be simulated three times consecutively in order to reproduce the prefix missing from the slice. Using the criterion from Sec.\ref{sec:criterion}, we can only execute the action once because the maximum loop depth is $1$ (it would be the same if we used the total number of loops). Hence, as illustrated on Fig.\ref{fig:failsimP}, the analysis fails. This problem is inherent to the $loop_P$ construct as long as the content of the loop (the sub-interaction within it) may express a behavior which contains several distinct actions (here $\hlf{l}!\hms{m_1}$ and $\hlf{l}?\hms{m_2}$).

The $loop_W$ construct is more forgiving in the general case. In particular it may not pose any problem when analyzing classical multi-traces (defined up to the discrete partition, as illustrated on Fig.\ref{fig:reset_based}). However, if distinct lifelines appearing in the sub-interaction underneath a $loop_W$ are co-localized it may pose a problem, as illustrated with the example from Fig.\ref{fig:failsimW}. Here the fact that $\hlf{l_1}$ and $\hlf{l_2}$ are co-localized prevents the algorithm from alternating between simulation steps and execution steps (as it is done in Fig.\ref{fig:reset_based}) because simulation steps can only be taken outside the period of observation on a given co-localization.

The example from Fig.\ref{fig:failsim_coreg} shows a similar problem. Due to the presence of the co-region, lifeline $\hlf{l_2}$ may receive incoming $\hms{m_1}$ and $\hms{m_2}$ messages in any order. However, lifeline $\hlf{l_1}$ must emit all the $\hms{m_1}$ messages before it can emit the first $\hms{m_2}$ message. In the specific multi-trace which we analyze on Fig.\ref{fig:failsim_coreg}, $\hlf{l_2}$ receives one message $\hms{m_2}$ and then one $\hms{m_1}$. Because the maximum loop depth is $1$, we can only simulate one emission consecutively. If we simulate $\hlf{l_1}!\hms{m_1}$ first then we can't do anything afterwards because $\hms{m_1}$ can't be received by $\hlf{l_2}$ yet and we cannot simulate $\hlf{l_1}!\hms{m_2}$ due to the limitation on loops. If we simulate $\hlf{l_1}!\hms{m_2}$ first then, due to transformations of the interaction (which specifies that $\hlf{l_1}$ must emit all the $\hms{m_1}$ messages before it can emit the first $\hms{m_2}$ message), we can not simulate $\hlf{l_1}!\hms{m_1}$ afterwards and hence the analysis fails. In order for this specific analysis to succeed we would need to be able to instantiated $2$ loops. 

Let us remark that taking the total number of loops instead of the maximum depth of loops would help for this specific example (as the total number of loops is $2$) but would not help in all cases. For instance, if a multi-trace $\mu$ with $\mu_{|\{\hlf{l_1}\}} = \varepsilon$ and $\mu_{|\{\hlf{l_2}\}} = \hlf{l_2}?\hms{m_2}.\hlf{l_2}?\hms{m_2}.\hlf{l_2}?\hms{m_1}$ is analyzed, 
instantiating $3$ loops is required.

\subsection{Experimental assessment\label{ssec:experiments}}

\begin{figure*}
    \centering

    \begin{subfigure}[t]{\linewidth}
        \centering
        \resizebox{\textwidth}{!}{\input{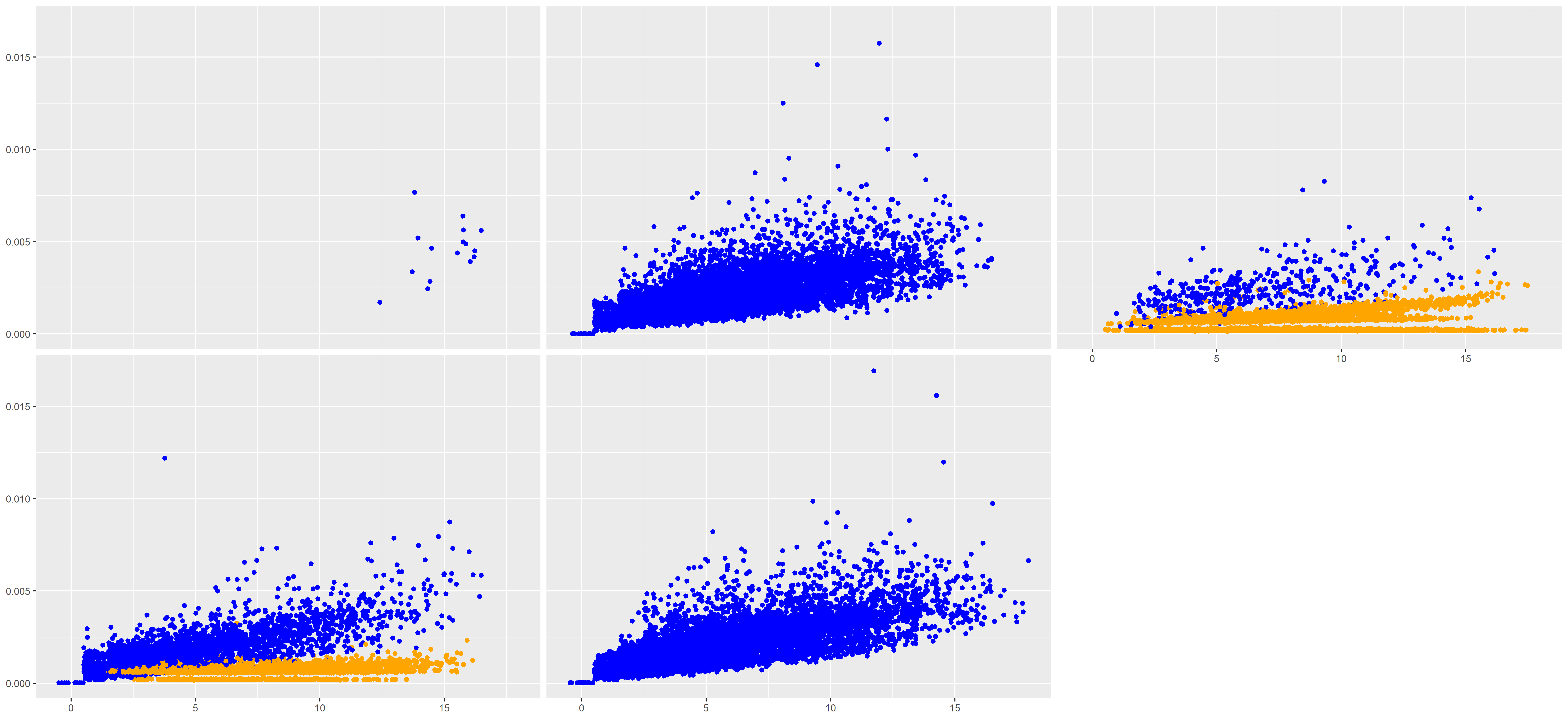}}
        \caption{Example with shallow but exhaustive exploration and exhaustive slicing.}
        \label{fig:experiment_i2}
    \end{subfigure}
    
    \begin{subfigure}[t]{\linewidth}
        \centering
        \resizebox{\textwidth}{!}{\input{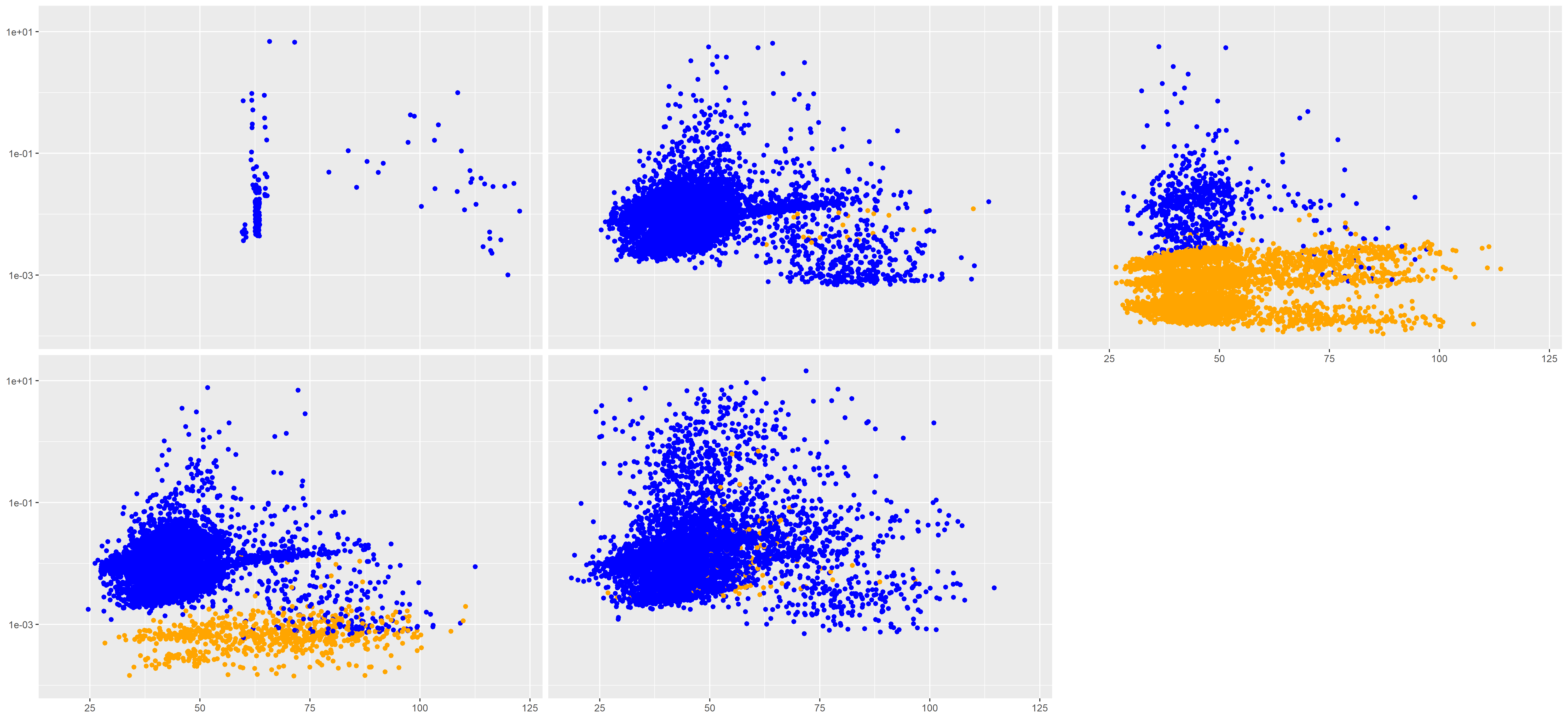}}
        \caption{Example with partial and random but in depth exploration and random selection of wide slices (length $\geq 1/3$ of original). Time in log scale.}
        \label{fig:experiment_i3}
    \end{subfigure}
    
    \caption{Some experiments (length and time resp. in $x$ and $y$ axes, blue and orange representing $Pass$ and $Inconc$ verdicts).}
    \label{fig:experiments}
\end{figure*}

We present a small experimental assessment of an implementation of our algorithm parameterized with the criterion shown in Sec.\ref{ssec:inst-param}.
The results of those experiments are summarized on Fig.\ref{fig:experiments}. 
We exploit two interactions and for each one of them which we denote by $i$:
\begin{enumerate}
    \item we generate a finite set $T(i) \subseteq \sigma_C(i)$ of accepted multi-traces given a certain partition $C$ of lifelines.
    \item for every accepted multi-trace $\mu$, we generate a set of slices. Hence, we obtain a certain $S(i) \subseteq \sliceOf{T(i)}$.
    \item we then generate three sets of mutants from those slices $M_{sa}(i)$, $M_{sc}(i)$, and $M_{ia}(i)$. Those sets of mutants are obtained from $S(i)$ as follows:

\begin{itemize}

    \item $M_{sa}(i)$ is obtained by swapping the relative positions of actions within a local trace. For instance, given $\mu = (\hlf{l_1}!\hms{m_1}.\hlf{l_1}!\hms{m_2},\hlf{l_2}?\hms{m_1}.\hlf{l_2}?\hms{m_2})$, $\mu' = (\hlf{l_1}!\hms{m_2}.\hlf{l_1}!\hms{m_1},\hlf{l_2}?\hms{m_1}.\hlf{l_2}?\hms{m_2})$ is one such mutant. Here, we have swapped the positions of $\hlf{l_1}!\hms{m_1}$ and $\hlf{l_1}!\hms{m_2}$ on local trace $\mu_{|\{\hlf{l_1}\}}$. 
    \item 
    $M_{sc}(i)$ is obtained by swapping local traces between two distinct slices related to the same component. For instance, given $\mu_1 = (\hlf{l_1}!\hms{m_1},\hlf{l_2}?\hms{m_1})$ and $\mu_2 = (\hlf{l_1}!\hms{m_2},\hlf{l_2}?\hms{m_2})$, $\mu' = (\hlf{l_1}!\hms{m_1},\hlf{l_2}?\hms{m_2})$ is one such mutants. 
    \item $M_{ia}(i)$ is obtained by inserting random actions. For instance, given $\mu = (\hlf{l_1}!\hms{m_1},\hlf{l_2}?\hms{m_1})$, $\mu' = (\hlf{l_1}!\hms{m_1}.\hlf{l_1}!\hms{m_3},\hlf{l_2}?\hms{m_1})$ is one such mutant.  
\end{itemize}
\end{enumerate}

The generated multi-trace sets i.e. $T(i)$, $S(i)$, $M_{sa}(i)$, $M_{sc}(i)$, and $M_{ia}(i)$ are 
used as input to feed our analysis algorithm. The median times (based on $5$ tries) required to analyze each of them are plotted on Fig.\ref{fig:experiments}. Each point corresponds to a multi-trace, with on the $x$ axis its length and on the $y$ axis its median analysis time. This time is in seconds, and for Fig.\ref{fig:experiment_i3}, we use a logarithmic scale. The color of the point corresponds to the verdict of the analysis: blue and orange respectively denote $Pass$ and $Inconc$.

For each interaction example (i.e. $i_1$ on Fig.\ref{fig:experiment_i2} and $i_2$ on Fig.\ref{fig:experiment_i3}), the $5$ plots correspond to the five sets $T(i)$, $S(i)$, $M_{sa}(i)$, $M_{sc}(i)$, and $M_{ia}(i)$. Legends written on each plot describe how the corresponding multi-traces have been generated.
The corresponding interaction diagram is drawn on the right of each figure, and statistics on the analysis time are given.

The experiments were performed using an Intel(R) Core(TM)i5-6360U CPU (2GHz) with 8 GB of RAM.
All the material required to reproduce them is publicly available in \cite{hibou_simu_usecases_slice_recognition}.

We can at first notice that the algorithm's performances are highly dependent on the nature of the input interaction. The more the interaction offers branching choices (loops, alternatives) and possible interleavings (weak sequencing and interleaving), the greater can the size of graph $\mathbb{G}$ be, with, as a consequence, worse performances. While example $i_1$ (Fig.\ref{fig:experiment_i2}) is rather sequential, this is not the case for $i_2$ (Fig.\ref{fig:experiment_i2}).

The dependence w.r.t. the size of the input multi-trace appears linear for interaction $i_1$ (Fig.\ref{fig:experiment_i2}) and is less noticeable for interaction $i_2$ (Fig.\ref{fig:experiment_i3}). The analysis time is much more dependent on the structure of the input multi-trace w.r.t. the interaction rather than on its size.

\subsubsection*{Slices} Our criterion is capable of correctly identifying slices in most cases, as illustrated with $S(i_1)$ and $S(i_2)$ (top middle plot in both Fig.\ref{fig:experiment_i2} and Fig.\ref{fig:experiment_i3}). $S(i_1)$ contains all the possible slices ($7996$) of the multi-traces from $T(i_1)$, and all of them are correctly identified (drawn in 
blue color). 
Even though $S(i_2)$ contains slices of much larger multi-traces (up to $30$ instances of the loops of $i_2$), most of those are still correctly identified. The few $Inconc$ verdicts correspond to cases where the criterion does not allow enough simulation steps, as illustrated in Sec.\ref{ssec:limitations_simulation}.

\subsubsection*{Mutants by adding random actions}
In most cases, when we add a random action to a multi-trace which is conform to a specification, it becomes non-conform. This is reflected on the top right plots of both Fig.\ref{fig:experiment_i2} and Fig.\ref{fig:experiment_i3} by the fact that most multi-trace analyses are inconclusive (
$Inconc$ verdict drawn in orange color). Let us recall that because of the limitations of our criterion illustrated in Sec.\ref{ssec:limitations_simulation}, we cannot return a $Fail$ verdict. In terms of performances, in the case of $M_{ia}(i_1)$ and $M_{ia}(i_2)$, we can obtain the $Inconc$ verdicts more quickly on average than the $Pass$ verdicts. This is because we use some optimizations on the exploration of the graph (we use local analyses to cut parts of $\mathbb{G}$).

\subsubsection*{Mutants by swapping actions}
Suppose the specifying interaction allows many possible interleavings of actions. In that case, swapping the positions of actions of a conform slice is more likely to yield a new multi-trace which is also conform.
This is reflected on the bottom left plots of Fig.\ref{fig:experiment_i2} and Fig.\ref{fig:experiment_i3}, where we can see that those mutants are conform in many cases.

\subsubsection*{Mutants by swapping local components
} 
This family of mutants is quite interesting given that for any such mutant obtained from two conform slices, the local components of the mutant are still all locally conform to the specification. This makes techniques such as using local analyses to reveal non-conformities useless. If the mutant is non-conform, then only a global analysis - i.e. matching the multi-trace to an accepted global scenario - can identify this non-conformity.
In any case, most of those mutants are conform because we only consider a labeled language (i.e. no message passing and no value passing). With message passing, there are likely mismatches between the messages that are passed between non-co-localized lifelines.
Still, in our purely labeled framework, some inconclusive verdicts are present for interaction $i_2$ (Fig.\ref{fig:experiment_i3}).


\subsubsection*{Concluding remarks} 

Let us remark that the presence of inconclusiveness is related to the fact that we have to identify strict slices of multi-traces and to the expressiveness of the specification language (co-regions, parallel loops etc.) and the trace language (multi-traces with co-localized lifelines). If we restrict the prerequisites of the analysis, we can use algorithms that do not return $Inconc$ verdicts.
The algorithms from \cite{revisiting_semantics_of_interactions_for_trace_validity_analysis} and \cite{a_small_step_approach_to_multi_trace_checking_against_interactions} can respectively identify full accepted global traces and full accepted multi-traces.
In~\cite{fsen_toappear2023}, we define an algorithm which, instead of using simulation, applies a lifeline removal operation on no-longer-observed lifelines in order to identify multi-prefixes of multi-traces defined on the discrete partition. Multi-prefixes are multi-slices in which missing actions are only located at the end of the local trace components of multi-traces.

For the algorithm based on simulation, these few experiments show that our tool can analyze representative multi-traces with respect to interactions with reasonable performances. In the presence of an inconclusive verdict, the user may:
\begin{itemize}
    \item revert to using one of the other aforementioned algorithms, if their prerequisites are valid (e.g. synchronization at the start etc.),
    \item start again with a more liberal criterion,
    \item or analyze the multi-trace by hand.
\end{itemize}

One such more liberal criterion would consist in 
having the measure $\lambda = \loopDepthAtPosBase(i)$ being multiplied by the size of the multi-trace $|\mu|$ without resetting the measure whenever rule $\ruleExec$ applies.
In essence, this amounts to instantiating the (nested) loops as many times as the number of actions in the multi-trace. 
With this criterion we would be able to recognize all the examples from Fig.\ref{fig:failsim}. 
However, this criterion can produce exponentially larger analysis graphs $\mathbb{G}$, with a strong impact on the algorithm's performance.
This increases the interest in looking into more selective criteria, such as the one which we proposed, at the expense of the completeness of the analysis.

\section{Related works\label{sec:related}}

The aim of Runtime Verification (RV) \cite{introduction_to_runtime_verification_BartocciFFR18,surveyRV_SanchezSABBCFFK19} is to test the conformity of an implemented system against a formal specification which may define a set of accepted and/or unwanted behaviors. To do so, traces - characterizing system executions - are collected by an instrumentation of the system and then confronted to the formal specification. If one such trace deviates from the specification (i.e. does not characterize an accepted behavior or do characterize an unwanted behavior) then the tester has found a bug in the implemented system. RV techniques include offline and online RV. In online approaches the confrontation to the formal specification takes place at the same time the system is being monitored via the instrumentation. This has the advantage of being able to monitor live reactive system as they are being executed (and expressing behaviors that can be extended arbitrarily many times). However online RV is quite constrained by requirements on the efficiency of the monitoring algorithm. Indeed, observed events must be analyzed quickly enough so that they don't stack and cause a memory overflow.
By contrast, in offline approaches, a trace is collected in a first step and then confronted to the specification. As a result, only finite traces can be analyzed but Offline Runtime Verification (ORV) has fewer constraints than its online counterpart.
In the following, we only consider ORV.

\subsection{RV for DS}

\begin{figure*}[h]
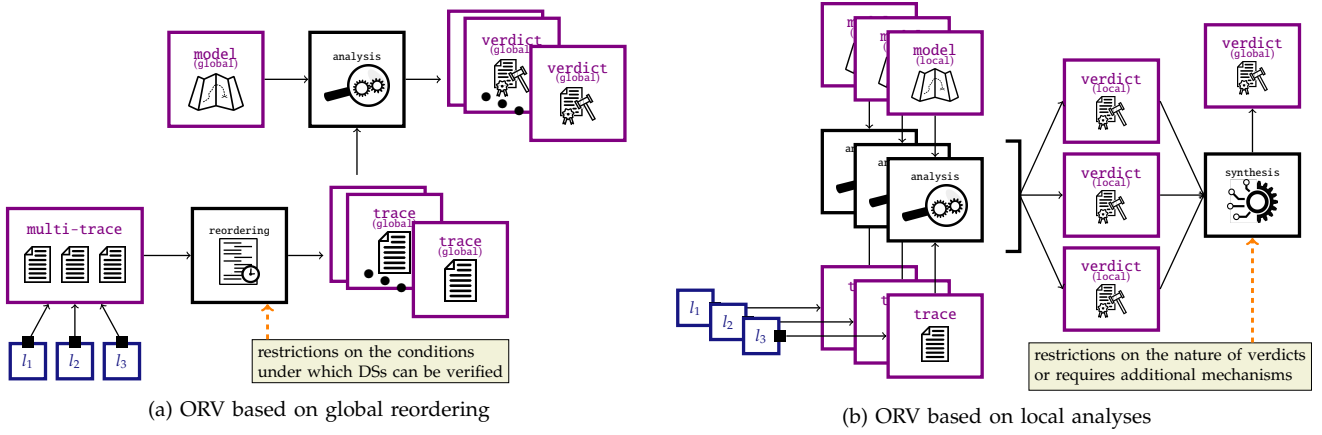

    \centering
    \begin{minipage}{.49\linewidth}
        \centering
        \begin{subfigure}[t]{\linewidth}
            \centering
            \scalebox{.625}{\input{figures/rvkinds/reordering.tex}}
            \caption{ORV based on global reordering}
            \label{fig:orv_reordering}
        \end{subfigure}
    \end{minipage}
    \begin{minipage}{.49\linewidth}
        \centering
        \begin{subfigure}[t]{\linewidth}
            \centering
            \scalebox{.625}{\input{figures/rvkinds/local_analyses.tex}}
            \caption{ORV based on local analyses}
            \label{fig:orv_locana}
        \end{subfigure}
    \end{minipage}
    \caption{ORV approaches from the literature}
    \label{fig:orv}
\end{figure*}

In the literature, most approaches for applying ORV to DS can be categorized as either \textbf{(1)} based on global reordering or \textbf{(2)} based on a synthesis of local analyses. Fig.\ref{fig:orv_reordering} and Fig.\ref{fig:orv_locana} respectively describe those two kinds of approaches.

\textbf{(1)} ORV approaches based on global reordering are often found for solving the Oracle problem in Model Based Testing (MBT). Despite the fact that MBT is supposed to be used at the design phase while RV more naturally concerns the operational phase of a system, algorithms to solve the oracle problem in MBT are technically very similar to offline RV algorithms: they both consist in analyzing an execution to detect non-conformance to a specification. In such MBT approaches, DS behaviors can be modeled using e.g. Input/Output Transition Systems (IOTS)~\cite{dioco_HieronsMN08,scenarios_based_testing_of_systems_with_distributed_ports} or Communicating Sequential Processes (CSP)~\cite{conformance_relations_for_distributed_testing_based_on_CSP_CavalcantiGH11}. 
In those works, local observations are intertwined to associate them with global traces that can be analyzed w.r.t. models.
This is described on Fig.\ref{fig:orv_reordering} as "reordering".
Those approaches however require local observations to be synchronized, based on the states in which each of the logging processes terminate (e.g. based on quiescence states \cite{dioco_HieronsMN08}, termination/deadlocks \cite{conformance_relations_for_distributed_testing_based_on_CSP_CavalcantiGH11} or pre-specified synchronization points \cite{scenarios_based_testing_of_systems_with_distributed_ports}). The works~\cite{the_oracle_problem_when_testing_from_mscs,passive_conformance_testing_of_service_choreographies} focus on verifying distributed executions against models of interaction (While \cite{the_oracle_problem_when_testing_from_mscs} concern MSC, \cite{passive_conformance_testing_of_service_choreographies} considers choreographic languages). Similarly to the MBT approaches, they rely on synchronization hypotheses and on reconstructing a global trace by ordering events occurring at the distributed interfaces (by exploiting the observational power of testers~\cite{the_oracle_problem_when_testing_from_mscs} or timestamp information assuming clock synchronization~\cite{passive_conformance_testing_of_service_choreographies}). In both MBT and RV inspired approaches, synchronization hypotheses restrict the conditions under which DS can be verified.

\textbf{(2)} ORV approaches based on a synthesis of local analyses are often found whenever local monitors are synthesized from a global property. Using temporal logic properties as specifications for RV of DS has been widely studied. In particular, using (variants of) the Linear Temporal Logic (LTL). For example, \cite{SenVAR04} extends the LTL in a framework where formulas relate to sub-systems and what they know about the other sub-systems (e.g. in which local states they are). There is a collection of decentralized observers sharing information about the sub-system executions affecting the validity of the formula. In other works \cite{Falcone16,El-HokayemF17}, the reference properties are expressed at the (global) system level. A collection of decentralized observers is generated from such properties, using LTL formula rewriting, so that there is no need for a global verifier gathering all information on the execution of the system. 
Approaches based on logics derived from modal $\mu$-calculus (e.g. safety Hennessy-Milner Logic sHML) also implement synthesis of monitors. 
In \cite{trace_partitioning_and_local_monitoring_for_asynchronous_components_AttardF17,better_late_than_never_or_verifying_asynchronous_components_at_runtime_AttardAAFIL21}, instrumentation and monitors for programs running on the Erlang Virtual Machine are synthesized from properties written in sHML.
Behavioral models can also be used in such approaches. 
For instance \cite{monitoring_networks_through_multiparty_session_types_BocchiCDHY17} proposes local online RV against projections of multiparty session types satisfying some conditions that enforce intended global behaviors (possibly combined with analysis through a global safe router). More generally, the fact that every local behavior is conform w.r.t. its corresponding local specification, does not generally implies that the global behavior is conform. Hence, the synthesis step described on Fig.\ref{fig:orv_locana} generally requires additional mechanisms. 
In most cases this consists in having local monitors communicate with one another (e.g. in \cite{Falcone16}). This in return requires a very specific instrumentation which goes well beyond simply logging traces.

\cite{a_taxonomy_for_classifying_runtime_verification_tools_FalconeKRT21} provides a taxonomy of RV tools. Specification languages which are most commonly found and fitted with tools derive from Temporal Logic.
Unlike such logics, behavioral models of interactions, which are particularly adapted for specifying DS, are more rarely used.
The works~\cite{the_oracle_problem_when_testing_from_mscs,passive_conformance_testing_of_service_choreographies,monitoring_networks_through_multiparty_session_types_BocchiCDHY17,a_novel_runtime_verification_solution_for_iot_systems_InckiA18} focus on verifying distributed executions against models of interaction (While \cite{the_oracle_problem_when_testing_from_mscs,a_novel_runtime_verification_solution_for_iot_systems_InckiA18} concern MSC, \cite{passive_conformance_testing_of_service_choreographies} considers choreographic languages, \cite{monitoring_networks_through_multiparty_session_types_BocchiCDHY17} session types and \cite{coping_with_bad_agent_interaction_protocols_when_monitoring_partially_observable_multiagent_systems_AnconaFFM18} trace expressions).

\subsection{RV under partial observation}

In this context, {\em partial observation} signifies that the multi-trace logged by the instrumentation does not characterize the entire execution of the DS. More concretely, some events may be missing from the multi-trace w.r.t. an ideal multi-trace which would have been observed with ideal conditions of observation. The notion of {\em multi-trace slice} from Sec.\ref{ssec:slices} proposes a specific definition of partial observation, where events may be missing at the beginning and/or the end of each local trace component of the multi-trace (independently).

In the literature, ORV approaches for DS which are tolerant to partial observation are rare. 
In \cite{coping_with_bad_agent_interaction_protocols_when_monitoring_partially_observable_multiagent_systems_AnconaFFM18}, the authors are interested in generating monitors for distributed RV and in particular, how those monitors can be adapted for a specific definition of partial observation. Here, messages are exchanged via channels which are associated to an observability likelihood. \cite{coping_with_bad_agent_interaction_protocols_when_monitoring_partially_observable_multiagent_systems_AnconaFFM18} uses trace expressions as specifications and proposes transformations that can adapt those expressions to partial observation by removing or making optional a number of identified unobservable events.


The fact that such missing actions may be located anywhere in a globally sequential behavior (as illustrated in Fig.\ref{fig:missing_actions}) is particularly challenging for ORV.
Indeed, if we base ORV on the recognition of a global behavior against a behavioral model (e.g. choreographies \cite{passive_conformance_testing_of_service_choreographies}, MSC \cite{the_oracle_problem_when_testing_from_mscs} or interactions \cite{revisiting_semantics_of_interactions_for_trace_validity_analysis}) then the observed behavior must match a behavior that can be run through the model. This is not possible in the presence of missing actions.

In \cite{fsen_toappear2023}, a partial solution to this problem is given via the use of a lifeline removal operation which is applied on no-longer-observed sub-systems. During the analysis, once a local trace component is entirely re-enacted, the behavioral model is simplified by removing all behavior that concerns the no-longer-observed component. This enables the analysis to be pursued even if the observation on some components has ceased too early.

\subsection{Position of the contribution}

We propose an approach for ORV which confronts observed multi-traces to positive requirements in the form of behavioral models.
Our approach does away with synchronization hypotheses on the beginning and the end of observation in between distant observers.
As a result, the observed multi-traces can be slices of multi-traces that fully characterize executions of the DS (as defined in Sec.\ref{sec:multitraces}).
We use interactions \cite{equivalence_of_denotational_and_operational_semantics_for_interaction_languages} as formal specifications.

Neither monitorability nor implementability are in the scope of this paper. Implementability refers to whether or not a DS specified by an interaction can be implemented concretely e.g. for HMSC in \cite{realizability_of_high_level_message_sequence_charts_closing_the_gaps_Lohrey03,realizability_and_verification_of_msc_graphs}. DS specified by interactions which are not implementable cannot consistently produce traces which are conform to the specification. 
Monitorability is generally associated to properties written in temporal logic or Hennesy-Milner logic and refers to whether or not is possible to validate or invalidate the satisfaction of a property via monitoring (e.g. see partial monitorability in \cite{adventures_in_monitorability_from_branching_to_linear_time_and_back_again_AcetoAFIL19}).

We propose an algorithm which checks whether or not an input multi-trace is a slice of a multi-trace that belong to the semantics of an input interaction. 
Instrumentation or recording methodology, which covers the manner with which such multi-traces can be obtained, is out of the scope of this paper.
Because we do not have strong hypotheses on the synchronization of observers, a very simple and lightweight instrumentation could be used.
Depending on the usecase and its level of abstraction, we could either use logs printed by the different machines (which correspond to co-localizations) or capture and filter network traffic in and out of those machines e.g. with the pcap library \cite{libpcap} or Wireshark \cite{wireshark} as demonstrated in e.g. \cite{converting_pcaps_into_weka_mineable_data}.

\section{The tool\label{sec:tool}}

Our tool HIBOU (for Holistic Interaction Behavioral Oracle Utility)\footnote{The word "hibou" stands for owl in French.} provides utilities for drawing, manipulating and exploiting interaction. The version which we describe in this paper is version~0.8.4.
Its code (written using Rust) is hosted on GitHub in \cite{hibou_label}.

\subsection{Encoding of interactions and traces\label{ssec:tool_encode}}

The specification language of our tool covers different notions:
\begin{itemize}
    \item system signatures are defined in ".hsf" (hibou signature file) files as illustrated on Fig.\ref{fig:encoding_signature}. In those files, we declare the set of messages $M$ and lifelines $L$ that constitute the signature of the Distributed System.
    \item interaction terms are specified in ".hif" (hibou interaction file) files as illustrated on Fig.\ref{fig:encoding_interaction}.
    \item multi-traces are encoded in ".htf" (hibou trace file) files as illustrated on Fig.\ref{fig:encoding_multitrace}
\end{itemize}

\begin{figure}[h]
    \centering
\begin{lstlisting}[language=hibou_hsf,style=coloured_hibou_hsf]
@message{
	<@\hms{m1}@>;<@\hms{m2}@>;<@\hms{m3}@>;<@\hms{m4}@>;<@\hms{m5}@>
}
@lifeline{
	<@\hlf{l1}@>;<@\hlf{l2}@>;<@\hlf{l3}@>
}
\end{lstlisting}
    \caption{Declaration of a signature.}
    \label{fig:encoding_signature}
\end{figure}

On Fig.\ref{fig:encoding_interaction} we provide the encoding of the interaction model that is drawn on Fig.\ref{fig:interaction_diagram}. This textual format (found in ".hif" files) is similar to the notations used in Sec.\ref{sec:interactions}. The basic building blocks are the empty interaction $\varnothing$ encoded as "\lstinline[language=hibou_hsf,style=coloured_hibou_hsf_inline]{o}" (lowercase letter o) and communication actions. For actions, we use notations inspired by WebSequenceDiagrams \cite{website_websequencediagrams} / PlantUML \cite{website_plantuml}:
\begin{itemize}
    \item instead of $l_1!m$ we write \lstinline[language=hibou_hsf,style=coloured_hibou_hsf_inline]{l1 -- m ->|}
    \item instead of $l_1?m$ we write \lstinline[language=hibou_hsf,style=coloured_hibou_hsf_inline]{m -> l1}
    \item $strict(l_1!m,seq(l_2?m,l_3?m))$ becomes \lstinline[language=hibou_hsf,style=coloured_hibou_hsf_inline]{l1 -- m ->(l1,l2)}
\end{itemize}
The $strict$, $seq$, $par$, $coreg_r$, $alt$, $loop_S$, $loop_W$ and $loop_P$ constructors directly match text labels in the encoding (we use parentheses to specify concurrent regions e.g. \lstinline[language=hibou_hsf,style=coloured_hibou_hsf_inline]{coreg(l2)} for $coreg_{\{l_2\}}$)
and we use parentheses to enclose sub-interactions.
For any associative operator, we allow $n$-ary notations. For instance, $seq(i_1,i_2,i_3)$ is interpreted as $seq(i_1,seq(i_2,i_3))$.

\begin{figure}[h]
    \centering
\begin{lstlisting}[language=hibou_hsf,style=coloured_hibou_hsf]
seq(
  coreg(<@\hlf{l2}@>)(
    alt(
      <@\hlf{l1}@> -- <@\hms{m1}@> -> (<@\hlf{l2}@>,<@\hlf{l3}@>),
      o
    ),
    loopW(
      alt(
        <@\hlf{l1}@> -- <@\hms{m2}@> -> <@\hlf{l2}@>,
        <@\hlf{l2}@> -- <@\hms{m3}@> -> <@\hlf{l3}@>
    ) )
  ),
  loopP(
    seq(
      <@\hlf{l3}@> -- <@\hms{m4}@> -> <@\hlf{l2}@>,
      <@\hlf{l2}@> -- <@\hms{m5}@> -> <@\hlf{l3}@>
) ) )
\end{lstlisting}
    \caption{Encoding of the interaction from Fig.\ref{fig:interaction_example}}
    \label{fig:encoding_interaction}
\end{figure}

The encoding of multi-traces (within ".htf" files) is straightforward, as illustrated on Fig.\ref{fig:encoding_multitrace}. Co-localizations are declared between square brackets:
\begin{itemize}
    \item either in plain text or via a keyword:
    \item \lstinline[language=hibou_htf,style=coloured_hibou_htf_inline]{[#all]} signifies that all lifelines are in this co-localization and thus we have a global trace
    \item \lstinline[language=hibou_htf,style=coloured_hibou_htf_inline]{[#any]} signifies that the lifelines appearing on the actions of the trace component are taken into account to define the co-localization
    \item if a lifeline from $L$ is left without any co-localization that contains it, then a dedicated co-localization with an empty trace component is created for it
\end{itemize}

\begin{figure}[h]
    \centering
\begin{lstlisting}[language=hibou_htf,style=coloured_hibou_htf]
[#any] <@\hlf{l1}@>!<@\hms{m1}@>.<@\hlf{l2}@>?<@\hms{m1}@>.<@\hlf{l2}@>?<@\hms{m4}@>;
[<@\hlf{l3}@>] <@\hlf{l3}@>?<@\hms{m1}@>.<@\hlf{l3}@>!<@\hms{m4}@>
\end{lstlisting}
    \caption{Encoding of the multi-trace from Fig.\ref{fig:mu_coloc2clocks}}
    \label{fig:encoding_multitrace}
\end{figure}

\subsection{Command line interface}


The HIBOU tool takes the form of a Command Line Interface which includes the following commands:
\begin{itemize}
    \item "\texttt{hibou draw <.hsf file> <.hif file>}" draws an interaction in a graphical form (which we have used in this paper);
    \item "\texttt{hibou explore} <.hsf file> <.hif file> <.hcf file>?" explores the semantics of an interaction i.e. it computes and may display parts of the execution tree of that interaction as well as generate accepted multi-traces;
    \item "\texttt{hibou analyze} <.hsf file> <.hif file> <.htf file> <.hcf file>?" analyzes a multi-trace against an interaction i.e. it computes and may display parts of an analysis graph related to that analysis and returns a global verdict.
\end{itemize}

For the "\texttt{explore}" and "\texttt{analyze}" commands we may also provide a ".hcf" (hibou configuration file) file to further configure the exploration or analysis process (so as to replace the default configuration).

\subsection{Semantics exploration \& heuristics\label{ssec:tool_explo}}

The operational semantics from Sec.\ref{ssec:semantics} is implemented in the tool which enables exploring execution trees for any interaction via the "\texttt{hibou explore}" command. This exploration can be configured via a ".hcf" file, within a \lstinline[language=hibou_hsf,style=coloured_hibou_hsf_inline]{@explore_option} section, as illustrated on Fig.\ref{fig:explo_config}. 

\begin{figure}[h]
    \centering
\begin{lstlisting}[language=hibou_hsf,style=coloured_hibou_hsf]
@explore_option{
  loggers = [graphic[svg,vertical],
             tracegen[generation = exact,
               partition={(l1,l2),(l3)}]
            ];
  strategy = HCS;
  filters = [max_depth = 35,
             max_loop_depth = 4,
             max_node_number = 250];
  priorities = random
}
\end{lstlisting}
    \caption{An example configuration for an exploration}
    \label{fig:explo_config}
\end{figure}

From each new node of the tree, immediately executable actions are determined and their execution scheduled for the exploration. This scheduling order is by default the lexicographic order of their positions (hence actions "at the top" are prioritized over those "at the bottom"). We can change this by setting either a random order or some priorities (e.g. prioritizing emissions over receptions or actions outside or within loops). The exploration of the tree is then performed according to a certain (deterministic) search strategy, which can be Breadth First, Depth First or a High Coverage Search that favors paths sharing fewer common prefixes. As interactions may contain loops, an exploration without constraints would not terminate. With HIBOU, we can set some limits on the exploration (e.g. a maximum depth, a maximum number of loops or of nodes).

This process can be observed by loggers, which can write outputs describing the exploration of the execution tree.
A graphic logger provides a graphical representation of the analysis graph.
It is enabled via \lstinline[language=hibou_hsf,style=coloured_hibou_hsf_inline]{loggers = [graphic[svg,vertical]]}. 
We can configure its output format (svg or png) and we can decide whether the drawing is drawn from top to bottom (vertical) or left to right (horizontal). 
A trace generation logger can be set up via the \lstinline[language=hibou_hsf,style=coloured_hibou_hsf_inline]{tracegen[generation = exact,partition=..]}. 
For each path in the tree it may generate a ".htf" file containing a multi-trace which corresponds to this path. The keyword \lstinline[language=hibou_hsf,style=coloured_hibou_hsf_inline]{exact} signifies that only exactly accepted traces will be generated (as opposed to using \lstinline[language=hibou_hsf,style=coloured_hibou_hsf_inline]{prefix} for generating all prefixes or \lstinline[language=hibou_hsf,style=coloured_hibou_hsf_inline]{terminal} for generating a trace only on terminal nodes of the explored tree).
\lstinline[language=hibou_hsf,style=coloured_hibou_hsf_inline]{partition=..} is used to specify the partition of lifelines on which the multi-traces are generated (we can use the \lstinline[language=hibou_hsf,style=coloured_hibou_hsf_inline]{discrete} and  \lstinline[language=hibou_hsf,style=coloured_hibou_hsf_inline]{trivial} keywords for the $C_d$ and $C_t$ partitions).

\subsection{Multi-trace analysis\label{ssec:tool_ana}}

HIBOU implements several configurable algorithms for analyzing multi-traces w.r.t. interactions.
This process corresponds to the "\texttt{hibou analyze}" command and can be configured via a \lstinline[language=hibou_hsf,style=coloured_hibou_hsf_inline]{@analyze_option} section as illustrated on Fig.\ref{fig:ana_config}. Most options are defined in common with those of the exploration process. However, some are specific to analyses.

\begin{figure}[h]
    \centering
\begin{lstlisting}[language=hibou_hsf,style=coloured_hibou_hsf]
@analyze_option{
  loggers = [graphic[svg]];
  analysis_kind = simulate
                     [before = true, 
                      loop max depth, 
                      reset = true,
                      multiply = false,
                      act num = 10];
  strategy = DFS;
  priorities = [simu = -1];
  goal = WeakPass
}
\end{lstlisting}
    \caption{An example configuration for an analysis}
    \label{fig:ana_config}
\end{figure}

"\lstinline[language=hibou_hsf,style=coloured_hibou_hsf_inline]{analysis_kind}" can be set to specify which analysis algorithm should be used. Among others it proposes:
\begin{itemize}
    \item "\lstinline[language=hibou_hsf,style=coloured_hibou_hsf_inline]{analysis_kind = accept}" which identifies exactly accepted multi-traces and corresponds to the algorithm from \cite{a_small_step_approach_to_multi_trace_checking_against_interactions}. It returns $Pass$ if this is the case and $Fail$ otherwise.
    \item "\lstinline[language=hibou_hsf,style=coloured_hibou_hsf_inline]{analysis_kind = prefix}" which identifies behaviors which are projections of prefixes of accepted global traces. It is adapted from \cite{a_small_step_approach_to_multi_trace_checking_against_interactions} and returns $Pass$ if the behavior is exactly accepted, $WeakPass$ if it corresponds to such a prefix and $Fail$ otherwise.
    \item "\lstinline[language=hibou_hsf,style=coloured_hibou_hsf_inline]{analysis_kind = eliminate}" which identifies prefixes of accepted multi-traces (events missing at the end of local components) defined over the discrete partition. 
    Prefixes in the sense of multi-traces are not necessarily projections of prefixes of global traces. As such, to underline this difference, we may call them multi-prefixes.
    It corresponds to the algorithm from \cite{fsen_toappear2023} which is based on the use of a lifeline removal operation which is applied on no-longer-observed sub-systems.
    This algorithm returns $Pass$ if the behavior is exactly accepted, $WeakPass$ if it corresponds to a multi-prefix and $Fail$ otherwise.
    \item "\lstinline[language=hibou_hsf,style=coloured_hibou_hsf_inline]{analysis_kind = simulate[...]}" which implements the algorithm from this paper. It can be configured with a number of options:
    \begin{itemize}
        \item the "\lstinline[language=hibou_hsf,style=coloured_hibou_hsf_inline]{loop}" option sets up a stopping criterion on the number of loops that can be instantiated in simulation. It can either correspond to the maximum number of loops in the interaction, the maximum depth of its nested loops or to a specific ad-hoc number.
        \item the "\lstinline[language=hibou_hsf,style=coloured_hibou_hsf_inline]{act}" option sets up a stopping criterion on the number of actions that can be executed in simulation. It can either correspond to the number of actions outside loops or a specific ad-hoc number.
        \item if set, the "\lstinline[language=hibou_hsf,style=coloured_hibou_hsf_inline]{before}" option activates simulation before the beginning of observation in addition to after the end of observation on local components. With this option we can recognize slices and without it we can only recognise multi-prefixes.
        \item if set, the "\lstinline[language=hibou_hsf,style=coloured_hibou_hsf_inline]{multiply}" option multiplies the criteria on loops and actions by the size of the multi-trace to analyze.
        \item if set, the "\lstinline[language=hibou_hsf,style=coloured_hibou_hsf_inline]{reset}" option makes so that the measure is reset after a step of execution (if not, then we have a set number of simulation steps for the whole analysis, independently of the use of $\ruleExec$)
    \end{itemize}
    This algorithm may return a $Pass$ or $WeakPass$ if the multi-trace is exactly accepted or either a prefix or a slice of an accepted one or return $WeakFail$ if this is not the case.
\end{itemize}

With the "\lstinline[language=hibou_hsf,style=coloured_hibou_hsf_inline]{goal}" option, we can force the termination of the search once a sufficient verdict is found e.g. $WeakPass$.

\section{Conclusion}

In this paper, we propose a new ORV approach which can be adapted to various observation architectures and is tolerant to the absence of synchronization between local observers. Different sub-systems deployed on the same computer can be modeled by several co-localized lifelines. 
The notion of co-localization allows us to use the same approach to treat the analysis of both global traces (in the case where the system is centralized i.e. all sub-systems are deployed on the same machine and hence all corresponding lifelines are co-localized) and multi-traces (when the system is fully distributed i.e. all sub-systems are deployed on distinct machines and no two lifelines are co-localized). 
Moreover, our handling of partial observation allows taking into account situations where the executions of some sub-systems are not (or are partially) observable due to technical limitations, in particular related to missing observers or the absence of synchronization mechanisms between distant observers. 

Multi-trace analysis combines steps of consumption of actions present in the multi-trace and simulation steps to guess missing (i.e. unobserved) actions. The simulation steps are controlled by a criterion to ensure that the analysis is performed in a finite time. 
At then end of the analysis, two verdicts may be emitted: $Pass$ when a slice is recognized and $Inconc$ if not.
Because actions may be missing at the beginning of local traces and because interactions may include loops, further characterizing the inconclusive verdict may require arbitrarily many simulation steps and is thus not tractable in all generality.
Since accepted multi-traces of an interaction are determined via an operational-style semantics, we can visualize follow-up interactions. Our approach being implemented into the HIBOU tool, this allows users to write and debug interactions at the design stage.

\subsubsection*{Acknowledgements} The research leading to these results has received funding from the European Union’s Horizon Europe programme under grant agreement No 101069748 – SELFY project. 

\bibliographystyle{IEEEtranS}
\bibliography{
biblio/biblio_interactions,
biblio/biblio_sd,
biblio/biblio_rv,
biblio/biblio_test,
biblio/biblio_semantics,
biblio/biblio_others
}

\appendix

\section{Details on the denotational semantics with co-regions}

\subsection{Operators on sets of traces\label{anx:proof_conditional_sequencing}}

\begin{definition}[Interleaving]
\label{anx_def:interleaving}
The set $t_1 \globalInterleaving t_2$ of interleavings of traces $t_1$ and $t_2$ is defined by:
\[
\begin{array}{ccl}
\varepsilon \globalInterleaving t_2 & =  & \{ t_2 \} \\
t_1 \globalInterleaving \varepsilon  & =  & \{ t_1 \} \\
(a_1.t_1) \globalInterleaving (a_2.t_2) & = & 
\{a_1.t ~|~ t \in t_1 \globalInterleaving (a_2.t_2)\}
\cup \{a_2.t ~|~ t \in (a_1.t_1) \globalInterleaving t_2\}
\end{array}
\]
\end{definition}

\begin{property}
For any traces $t_1$ and $t_2$ we have:
\[
t_1 \globalInterleaving t_2 = t_1 ~\globalConditionalSequencing{L}~ t_2
\]
\end{property}

\begin{proof}
For proving $t_1 \globalInterleaving t_2 \subseteq t_1 ~\globalConditionalSequencing{L}~ t_2$ we can reason by induction on trace $t_1$.
See lemma \texttt{cond\_seq\_covers\_interleaving\_1} in Coq proof \cite{coq_hibou_label_eqsem_with_coregions}.\\

For proving $t_1 ~\globalConditionalSequencing{L}~ t_2 \subseteq t_1 \globalInterleaving t_2$ we can reason by induction on the conditions for $t \in t_1 ~\globalConditionalSequencing{L}~ t_2$.
See lemma \texttt{cond\_seq\_covers\_interleaving\_2} in Coq proof \cite{coq_hibou_label_eqsem_with_coregions}.\\
\end{proof}

\begin{definition}[Conflict]
\label{anx_def:conflict}
We define a conflict operator $\doubleVerticalTimes : \mathbb{T}_{|L} \times L \rightarrow \{\top,\bot\}$ such that:
\[
\varepsilon \doubleVerticalTimes l = \bot \hspace{1cm} \mbox{and} \hspace{1cm} (a.t) \doubleVerticalTimes l =  (\theta(a) = l) \vee (t \doubleVerticalTimes l)
\]
\end{definition}

\begin{property}
[Conflict is conditional conflict with empty concurrent region]
\label{anx_lem:conflict_is_condconf_nolifeline}
For any traces $t$ and lifeline $l$ we have:
\[
(t \doubleVerticalTimes l) \Leftrightarrow (t ~\globalConditionalSequencing{\emptyset}~ l) 
\]
\end{property}

\begin{proof}
We can reason by induction on trace $t$.
See lemma \texttt{no\_condconf\_no\_lifelines\_charac} in Coq proof \cite{coq_hibou_label_eqsem_with_coregions}.
\end{proof}

\begin{definition}[Weak Sequencing]
\label{anx_def:weak_sequencing}
The set $t_1 \globalWeakSeq t_2$ of weak sequencings of traces $t_1$ and $t_2$ is defined by:
\[
\begin{array}{cccl}
\varepsilon \globalWeakSeq t_2 & =  & \{ t_2 \} &\\
t_1 \globalWeakSeq \varepsilon & =  & \{ t_1 \} & \\
(a_1.t_1) \globalWeakSeq (a_2.t_2) & =  & &
\{ a_1.t ~|~ t \in t_1 \globalWeakSeq (a_2.t_2) \} \\
& & \cup & \{ a_2.t ~|~ t \in (a_1.t_1) \globalWeakSeq t_2,\; \neg (a_1.t_1 \doubleVerticalTimes \theta(a_2))\}
\end{array}
\]
\end{definition}

\begin{property}
For any traces $t_1$ and $t_2$ we have:
\[
t_1 \globalWeakSeq t_2 = t_1 ~\globalConditionalSequencing{\emptyset}~ t_2
\]
\end{property}

\begin{proof}
For proving $t_1 \globalWeakSeq t_2 \subseteq t_1 ~\globalConditionalSequencing{\emptyset}~ t_2$ we can reason by induction on trace $t_1$.
See lemma \texttt{cond\_seq\_covers\_weak\_seq\_1} in Coq proof \cite{coq_hibou_label_eqsem_with_coregions}.\\

For proving $t_1 ~\globalConditionalSequencing{\emptyset}~ t_2 \subseteq t_1 \globalWeakSeq t_2$ we can reason by induction on the conditions for $t \in t_1 ~\globalConditionalSequencing{\emptyset}~ t_2$.
See lemma \texttt{cond\_seq\_covers\_weak\_seq\_2} in Coq proof \cite{coq_hibou_label_eqsem_with_coregions}.\\
\end{proof}

\subsection{Characterization of the pruning relations\label{anx:proof_pruning}}

\begin{theorem}
[An interaction that can't be pruned has conflicts on all accepted traces]
\label{anx_th:cannot_prune_denotational}
For any $L' \subseteq L$ and any $i \in \mathbb{I}$ we have:
\[
(i \isNotPruneOf{L'})
\Rightarrow 
(\forall~t \in \rho(i),~\exists~l \in L',~ t ~\globalConditionalSequencing{\emptyset}~ l)
\]
\end{theorem}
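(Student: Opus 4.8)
The plan is to prove this by induction on the derivation of $i \isNotPruneOf{L'}$, which, because each of the three rules of Def.\ref{def:pruning} that conclude a judgement $\cdot \isNotPruneOf{L'}$ has only premises of that same shape, amounts to a structural induction on $i$ with a case split on the last rule applied — and, crucially, requires no mutual induction with the companion relation $\isPruneOf{L'}$. Before starting, I would record the translation of the goal: since the concurrent region is empty here, $t \globalConditionalSequencing{\emptyset} l$ holds exactly when the trace $t$ contains at least one action on lifeline $l$, so the statement to prove reads ``every global trace in $\rho(i)$ involves some action on a lifeline of $L'$''.

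The one preliminary lemma I would isolate is that the two scheduling operators on sets of traces are letter-preserving: for any $\diamond \in \{\globalStrictSeq\} \cup \{\globalConditionalSequencing{r} \mid r \subseteq L\}$, any traces $t_1, t_2$, and any $t \in t_1 \diamond t_2$, both $t_1$ and $t_2$ occur as subsequences of $t$. For $\globalStrictSeq$ this is immediate since $t = t_1 \globalStrictSeq t_2$ is plain concatenation; for $\globalConditionalSequencing{r}$ it follows by a routine induction on the recursive definition of $\globalConditionalSequencing{r}$, every recursive clause prepending exactly one letter taken either from $t_1$ or from $t_2$. The consequence I actually use is monotonicity: if $t_1 \globalConditionalSequencing{\emptyset} l$ or $t_2 \globalConditionalSequencing{\emptyset} l$, then $t \globalConditionalSequencing{\emptyset} l$, because ``containing an action on $l$'' is preserved when passing to a supersequence.

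The induction itself is then short. In the base case $i = a$ with $\theta(a) \in L'$, we have $\rho(a) = \{a\}$ and $a \globalConditionalSequencing{\emptyset} \theta(a)$ holds, so $l = \theta(a) \in L'$ witnesses the claim. If $i = alt(i_1,i_2)$, the rule forces $i_1 \isNotPruneOf{L'}$ and $i_2 \isNotPruneOf{L'}$; any $t \in \rho(alt(i_1,i_2)) = \rho(i_1) \cup \rho(i_2)$ belongs to one of $\rho(i_1), \rho(i_2)$, and the induction hypothesis for that sub-interaction supplies the desired $l \in L'$. If $i = f(i_1,i_2)$ with $f \in \{strict\} \cup \{coreg_r \mid r \subseteq L\}$, the rule forces $i_j \isNotPruneOf{L'}$ for some $j \in \{1,2\}$; writing $t \in \rho(f(i_1,i_2)) = \rho(i_1) \diamond \rho(i_2)$ as $t \in t_1 \diamond t_2$ with $t_1 \in \rho(i_1)$ and $t_2 \in \rho(i_2)$, the induction hypothesis on $i_j$ gives $l \in L'$ with $t_j \globalConditionalSequencing{\emptyset} l$, and the preliminary lemma lifts this to $t$. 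Finally, $i = \varnothing$ and $i = loop_k(i_1)$ are vacuous cases, since no rule of Def.\ref{def:pruning} concludes $\varnothing \isNotPruneOf{L'}$ or $loop_k(i_1) \isNotPruneOf{L'}$ (a loop is always prunable, to $\varnothing$). The only mildly delicate point — hence the step I expect to need the most care — is the letter-preservation lemma for $\globalConditionalSequencing{r}$; everything else is direct case analysis against the pruning rules and the clauses defining $\rho$.
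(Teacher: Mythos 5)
Your proof is correct and follows essentially the same route as the paper's: the paper first identifies $\globalConditionalSequencing{\emptyset}$ with the plain conflict predicate $\doubleVerticalTimes$ (your ``contains an action on $l$'' reading) and then proceeds by structural induction on $i$, which coincides with your induction on the derivation of $i \isNotPruneOf{L'}$. The letter-preservation/monotonicity lemma for $\globalStrictSeq$ and $\globalConditionalSequencing{r}$ that you isolate, and your remark that no mutual induction with $\isPruneOf{L'}$ is needed, are sound details that the paper leaves implicit (deferring to the Coq development).
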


\begin{proof}
At first we remark that as per Prop.\ref{anx_lem:conflict_is_condconf_nolifeline}, 
$\globalConditionalSequencing{\emptyset} = \doubleVerticalTimes$.
We can then reason by induction on the structure of interaction $i$.
See theorem \texttt{cannot\_prune\_characterisation\_with\_sem\_de} in Coq proof \cite{coq_hibou_label_eqsem_with_coregions}.
\end{proof}

\begin{property}
[An interaction which can be pruned on all lifelines accepts the empty trace]
\label{anx_lem:prune_all_equiv_accept_nil_1}
For any $i,i' \in \mathbb{I}$ we have:
\[
(i \isPruneOf{L} i') \Rightarrow (\varepsilon \in \rho(i))
\]
\end{property}

\begin{proof}
We can reason by induction on the structure of interaction $i$.
See lemma \texttt{prune\_all\_equiv\_accept\_nil\_1} in Coq proof \cite{coq_hibou_label_eqsem_with_coregions}.
\end{proof}

\begin{property}
[An interaction which accepts the empty trace can be pruned on all lifelines]
\label{anx_lem:prune_all_equiv_accept_nil_2}
For any $i\in \mathbb{I}$ we have:
\[
(\varepsilon \in \rho(i)) \Rightarrow  (\exists~i' \in \mathbb{I},~i \isPruneOf{L} i') 
\]
\end{property}

\begin{proof}
We can reason by induction on the structure of interaction $i$.
See lemma \texttt{prune\_all\_equiv\_accept\_nil\_2} in Coq proof \cite{coq_hibou_label_eqsem_with_coregions}.
\end{proof}

\begin{property}
[Prune does not introduce new behaviors]
\label{anx_lem:prune_does_not_introduce_new_behaviors}
For any $L' \subseteq L$, any $i,i' \in \mathbb{I}$ and any $t \in \mathbb{T}_{|L}$ we have:
\[
\left(
\begin{array}{l}
(i \isPruneOf{L'} i')\\
\wedge (t \in \rho(i'))
\end{array}
\right)
\Rightarrow
(t \in \rho(i))
\]
\end{property}

\begin{proof}
We can reason by induction on the structure of interaction $i$.
See lemma \texttt{prune\_characterisation\_with\_sem\_de\_1} in Coq proof \cite{coq_hibou_label_eqsem_with_coregions}.
\end{proof}

\begin{property}
[Prune conserves behaviors without conflicts]
\label{anx_lem:prune_conserve_traces_without_conflicts}
For any $L' \subseteq L$, any $i,i' \in \mathbb{I}$ and any $t \in \mathbb{T}_{|L}$ we have:
\[
\left(
\begin{array}{l}
(i \isPruneOf{L'} i')\\
\wedge (t \in \rho(i))\\
\wedge (\forall~l\in L',~\neg(t \doubleVerticalTimes l))
\end{array}
\right)
\Rightarrow
(t \in \rho(i'))
\]
\end{property}

\begin{proof}
We can reason by induction on the structure of interaction $i$ and use Th.\ref{anx_th:cannot_prune_denotational}.
See lemma \texttt{prune\_characterisation\_with\_sem\_de\_2} in Coq proof \cite{coq_hibou_label_eqsem_with_coregions}.
\end{proof}

\begin{property}
[Prune removes conflicts]
\label{anx_lem:prune_removes_conflicts}
For any $L' \subseteq L$, any $i,i' \in \mathbb{I}$ and any $t \in \mathbb{T}_{|L}$ we have:
\[
\left(
\begin{array}{l}
(i \isPruneOf{L'} i')\\
\wedge (t \in \rho(i'))
\end{array}
\right)
\Rightarrow
(\forall~l\in L',~\neg(t \doubleVerticalTimes l))
\]
\end{property}

\begin{proof}
We can reason by induction on the conditions that make hypothesis $(i \isPruneOf{L'} i')$ valid.
See lemma \texttt{prune\_removes\_conflicts} in Coq proof \cite{coq_hibou_label_eqsem_with_coregions}.
\end{proof}

\begin{theorem}
[Prune characterization in denotational semantics]
\label{anx_def:prune_characterisation}
For any $i,i' \in \mathbb{I}$, for any $L' \subseteq L$ we have:
\[
(i \isPruneOf{L'} i')
\Rightarrow 
(\rho(i') = \{ t \in \rho(i) ~|~ \forall~l\in L',~ \neg (t ~\globalConditionalSequencing{\emptyset}~ l) \})
\]
\end{theorem}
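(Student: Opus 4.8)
The plan is to establish the set equality $\rho(i') = \{\, t \in \rho(i) \mid \forall\, l \in L',~ \neg(t \globalConditionalSequencing{\emptyset} l) \,\}$ by a double inclusion, with essentially all the real content already packaged in the preceding results of this appendix. As a preliminary step I would rewrite the conflict-freeness condition: by Prop.\ref{anx_lem:conflict_is_condconf_nolifeline} the predicates $t \globalConditionalSequencing{\emptyset} l$ and $t \doubleVerticalTimes l$ are equivalent, so it suffices to show $\rho(i') = \{\, t \in \rho(i) \mid \forall\, l \in L',~ \neg(t \doubleVerticalTimes l) \,\}$ under the standing hypothesis $i \isPruneOf{L'} i'$.

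For the inclusion from left to right, take $t \in \rho(i')$. Prop.\ref{anx_lem:prune_does_not_introduce_new_behaviors} (\emph{prune does not introduce new behaviors}) gives $t \in \rho(i)$, and Prop.\ref{anx_lem:prune_removes_conflicts} (\emph{prune removes conflicts}) gives $\neg(t \doubleVerticalTimes l)$ for every $l \in L'$; hence $t$ lies in the right-hand set. For the reverse inclusion, take $t \in \rho(i)$ with $\neg(t \doubleVerticalTimes l)$ for all $l \in L'$; then Prop.\ref{anx_lem:prune_conserve_traces_without_conflicts} (\emph{prune conserves behaviors without conflicts}) yields $t \in \rho(i')$ directly. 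Combining the two inclusions closes the proof.

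Where the actual difficulty sits --- and hence what I would expect to be the main obstacle if the supporting lemmas were not already available --- is in those three lemmas themselves. Each is proved by structural induction on $i$ (for Prop.\ref{anx_lem:prune_removes_conflicts}, more conveniently by induction on the derivation of $i \isPruneOf{L'} i'$), and the delicate cases are the scheduling operators $strict$ and $coreg_{r}$, where the right branch is kept only when the left branch can be pruned and otherwise the whole construct collapses to $\varnothing$ or disappears. Handling that "otherwise" branch in Prop.\ref{anx_lem:prune_conserve_traces_without_conflicts} relies on Th.\ref{anx_th:cannot_prune_denotational}, which guarantees that a non-prunable sub-interaction contributes a conflict on some $l \in L'$ to every trace it accepts, contradicting the conflict-freeness hypothesis on $t$. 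At the level of the present theorem, however, nothing remains beyond the mechanical recombination above; it corresponds exactly to the Coq lemma \texttt{prune\_characterisation\_with\_sem\_de} assembled from \texttt{prune\_characterisation\_with\_sem\_de\_1} and \texttt{prune\_characterisation\_with\_sem\_de\_2} in \cite{coq_hibou_label_eqsem_with_coregions}.
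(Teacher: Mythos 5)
Your proposal is correct and follows exactly the paper's route: the preliminary identification of $\globalConditionalSequencing{\emptyset}$ with $\doubleVerticalTimes$ via Prop.\ref{anx_lem:conflict_is_condconf_nolifeline}, then the double inclusion assembled from Prop.\ref{anx_lem:prune_does_not_introduce_new_behaviors}, Prop.\ref{anx_lem:prune_removes_conflicts} and Prop.\ref{anx_lem:prune_conserve_traces_without_conflicts}, which is precisely the recombination the paper's (terser) proof invokes. Your remarks on where the real work lies (the scheduling-operator cases and the use of Th.\ref{anx_th:cannot_prune_denotational}) also match how the supporting lemmas are proved.
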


\begin{proof}
At first we remark that as per Prop.\ref{anx_lem:conflict_is_condconf_nolifeline}, 
$\globalConditionalSequencing{\emptyset} = \doubleVerticalTimes$.
Then we use Prop.\ref{anx_lem:prune_does_not_introduce_new_behaviors}, Prop.\ref{anx_lem:prune_conserve_traces_without_conflicts} and Prop.\ref{anx_lem:prune_removes_conflicts}.
See theorem \texttt{prune\_characterisation\_with\_sem\_de} in Coq proof \cite{coq_hibou_label_eqsem_with_coregions}.
\end{proof}

\subsection{Characterization of the execution relation\label{anx:proof_execution}}

\begin{property}
\label{anx_lem:execution_characterisation_with_sem_de_1}
For any $i,i' \in \mathbb{I}$, for any $t \in \mathbb{T}_{|L}$ and $a \in \mathbb{A}_{|L}$ we have:
\[
\left(
\begin{array}{l}
(i \xrightarrow{a} i')\\
\wedge (t \in \rho(i') )
\end{array}
\right) 
\Rightarrow 
(a.t \in \rho(i))
\]
\end{property}

\begin{proof}
We can reason by induction on the conditions that make hypothesis $(i \xrightarrow{a} i')$ hold.
This proof uses Prop.\ref{anx_lem:prune_all_equiv_accept_nil_1}, Prop.\ref{anx_lem:prune_does_not_introduce_new_behaviors}, Prop.\ref{anx_lem:prune_removes_conflicts}.
See lemma \texttt{execution\_characterisation\_with\_sem\_de\_1} in Coq proof \cite{coq_hibou_label_eqsem_with_coregions}.
\end{proof}

\begin{property}
\label{anx_lem:execution_characterisation_with_sem_de_2}
For any $i \in \mathbb{I}$, for any $t \in \mathbb{T}_{|L}$ and $a \in \mathbb{A}_{|L}$ we have:
\[
(a.t \in \rho(i))
\Rightarrow
\left(
\exists~i' \in \mathbb{I},~s.t.~
\left(
\begin{array}{l}
(i \xrightarrow{a} i')\\
\wedge (t \in \rho(i') )
\end{array}
\right) 
\right)
\]
\end{property}

\begin{proof}
We can reason by induction on the structure of interaction $i$.
This proof uses Prop.\ref{anx_lem:prune_all_equiv_accept_nil_2}, Prop.\ref{anx_lem:prune_conserve_traces_without_conflicts}.
See lemma \texttt{execution\_characterisation\_with\_sem\_de\_2} in Coq proof \cite{coq_hibou_label_eqsem_with_coregions}.
\end{proof}

\subsection{Equivalence of the semantics\label{anx:proof_eqsem}}

\begin{theorem}
For any $i \in \mathbb{I}$ we have:
\[
\sigma_{C_t}(i) \subseteq \rho(i)
\]
\end{theorem}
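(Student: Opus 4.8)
The plan is to proceed by induction on the derivation witnessing $\mu \in \sigma_{C_t}(i)$, leaning on the characterizations of pruning and execution already established. First I would unfold what the trivial partition $C_t = \{L\}$ means for the multi-trace machinery: since $C_t$ has a single block $L$, we have $\mathbb{M}_{C_t} = \mathbb{T}_{|L}$, the empty multi-trace $\varepsilon_{C_t}$ is just $\varepsilon$, and for any action $a$ and any $\mu \in \mathbb{M}_{C_t}$ we have $\theta_{C_t}(a) = L$, so $a \multiAppendLeft \mu = a.\mu$. Hence a multi-trace in $\sigma_{C_t}(i)$ is literally a global trace, and the two rules of Def.\ref{def:semantics} read: (i) if $i \isPruneOf{L} i'$ then $\varepsilon \in \sigma_{C_t}(i)$; and (ii) if $i \xrightarrow{a@p} i'$ and $\mu' \in \sigma_{C_t}(i')$ then $a.\mu' \in \sigma_{C_t}(i)$.

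Since $\sigma_{C_t}(i)$ is the least set closed under these two rules, every membership $\mu \in \sigma_{C_t}(i)$ admits a finite derivation, and an application of rule (ii) strictly shortens $\mu$; so I would carry out an induction on $|\mu|$, stated uniformly over all $i \in \mathbb{I}$. In the base case $|\mu| = 0$: then $\mu = \varepsilon$ and the derivation must end with rule (i), i.e. $i \isPruneOf{L} i'$ for some $i'$; by Prop.\ref{anx_lem:prune_all_equiv_accept_nil_1} this yields $\varepsilon \in \rho(i)$, that is $\mu \in \rho(i)$. In the inductive case $|\mu| = n+1$: then $\mu = a.\mu'$ and the derivation ends with rule (ii), so there exist $i'$ and $p$ with $i \xrightarrow{a@p} i'$ and $\mu' \in \sigma_{C_t}(i')$. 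The position annotation is cosmetic here — $i \xrightarrow{a@p} i'$ for some $p$ is equivalent to $i \xrightarrow{a} i'$ — so the induction hypothesis applied to $i'$ and $\mu'$ (which has length $n$) gives $\mu' \in \rho(i')$, and then Prop.\ref{anx_lem:execution_characterisation_with_sem_de_1} (equivalently, the $\Leftarrow$ direction of Th.\ref{th:exec_denotational}) yields $a.\mu' \in \rho(i)$, i.e. $\mu \in \rho(i)$.

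I do not expect a genuine obstacle: this inclusion is the ``easy half'' of the equivalence Th.\ref{th:equivalence_semantics} and reduces directly to the already-proved facts that prunability on $L$ forces acceptance of $\varepsilon$ and that the execution relation introduces no behavior outside $\rho$. The only points deserving care are the justification that the induction is well-founded (handled by observing that each element of the least fixed point $\sigma_{C_t}(i)$ has a finite derivation and that rule (ii) strictly decreases the trace length) and the routine verification that the specialization of Def.\ref{def:semantics} to $C_t$ is exactly the two rules above. The converse inclusion $\rho(i) \subseteq \sigma_{C_t}(i)$ needed to complete Th.\ref{th:equivalence_semantics} would be handled symmetrically, using Prop.\ref{anx_lem:prune_all_equiv_accept_nil_2} and Prop.\ref{anx_lem:execution_characterisation_with_sem_de_2}, but it is not required for the present statement.
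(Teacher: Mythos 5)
Your proposal is correct and follows essentially the same route as the paper: induction on the trace in $\sigma_{C_t}(i)$, using Prop.\ref{anx_lem:prune_all_equiv_accept_nil_1} for the empty-trace case and Prop.\ref{anx_lem:execution_characterisation_with_sem_de_1} for the case $t = a.t'$. Your additional remarks (unfolding the trivial partition, well-foundedness of the induction, irrelevance of the position annotation) are accurate but do not change the argument.
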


\begin{proof}
We can reason by induction on a trace $t \in \sigma_{C_t}(i)$ and use Prop.\ref{anx_lem:prune_all_equiv_accept_nil_1} for the case $t = \varepsilon$ and Prop.\ref{anx_lem:execution_characterisation_with_sem_de_1} for the case $t = a.t'$.
See theorem \texttt{op\_implies\_de} in Coq proof \cite{coq_hibou_label_eqsem_with_coregions}.
\end{proof}

\begin{theorem}
For any $i \in \mathbb{I}$ we have:
\[
\rho(i) \subseteq \sigma_{C_t}(i) 
\]
\end{theorem}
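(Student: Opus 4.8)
The plan is to prove the reverse inclusion $\rho(i) \subseteq \sigma_{C_t}(i)$ by structural induction on the trace $t \in \rho(i)$ --- that is, by induction on the length of $t$ --- mirroring the proof of the forward inclusion but using the ``$\Rightarrow$'' directions of the characterization theorems the other way around. Concretely, I would take an arbitrary $t \in \rho(i)$ and distinguish two cases according to whether $t = \varepsilon$ or $t = a.t'$ for some action $a$ and shorter trace $t'$.

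For the base case $t = \varepsilon$: from $\varepsilon \in \rho(i)$, Prop.~\ref{anx_lem:prune_all_equiv_accept_nil_2} gives an interaction $i'$ with $i \isPruneOf{L} i'$. By the first inference rule of Def.~\ref{def:semantics} (the one with premise $i \isPruneOf{L} i'$ and conclusion $\varepsilon_C \in \sigma_C(i)$), instantiated at $C = C_t$, we conclude $\varepsilon_{C_t} \in \sigma_{C_t}(i)$, which is the multi-trace over $C_t$ corresponding to $\varepsilon$. For the inductive case $t = a.t'$: from $a.t' \in \rho(i)$, Prop.~\ref{anx_lem:execution_characterisation_with_sem_de_2} yields an interaction $i'$ such that $i \xrightarrow{a} i'$ and $t' \in \rho(i')$. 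The induction hypothesis (applied to the shorter trace $t'$ and interaction $i'$) gives $t' \in \sigma_{C_t}(i')$, i.e. the corresponding multi-trace is in $\sigma_{C_t}(i')$. Then the second inference rule of Def.~\ref{def:semantics} (premises $\mu \in \sigma_C(i')$ and $i \xrightarrow{a@p} i'$, conclusion $a~\multiAppendLeft~\mu \in \sigma_C(i)$) lets us prepend $a$ and conclude $a ~ \multiAppendLeft ~ (\text{multi-trace of } t') \in \sigma_{C_t}(i)$, which is exactly the multi-trace corresponding to $t = a.t'$.

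The one bookkeeping subtlety --- and the place where I would be most careful --- is the identification between global traces in $\mathbb{T}_{|L}$ (elements of $\rho(i)$) and multi-traces in $\mathbb{M}_{C_t}$ (elements of $\sigma_{C_t}(i)$). Since $C_t = \{L\}$, a multi-trace up to $C_t$ is just a single trace over $L$, so $\mathbb{M}_{C_t}$ is canonically identified with $\mathbb{T}_{|L}$; under this identification $\varepsilon_{C_t}$ corresponds to $\varepsilon$ and $a ~ \multiAppendLeft ~ \mu$ corresponds to the ordinary prefixing $a.t$. One should check that the execution relation $\xrightarrow{a@p}$ used in Def.~\ref{def:semantics} and the relation $\xrightarrow{a}$ used in Th.~\ref{th:exec_denotational} and Prop.~\ref{anx_lem:execution_characterisation_with_sem_de_2} agree up to forgetting the position annotation $p$ --- this is exactly the remark made after Def.~\ref{def:execution} that the decorations $@p$ come at no cost. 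Everything else is a routine transcription of the Coq lemma \texttt{de\_implies\_op}; the main obstacle is purely organizational (invoking the right prior proposition in each branch and keeping the trace/multi-trace correspondence straight), not mathematical. Combining this theorem with the already-proven $\sigma_{C_t}(i) \subseteq \rho(i)$ yields Th.~\ref{th:equivalence_semantics}.
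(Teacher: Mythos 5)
Your proof matches the paper's argument exactly: induction on the length of $t \in \rho(i)$, using Prop.~\ref{anx_lem:prune_all_equiv_accept_nil_2} for the empty-trace case and Prop.~\ref{anx_lem:execution_characterisation_with_sem_de_2} for the case $t = a.t'$, then closing with the two rules of Def.~\ref{def:semantics} (this is the Coq theorem \texttt{de\_implies\_op}). The extra care you take with the canonical identification of $\mathbb{M}_{C_t}$ and $\mathbb{T}_{|L}$ is sound and consistent with the paper's treatment.
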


\begin{proof}
We can reason by induction on a trace $t \in \rho(i)$ and use Prop.\ref{anx_lem:prune_all_equiv_accept_nil_2} for the case $t = \varepsilon$ and Prop.\ref{anx_lem:execution_characterisation_with_sem_de_2} for the case $t = a.t'$.
See theorem \texttt{de\_implies\_op} in Coq proof \cite{coq_hibou_label_eqsem_with_coregions}.
\end{proof}

\section{Details on the proposal criterion\label{anx:details_criterion}}

In this paper we propose in Sec.\ref{sec:criterion} a specific criterion to limit the amount of simulation steps which can be taken during the analysis of a multi-trace. This criterion is a concretization of the abstract criterion used in Sec.\ref{sec:algo} for defining the analysis algorithm.

\subsection{Maximum loop depth}

We set a maximum number of loops which can be instantiated during a continuous sequence of simulation steps. A loop is instantiated if an action within it is executed. For nested loops, we consider that the number of loops which are instantiated corresponds to the depth of the action within these nested loops. For instance, within $loop_S(seq(alt(a_1,o),loop_S(a_2)))$, $a_1$ is at a loop depth of $1$ while $a_2$ is at a loop depth of $2$. An action that is not within any loop is at loop depth $0$. This notion of loop depth of a certain action is captured by the $\loopDepthAtPosBase$ function from Def.\ref{def:loop_depth}, where, for any interaction $i\in \mathbb{I}$ and any one of its position $p \in pos(i)$, $\loopDepthAtPos{i}{p}$ gives the loop depth of the node (within the tree structure of the interaction) at position $p$ (hence this is particularly true for actions).

At the beginning of a sequence of simulation steps, starting from a certain interaction $i_0$, this maximum number of loops is initialized at a value which corresponds to the maximum depth of nested loops in $i_0$ which we denote $\loopDepthAtPosBase(i_0)$ (see Def.\ref{def:loop_depth}). This allows every action occurring in the interaction to be simulated at least once in at least one path starting from this initial interaction $i_0$.

\begin{definition}[Loop depth\label{def:loop_depth}]
$\loopDepthAtPosBase$ defined over $\bigcup_{i \in \mathbb{I}} (\{i\}\times pos(i))$ is the function s.t.:
\begin{itemize}
    \item for any $i \in \mathbb{I}$, $\loopDepthAtPos{i}{\varepsilon} = 0$
    \item for any $i_1,i_2 \in \mathbb{I}^2$ and any $p_1 \in pos(i_1)$, $p_2 \in pos(i_2)$, for any $f \in \{str,alt\}\cup\bigcup_{r \subseteq L} \{coreg_r\}$:
    \begin{itemize}
        \item $\loopDepthAtPos{f(i_1,i_2)}{1.p_1} = \loopDepthAtPos{i_1}{p_1}$
        \item $\loopDepthAtPos{f(i_1,i_2)}{2.p_2} = \loopDepthAtPos{i_2}{p_2}$
    \end{itemize}
    \item for any $i \in \mathbb{I}$, $p \in pos(i)$ and $k \in \{S\}\cup\bigcup_{r\subseteq L}\{C_r\}$, $\loopDepthAtPos{loop_k(i)}{1.p} = \loopDepthAtPos{i}{p} + 1$
\end{itemize}
We then define:
$\loopDepthAtPosBase : \mathbb{I} \rightarrow \mathbb{N}$ s.t. $\forall~i \in \mathbb{I}$:
\[
\loopDepthAtPosBase(i) = max_{p \in pos(i)} \loopDepthAtPos{i}{p}
\]
\end{definition}

We illustrate this with Fig.\ref{fig:loop_depth} in which we explore the semantics of an initial interaction $i_0$ (i.e. its execution tree) with a limitation on the number of loops which can be instantiated. Here this limit is initialized at $2$ which is the maximum loop depth of the initial interaction i.e. $\loopDepthAtPosBase(i_0)=2$. We can see that this allows all the actions occurring in the initial interaction to be expressed at least once in at least one path of the part of the tree that is explored. Within the context of multi-trace analysis with simulation, if the next action in the multi-trace which might be executed needs to be "unlocked" via performing some simulation steps, because this action must in any case appear in the interaction, it might then suffice to take advantage of this remark to parameterize simulation.

\begin{figure}[h]
    \centering
    \scalebox{.630}{\input{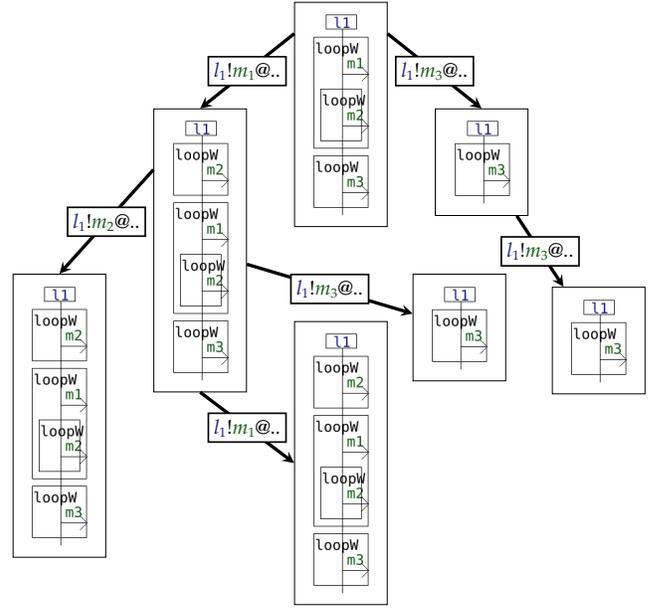}}
    \caption{Semantics exploration (execution tree) with a limitation on the loop depth (here $2$, which is the maximum loop depth of the initial interaction $\loopDepthAtPosBase(i_0)=2$).}
    \label{fig:loop_depth}
\end{figure}

\subsection{Number of actions outside loops}

Setting this limitation on the number of loops is sufficient to ensure termination of the algorithm because an interaction term being finite, there can only be a finite number of actions outside loops which may be simulated and because once such an action is simulated, it disappears from the follow-up interaction, in which the number of actions outside loops therefore diminishes by (at least) once. Yet by only considering the number of loop instantiation in our measure (i.e. by only considering $\lambda$), we do not have a strictly decreasing measure. Indeed, after steps where an action outside a loop is simulated, $\lambda$ stays the same. In order to reflect this decreasing number of actions, which appear in the interaction but not in $\lambda$, we introduce $\alpha$ as the number of actions outside loops, as defined in Def.\ref{def:num_act_outside}. 

\begin{definition}[Number of actions outside loops\label{def:num_act_outside}]
We define $\numActOutsideBase : \mathbb{I} \rightarrow \mathbb{N}$ as follows:
\begin{itemize}
    \item $\numActOutside{\varnothing} = 0$ and for any $a \in \mathbb{A}$, $\numActOutside{a} = 1$
    \item for any $i_1,i_2 \in \mathbb{I}^2$:
        \begin{itemize}
            \item for any $f \in \{strict\}\cup\bigcup_{r \subseteq L} \{coreg_r\}$,\\
            $\numActOutside{f(i_1,i_2)} = \numActOutside{i_1} + \numActOutside{i_2}$
            \item 
            $\numActOutside{alt(i_1,i_2)} = max(\numActOutside{i_1},\numActOutside{i_2})$
        \end{itemize}
    \item for any $i \in \mathbb{I}$ and $k \in \{S\}\cup\bigcup_{r\subseteq L}\{C_r\}$, $\numActOutside{loop_k(i)} = 0$
\end{itemize}
\end{definition}

By considering the tuple $(\lambda, \alpha)$ as a measure, we guarantee that successive simulation steps are bounded by a strictly decreasing measure (see Sec.\ref{sec:criterion}).

\end{document}